\def\A{\mathfrak{A}}
\def\H{\mathcal{H}}
\def\P{\mathcal{P}}
\def\X{\mathcal{X}}
\def\S{\mathfrak{S}}
\def\F{\mathfrak{F}}
\def\C{\mathfrak{C}}
\def\D{\mathcal{D}}
\def\T{\mathfrak{T}}
\def\B{\mathfrak{B}}
\def\N{\mathbb{N}}
\newcommand{\supp}{\mathrm{supp}}
\newcommand{\rank}{\mathrm{rank}}
\newcommand{\id}{\mathrm{Id}}
\newcommand{\Tr}{\mathrm{Tr}}
\newcommand{\shs}{\hspace{1pt}}
\newcounter{defin}  \newcounter{lemma}  \newcounter{theorem}
\newcounter{proposition} \newcounter{corol}  \newcounter{remark} \newcounter{example}
\newenvironment{lemma}{\par\refstepcounter{lemma}     \textbf{Lemma \thelemma.} }{\rm\par}
\newenvironment{proposition}{\par\refstepcounter{proposition}     \textbf{Proposition \theproposition.}\ }{\rm\par}
\newenvironment{corollary}{\par\refstepcounter{corol}     \textbf{Corollary \thecorol.} }{\rm\par}
\newenvironment{remark}{\par\refstepcounter{remark}     \textbf{Remark \theremark.}}{\rm\par}
\newenvironment{example}{\par\refstepcounter{example}     \textbf{Example \theexample.}}{\rm\par}
\begin{document}

\title{On average output entropy of a quantum channel}

\author{M.E.~Shirokov\footnote{email:msh@mi.ras.ru}\\
Steklov Mathematical Institute, Moscow, Russia}
\date{}
\maketitle
\begin{abstract}
We describe analytical properties of the average output entropy of a quantum channel
as a function of a pair (channel, input ensemble).
In particular, tight semicontinuity bounds for this function with the rank/energy constraints are obtained
by using the modified semicontinuity bounds for the quantum conditional entropy of quantum-classical states
and a special approximation technique.

Several applications are considered.  New semicontinuity and continuity  bounds for the
output Holevo information of a channel as a function of a pair (channel, input ensemble) are obtained.  The semicontinuity bound for the entanglement of formation with
the rank constraint obtained in \cite{LCB} is improved.

In the preliminary part, some results concerning ensembles of quantum states are presented. In particular, a new useful metric on the set of generalized ensembles is proposed and explored. The concept of passive energy of an ensemble introduced here plays an important role in the article.
\end{abstract}

\tableofcontents

\section{Introduction}

When studying quantum systems and channels between such systems, it often becomes necessary to evaluate changes
in some of their characteristics induced by small variations in states of quantum systems and channel disturbances.
It is precisely this necessity that causes a significant interest in obtaining estimates of the modulus of continuity of the most
important entropic and information characteristics of quantum systems and channels. Such estimates, usually called  \emph{(uniform) continuity bounds}, are used essentially  both in the proofs of many theoretical results of quantum information theory and in the development of applications (a brief overview of this area can be found in \cite[the Introduction]{QC}, see also \cite{B&D,BDJ,Capel} and the references therein).

When we talk about a (uniform) continuity bound for a given characteristic $f(\Phi,\rho)$ of a pair (channel $\Phi$, input state $\rho$), we mean an upper bound on the value
of
\begin{equation}\label{D-V}
   |f(\Phi,\rho)-f(\Psi,\sigma)|
\end{equation}
under the condition that the  pairs $(\Phi,\rho)$ and $(\Psi,\sigma)$ are $\varepsilon$-close to each other  w.r.t. some
appropriate metric.

The concept of a (uniform) continuity bound for a  characteristic $f(\Phi,\mu)$ of a pair (channel $\Phi$, input states ensemble $\mu$) is defined similarly.

When dealing with finite-dimensional quantum systems and channels, we always strive to obtain estimates
for the value in (\ref{D-V}) depending only on the divergence bound $\varepsilon$ without additional restrictions on $(\Phi,\rho)$ and $(\Psi,\sigma)$. This is natural,
since almost all important characteristics of finite-dimensional quantum systems and channels are uniformly continuous functions of their arguments (channels, states, ensembles of states).

However, when obtaining continuity bounds for characteristics of infinite-dimensional quantum systems and channels, we have to introduce some constraints on
the possible values of its arguments. This is due to the fact that, as a rule,  characteristics of infinite-dimensional quantum systems and channels are not continuous
functions (and often not well-defined functions) on the set of all its arguments.

Usually constraints are imposed on the input or output states of a quantum channel or on the input or output ensembles of states of a quantum channel.
For example, when we talk about continuity bound for a characteristic $f(\Phi,\rho)$ of a pair (channel $\Phi$, input state $\rho$), we mean an upper bound on the value
of (\ref{D-V}) under the condition that the pairs $(\Phi,\rho)$ are $(\Psi,\sigma)$ are  $\varepsilon$-close to each other and the states
$\rho$ and $\sigma$ (resp.  $\Phi(\rho)$ and $\Psi(\sigma)$) belong to some subset $\C$ of input (resp., output) states.
Typically, the constraint sets $\C$ of the following two forms are used:
\begin{enumerate}[1)]
  \item $\C$ is the set of all input (resp. output) states $\omega$ with the rank not exceeding $r<+\infty$;
  \item $\C$ is the set of all input (resp. output) states $\omega$ with the bounded value of a given energy-type functional on the space
  of all  input (resp. output) states.
\end{enumerate}
In the first case we speak about the input (resp. output) \emph{rank constraint}, in the second one --  about the input (resp. output) \emph{energy-type constraint}.

The first work devoted to the problem of obtaining continuity bounds with energy-type constraint  was Winter's article \cite{W-CB},
where continuity bounds for the von Neumann entropy and the quantum conditional entropy in infinite-dimensional
quantum systems under the energy-type constraints are constructed. This
work stimulated research in this field. As a result, a series of papers has appeared in which
new general methods for quantitative continuity analysis of characteristics
of infinite-dimensional quantum systems and channels are being developed \cite{LCB,QC,B&D,BDJ,Capel,CHI}.

A new step in this direction was made in \cite{LCB}, where the concept of a \emph{semicontinuity bound} was
introduced and concrete examples of semicontinuity bounds for several basic characteristics of quantum and classical systems with the rank/energy-type constraints are constructed.
By a semicontinuity bound for a given characteristic $f(\Phi,\rho)$ of a pair (channel $\Phi$, input state $\rho$), we mean the upper bound on the value
of
\begin{equation}\label{D-V+}
   f(\Phi,\rho)-f(\Psi,\sigma)
\end{equation}
(there is no module here, unlike (\ref{D-V})) under the condition that the  pairs $(\Phi,\rho)$ and $(\Psi,\sigma)$ are $\varepsilon$-close to each other w.r.t. some
metric and \emph{assuming that the rank/energy-type constraint is imposed only
on one of the pairs $(\Phi,\rho)$ are $(\Psi,\sigma)$}.

In \cite{LCB}, the semicontinuity bounds for the von Neumann entropy, the quantum conditional entropy of quantum-classical states
and the entanglement of formation with the rank/energy-type constraint imposed on the state $\rho$ (in terms of (\ref{D-V+})) are obtained.

It is clear that having a semicontinuity bound for some characteristic one can obtain a
continuity bound for this characteristic. However,  semicontinuity bounds allow us to get more flexible estimates which have higher accuracy in special cases
(see the examples in \cite{LCB}). It also turns out that semicontinuity bounds may serve as an important  instrument in proofs of theoretical results in QIT (f.i., the
 semicontinuity bound for the entanglement of formation was essentially used in the proof of Theorem 7 in \cite{Lami-new}.)\medskip

In this article, we describe analytical properties of the average output entropy of a quantum channel
at an  ensemble of input states, i.e. the quantity
\begin{equation*}
  \overline{S}(\Phi(\mu))\doteq\int_{\S(\H_A)}S(\Phi(\rho))\mu(d\rho)
\end{equation*}
considered  as a function of a pair (channel $\Phi$, input ensemble $\mu$).
This quantity is widely used in quantum information theory, since it is an integral part of important characteristics used in QIT (see Section 3).

The main aim of this article (achieved in Section 4) is to obtain accurate semicontinuity bounds for the function $(\Phi,\mu)\mapsto \overline{S}(\Phi(\mu))$, i.e. accurate upper bounds on the quantity
\begin{equation*}
  \overline{S}(\Phi(\mu))-\overline{S}(\Psi(\nu))
\end{equation*}
valid for $\varepsilon$-close pairs $(\Phi,\mu)$ and $(\Psi,\nu)$ with the rank/energy-type constraints
imposed only on the output ensemble $\Phi(\mu)$.

Our technical tools are  the modified semicontinuity bounds for the quantum conditional entropy of quantum-classical states
and a special approximation technique which allows us to extend the results proved for discrete ensembles of states to generalized (continuous) ensembles widely used
in "continuous variable" QIT.

We also consider several applications of the obtained results. In Section 5.1, new (semi)continuity  bounds for the
output Holevo information of a channel are presented and compared with the previous results of this type. In Section 5.2, it is  shown  how the
semicontinuity bounds for the function $(\Phi,\mu)\mapsto \overline{S}(\Phi(\mu))$ can be used to get efficient upper bounds for this function.
Finally, the semicontinuity bound for the entanglement of formation with
the rank constraint obtained in \cite{LCB} is improved in Section 5.3.

\section{Preliminaries}

\subsection{Basic notation}

Let $\mathcal{H}$ be a separable Hilbert space,
$\mathfrak{B}(\mathcal{H})$ the algebra of all bounded operators on $\mathcal{H}$ with the operator norm $\|\cdot\|$ and $\mathfrak{T}( \mathcal{H})$ the
Banach space of all trace-class
operators on $\mathcal{H}$  with the trace norm $\|\!\cdot\!\|_1$. Let
$\mathfrak{S}(\mathcal{H})$ be  the set of quantum states (positive operators
in $\mathfrak{T}(\mathcal{H})$ with unit trace) \cite{H-SCI,N&Ch,Wilde}.

Write $I_{\mathcal{H}}$ for the unit operator on a Hilbert space
$\mathcal{H}$ and  $\id_{\mathcal{\H}}$ for the identity
transformation of the Banach space $\mathfrak{T}(\mathcal{H})$.

The \emph{fidelity} between quantum states $\rho$ and $\sigma$ in $\S(\H)$ can be defined as (cf.\cite{H-SCI,Wilde})
\begin{equation}\label{fidelity}
  F(\rho,\sigma)=\|\sqrt{\rho}\sqrt{\sigma}\|^2_1.
\end{equation}
By Uhlmann's theorem,
$F(\rho,\sigma)=\sup_{\psi}|\langle\varphi|\psi\rangle|^2$, where $\varphi$ is
a given purification of $\rho$ in $\S(\H\otimes\H_R)$, $\H_R$ is a separable
Hilbert space, and the supremum is taken 
over all purifications $\psi$ of $\sigma$ in $\S(\H\otimes\H_R)$ \cite{Uhl}. Thus,
\begin{equation}\label{F-Tn-eq}
\inf_{\psi}\||\varphi\rangle\langle\varphi|-|\psi\rangle\langle\psi|\|_1=2\sqrt{1-\sup_{\psi}|\langle\varphi|\psi\rangle|^2}=2\sqrt{1-F(\rho,\sigma)}.
\end{equation}
When $\dim\H<+\infty$, 
one can take $\H_R\cong\H$ and
replace $"\sup"$ and $"\inf"$ by $"\max"$ and $"\min"$, respectively.

The following well-known inequalities hold (cf.\cite{F-Tn,H-SCI,Wilde})
\begin{equation}\label{F-Tn-ineq}
1-\sqrt{F(\rho,\sigma)}\leq\textstyle\frac{1}{2}\|\rho-\sigma\|_1\leq\sqrt{1-F(\rho,\sigma)}.
\end{equation}

The \emph{Bures distance} between quantum states   $\rho$ and $\sigma$ in $\T_+(\H)$ is defined as
\begin{equation}\label{B-dist}
  \beta(\rho,\sigma)\doteq\inf_{\varphi,\psi}\|\varphi-\psi\|=\inf_{\varphi}\|\varphi-\psi\|=\sqrt{2-2\sqrt{F(\rho,\sigma)}},
\end{equation}
where the first infimum is over all purifications $\varphi$ and $\psi$ of the states $\rho$ and $\sigma$, while the second one is over all purifications of  $\rho$
with a given  purification of $\sigma$ (it is assumed that the dimension of the reference system is not less than $\dim\H$) \cite{H-SCI,Wilde}.

The following relations  between the Bures distance and the  trace-norm distance hold (cf.\cite{H-SCI,Wilde})
\begin{equation}\label{B-d-s-r}
\frac{1}{2}\|\rho-\sigma\|_1\leq\beta(\rho,\sigma)\leq\sqrt{\|\rho-\sigma\|_1}.
\end{equation}

We will use the Mirsky inequality
\begin{equation}
  \sum_{i=0}^{+\infty}\vert\lambda^{\rho}_{i}-\lambda^{\sigma}_{i}\vert\leq \|\rho-\sigma\|_1\label{Mirsky-ineq+}%
\end{equation}
valid for any positive operators $\rho$ and $\sigma$ in $\T(\H)$, where  $\{\lambda^{\rho}_i\}_{i=0}^{+\infty}$
and $\{\lambda^{\sigma}_i\}_{i=0}^{+\infty}$ are  sequence
of eigenvalues of $\rho$ and $\sigma$ arranged in the non-increasing order (taking the multiplicity into account) \cite{Mirsky,Mirsky-rr}.


The \emph{von Neumann entropy} of a quantum state
$\rho \in \mathfrak{S}(\H)$ is  defined by the formula
$S(\rho)=\operatorname{Tr}\eta(\rho)$, where  $\eta(x)=-x\ln x$ if $x>0$
and $\eta(0)=0$. It is a concave lower semicontinuous function on the set~$\mathfrak{S}(\H)$ taking values in~$[0,+\infty]$ \cite{H-SCI,L-2,W}.

The \emph{quantum relative entropy} for two states $\rho$ and
$\sigma$ in $\mathfrak{S}(\mathcal{H})$ is defined as
\begin{equation*}
D(\rho\,\|\shs\sigma)=\sum_i\langle
\varphi_i\vert\,\rho\ln\rho-\rho\ln\sigma\,\vert\varphi_i\rangle,
\end{equation*}
where $\{\varphi_i\}$ is the orthonormal basis of
eigenvectors of the state $\rho$ and it is assumed that
$D(\rho\,\|\sigma)=+\infty$ if $\,\mathrm{supp}\rho\shs$ is not
contained in $\shs\mathrm{supp}\shs\sigma$ \cite{H-SCI,L-2,W}.\footnote{The support $\mathrm{supp}\rho$ of a state $\rho$ is the closed subspace spanned by the eigenvectors of $\rho$ corresponding to its positive eigenvalues.}

The \emph{quantum conditional entropy} (QCE) of a state $\rho$ of a finite-dimensional bipartite system $AB$ is defined as
\begin{equation}\label{ce-def}
S(A\vert B)_{\rho}=S(\rho)-S(\rho_{B}).
\end{equation}
The function $\rho\mapsto S(A\vert B)_{\rho}$ is  concave and takes values in $[-S(\rho_A),S(\rho_A)]$ \cite{H-SCI,Wilde}.

Definition (\ref{ce-def}) remains valid for a state $\rho$ of an infinite-dimensional bipartite system $AB$
with finite marginal entropies
$S(\rho_A)$ and $S(\rho_B)$  (since the finiteness of $S(\rho_A)$ and $S(\rho_B)$ are equivalent to the finiteness of $S(\rho)$ and $S(\rho_B)$).
For a state $\rho$ with finite $S(\rho_A)$ and arbitrary $S(\rho_B)$ one can define the QCE
by the formula
\begin{equation}\label{ce-ext}
S(A\vert B)_{\rho}=S(\rho_{A})-D(\rho\shs\|\shs\rho_{A}\otimes\rho_{B})
\end{equation}
proposed and analysed by Kuznetsova in \cite{Kuz} (the finiteness of $S(\rho_{A})$ implies the finiteness of $D(\rho\shs\|\shs\rho_{A}\otimes\rho_{B})$). The QCE  extented by the above formula to the convex set $\,\{\rho\in\S(\H_{AB})\,\vert\,S(\rho_A)<+\infty\}\,$ possesses all the basic properties of the QCE valid in finite dimensions \cite{Kuz}.\medskip

Let $H$ be a positive (semi-definite)  operator on a Hilbert space $\mathcal{H}$ (we will always assume that positive operators are self-adjoint). Denote by $\mathcal{D}(H)$ the domain of $H$. For any positive operator $\rho\in\T(\H)$ we will define the quantity $\Tr H\rho$ by the rule
\begin{equation}\label{H-fun}
\Tr H\rho=
\left\{\begin{array}{l}
        \sup_n \Tr P_n H\rho\;\; \textrm{if}\;\;  \supp\rho\subseteq {\rm cl}(\mathcal{D}(H))\\
        +\infty\;\qquad\qquad\;\textrm{otherwise}
        \end{array}\right.
\end{equation}
where $P_n$ is the spectral projector of $H$ corresponding to the interval $[0,n]$ and ${\rm cl}(\mathcal{D}(H))$ is the closure of $\mathcal{D}(H)$. If
$H$ is the Hamiltonian (energy observable) of a quantum system described by the space $\H$ then
$\Tr H\rho$ is the mean energy of a state $\rho$.

For any positive operator $H$ the set $\C_{H,E}=\left\{\rho\in\S(\H)\,\vert\,\Tr H\rho\leq E\right\}$
is convex and closed (since the function $\rho\mapsto\Tr H\rho$ is affine and lower semicontinuous). It is nonempty if $E> E_0$, where $E_0$ is the infimum of the spectrum of $H$.

The von Neumann entropy is continuous on the set $\C_{H,E}$ for any $E> E_0$ if and only if the operator $H$ satisfies  the \emph{Gibbs condition}
\begin{equation}\label{H-cond}
  \Tr\, e^{-\beta H}<+\infty\quad\textrm{for all}\;\,\beta>0
\end{equation}
and the supremum of the entropy on this set is attained at the \emph{Gibbs state}
\begin{equation}\label{Gibbs}
\gamma_H(E)\doteq e^{-\beta(E) H}/\Tr e^{-\beta(E) H},
\end{equation}
where $\beta(E)$ is the parameter determined by the equation $\Tr H e^{-\beta H}=E\Tr e^{-\beta H}$ \cite{W}. Condition (\ref{H-cond}) can be valid only if $H$ is an unbounded operator having  discrete spectrum of finite multiplicity. It means, in Dirac's notation, that
\begin{equation}\label{H-form}
H=\sum_{k=0}^{+\infty} E_k \vert\tau_k\rangle\langle\tau_k\vert,
\end{equation}
where
$\mathcal{T}\doteq\left\{\tau_k\right\}_{k=0}^{+\infty}$ is the orthonormal
system of eigenvectors of $H$ corresponding to the \emph{nondecreasing} unbounded sequence $\left\{E_k\right\}_{k=0}^{+\infty}$ of its eigenvalues
and \emph{it is assumed that the domain $\D(H)$ of $H$ lies within the closure $\H_\mathcal{T}$ of the linear span of $\mathcal{T}$}.

We will use the function
\begin{equation}\label{F-def}
F_{H}(E)\doteq\sup_{\rho\in\C_{H,E}}S(\rho)=S(\gamma_H(E)).
\end{equation}
This is a strictly increasing concave function on $[E_0,+\infty)$ \cite{W-CB}.  The Gibbs condition (\ref{H-cond}) is equivalent to the following asymptotic property (cf.~\cite[Section 2.2]{QC}))
\begin{equation}\label{H-cond-a}
  F_{H}(E)=o\shs(E)\quad\textrm{as}\quad E\rightarrow+\infty.
\end{equation}

For example, if $\,H=\hat{N}\doteq a^\dagger a\,$ is the number operator of  one mode quantum oscillator then
\begin{equation}\label{F-N}
F_H(E)=g(E),
\end{equation}
where\footnote{$\,h_2(p)=-p\ln p-(1-p)\ln(1-p)\,$ is the binary entropy.}
\begin{equation}\label{g-def}
  g(x)=(x+1)h_2\!\left(\frac{x}{x+1}\right)=(x+1)\ln(x+1)-x\ln x,\;\, x>0,\quad g(0)=0
\end{equation}

We will often assume that
\begin{equation}\label{star}
  E_0\doteq\inf\limits_{\|\varphi\|=1}\langle\varphi\vert H\vert\varphi\rangle=0.
\end{equation}
In this case the concavity and nonnegativity of $F_H$ imply that (cf.~\cite[Corollary 12]{W-CB})
\begin{equation}\label{W-L}
  xF_H(E/x)\leq yF_H(E/y)\quad  \forall y>x>0.
\end{equation}

\medskip

A \emph{quantum  channel} $\,\Phi$ from a system $A$ to a system
$B$ is a completely positive trace preserving  linear map from
$\mathfrak{T}(\mathcal{H}_A)$ into $\mathfrak{T}(\mathcal{H}_B)$. For any  quantum channel  $\,\Phi$ the Stinespring theorem (cf.\cite{St}) implies existence of a Hilbert space
$\mathcal{H}_E$ called \emph{environment} and  an isometry
$V_{\Phi}:\mathcal{H}_A\rightarrow\mathcal{H}_{BE}\doteq\mathcal{H}_B\otimes\mathcal{H}_E$ such
that
\begin{equation*}
\Phi(\rho)=\mathrm{Tr}_{E}V_{\Phi}\rho V_{\Phi}^{*},\quad
\rho\in\mathfrak{T}(\mathcal{H}_A).
\end{equation*}
The minimal dimension of $\H_E$ is called the \emph{Choi rank} of $\Phi$ \cite{H-SCI,Wilde}.\smallskip

The \emph{diamond norm} of a Hermitian preserving linear map $\Phi$ from $\T(\H_A)$ to $\T(\H_B)$ is defined as
\begin{equation}\label{d-norm}
\|\Phi\|_{\diamond}=\sup_{\rho\in\S(\H_{AR})}\|\Phi\otimes \id_R(\rho)\|_1,
\end{equation}
where $R$ is a quantum system such that $\H_R\cong\H_A$ \cite{Kit,Wilde} (it is also called the norm of complete boundedness \cite{Paul,Wat}).  This norm induces the
metric on the set of quantum channels from $A$ to $B$ called the diamond-norm metric
which is used essentially in the study of finite-dimensional quantum channels and systems.\smallskip

We will also use the "unstabilized" version of the diamond norm of a Hermitian preserving linear map $\Phi$ from $\T(\H_A)$ to $\T(\H_B)$
defined as
\begin{equation}\label{d-norm+}
\|\Phi\|_{1\to1}=\sup_{\rho\in\S(\H_{A})}\|\Phi(\rho)\|_1.
\end{equation}
It is clear that  $\|\Phi\|_{1\to1}\leq\|\Phi\|_{\diamond}$. So, the use of the
metrics induced by this norm in all the semicontinuity bounds for the AOE obtained in Section 4
makes these semicontinuity bounds \emph{sharper}. Another advantage of the unstabilized norm  (\ref{d-norm+})
consists in the simplicity of its estimation in comparison with the diamond norm.\smallskip

The \emph{energy-constrained  diamond norm} of a Hermitian preserving linear map $\Phi$ from $\T(\H_A)$ to $\T(\H_B)$ induced by the operator $H$ is defined as
\begin{equation}\label{ec-d-norm}
\|\Phi\|^H_{\diamond,E}=\sup_{\rho\in\S(\H_{AR}),\Tr H\rho_A\leq E}\|\Phi\otimes \id_R(\rho)\|_1,
\end{equation}
where $R$ is a quantum system such that $\H_R\cong\H_A$ \cite{SCT,W-EBN} (this norm slightly differs from the eponymous norm used in \cite{Lupo,Pir}).
\smallskip

The \emph{strong convergence} of a sequence $\{\Phi_n\}$ of quantum channels to a quantum channel $\Phi_0$  means that $\lim_{n\rightarrow+\infty}\Phi_n(\rho)=\Phi_0(\rho)$  for all $\rho\in\S(\H_A)$ \cite{H-SCI,AQC}. If the operator $H$ has a discrete spectrum of finite multiplicity then the metric induced by the energy-constrained  diamond norm $\|\cdot\|^H_{\diamond,E}$ generates the strong convergence on the set  of quantum channels from $A$ to $B$ \cite[Proposition 3]{SCT}.\smallskip

We will use the following simple \smallskip

\begin{lemma}\label{vsl} \emph{If $f(x)$ and $g(x)$ are lower semicontinuous nonnegative functions on a metric space $X$ and $\{x_n\}_{n}\subset X$ is a sequence converging to $x_0\in X$ such that there exists
$$
\lim_{n\to+\infty}(f+g)(x_n)=(f+g)(x_0)<+\infty
$$
then}
$$
\lim_{n\to+\infty}f(x_n)=f(x_0)<+\infty\quad \textit{and}\quad \lim_{n\to+\infty}g(x_n)=g(x_0)<+\infty.
$$
\end{lemma}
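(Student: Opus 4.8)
The plan is to use the standard "upper semicontinuity from below plus lower semicontinuity" argument. Since $f$ and $g$ are lower semicontinuous and $x_n\to x_0$, we have
$$
\liminf_{n\to+\infty} f(x_n)\geq f(x_0)\quad\text{and}\quad \liminf_{n\to+\infty} g(x_n)\geq g(x_0).
$$
The key is to convert these $\liminf$ inequalities into a $\limsup$ inequality in the opposite direction using the hypothesis on the sum. First I would pass to an arbitrary subsequence along which $f(x_{n_k})$ converges to some limit $\ell\in[0,+\infty]$ (possible by compactness of $[0,+\infty]$); along a further subsequence we may also assume $g(x_{n_k})$ converges to some $\ell'\in[0,+\infty]$. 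Then $\ell+\ell'=\lim (f+g)(x_n)=(f+g)(x_0)=f(x_0)+g(x_0)<+\infty$, so both $\ell$ and $\ell'$ are finite. Combining with $\ell=\lim f(x_{n_k})\geq\liminf f(x_n)\geq f(x_0)$ and similarly $\ell'\geq g(x_0)$, and using $\ell+\ell'=f(x_0)+g(x_0)$, we force $\ell=f(x_0)$ and $\ell'=g(x_0)$.

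Since every subsequence of $\{f(x_n)\}$ has a further subsequence converging to $f(x_0)$, the full sequence $f(x_n)$ converges to $f(x_0)$; the same reasoning gives $g(x_n)\to g(x_0)$. Finiteness of both limits is automatic from $f(x_0),g(x_0)\leq (f+g)(x_0)<+\infty$ (using nonnegativity).

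There is no real obstacle here; the only point requiring a little care is that the subsequential limits a priori lie in the extended half-line $[0,+\infty]$, so one must invoke the finiteness of $(f+g)(x_0)$ together with nonnegativity before adding the two liminf inequalities — adding inequalities in $[0,+\infty]$ is only safe once we know the quantities involved are finite, which is exactly what the hypothesis supplies. I would write the argument in the subsequence form above to keep this transparent, rather than manipulating $\limsup f(x_n)+\liminf g(x_n)$ directly (which would also work but is slightly more delicate to phrase correctly when infinities are in play).
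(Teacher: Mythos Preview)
Your argument is correct. The paper does not actually supply a proof of this lemma; it is introduced there as a ``simple'' fact and used without justification, so there is nothing to compare against. Your subsequence formulation is a clean way to avoid any ambiguity with infinities; an equally short alternative (which you allude to) is to observe that, since $(f+g)(x_n)\to(f+g)(x_0)<+\infty$ and $f,g\geq 0$, both $f(x_n)$ and $g(x_n)$ are eventually finite, so one may write $f(x_n)=(f+g)(x_n)-g(x_n)$ and obtain
\[
\limsup_{n}f(x_n)\;\leq\;\lim_{n}(f+g)(x_n)-\liminf_{n}g(x_n)\;\leq\;(f+g)(x_0)-g(x_0)=f(x_0),
\]
which together with $\liminf_n f(x_n)\geq f(x_0)$ gives the conclusion directly.
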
\smallskip

\subsection{Ensembles of quantum states: different metrics and the average passive energy}

\subsubsection{Discrete ensembles: basic metrics and their properties}

A finite or
countable collection $\{\rho_{i}\}$ of states
with a probability distribution $\{p_{i}\}$ is conventionally called
\textit{discrete ensemble} and denoted by $\{p_{i},\rho_{i}\}$. For any ensemble $\mu=\{p_{i},\rho_{i}\}$ the state
$\bar{\rho}(\mu)\doteq\sum_{i}p_{i}\rho_{i}$ is called the \emph{average state} of this  ensemble. \smallskip

Speaking about  distances  between ensembles of quantum states it is necessary
to consider separately the cases of \emph{ordered} ensembles and  the case when  ensemble $\{p_{i},\rho_{i}\}$  is considered as a discrete probability measure $\sum_i p_i\delta(\rho_i)$  on the set $\S(\H)$ (where $\delta(\rho)$ is the Dirac measure concentrating at a state $\rho$) rather than ordered (or disordered) collection of states.
To separate this two cases we write $(p_{i},\rho_{i})$ for an ensemble treated as an ordered collection of states
and  $\{p_{i},\rho_{i}\}$ for an ensemble treated as a discrete probability measure $\sum_i p_i\delta(\rho_i)$.

Note that any ordered ensemble $(p_{i},\rho_{i})$ of states in $\S(\H)$ can be identified with the q-c state  $\sum_{i} p_i\rho_i\otimes |i\rangle\langle i|$ in $\S(\H\otimes\H_{R})$
determined by some fixed basis $\{|i\rangle\}$ in a separable Hilbert space $\H_R$.

The  quantity
\begin{equation}\label{D-0-metric}
D_0((p_{i},\rho_{i}),(q_{i},\sigma_{i}))\doteq\frac{1}{2}\sum_i\|\shs p_i\rho_i-q_i\sigma_i\|_1
\end{equation}
is a true metric on the set of all ordered ensembles of quantum states. It coincides (up to factor $1/2$) with the trace norm distance between the corresponding q-c states $\sum_{i} p_i\rho_i\otimes |i\rangle\langle i|$
and $\sum_{i} q_i\sigma_i\otimes |i\rangle\langle i|$. \medskip

From the quantum information point of view it is reasonable to consider an ensemble of states in $\S(\H)$ as a discrete probability measure on the set $\S(\H)$ \cite{O&C,CHI}. We will  use two measures of divergence between ensembles $\mu=\{p_{i},\rho_{i}\}$ and $\nu=\{q_{i},\sigma_{i}\}$
treated as discrete probability measures.

The first one is obtained from the metric $D_0$ by factorization: if we want to identify discrete ensembles corresponding to the same probability measure then it is natural to use the measure of divergence between ensembles $\mu=\{p_i,\rho_i\}$ and $\nu=\{q_i,\sigma_i\}$ defined as
\begin{equation}\label{f-metric}
D_*(\mu,\nu)\doteq \inf_{\mu'\in \mathcal{E}(\mu),\nu'\in \mathcal{E}(\nu)}D_0(\mu',\nu'),
\end{equation}
where $\mathcal{E}(\mu)$ and $\mathcal{E}(\nu)$ are the sets
of all countable ordered  ensembles corresponding to the measures $\sum_i p_i\delta(\rho_i)$ and $\sum_i q_i\delta(\sigma_i)$, respectively. It is easy to describe
all the operations transforming one ensemble from $\mathcal{E}(\mu)$ to any other ensemble from $\mathcal{E}(\mu)$ \cite{CHI}.


It is mentioned  in \cite{CHI} that the divergence measure $D_*$ coincides{\footnote{A strict proof of the coincidence of $D_*$ and $D_{\mathrm{ehs}}$ is presented in Appendix A-2.}} with the EHS\nobreakdash-\hspace{0pt}distance $D_{\mathrm{ehs}}$ between ensembles of quantum states proposed by Oreshkov and Calsamiglia in \cite{O&C} which can be defined for  ensembles $\mu=\{p_i,\rho_i\}$ and $\nu=\{q_j,\sigma_j\}$ as
\begin{equation}\label{ehs-metric}
D_{\mathrm{ehs}}(\mu,\nu)\doteq \frac{1}{2}\inf\sum_{i,j}\|\shs P_{ij}\rho_i-Q_{ij}\sigma_j\|_1,
\end{equation}
where the infimum is over all 2-variate probability distributions $\{P_{ij}\}$ and $\{Q_{ij}\}$
such that $\sum_jP_{ij}=p_i$ for all $i$ and $\sum_iQ_{ij}=q_j$ for all $j$.

It is also mentioned  in \cite{CHI} that the metric $D_*=D_{\mathrm{ehs}}$ generates the weak convergence topology on the set of all discrete ensembles (considered as probability measures). This means that a  sequence $\{\{p^n_i,\rho^n_i\}\}_n$  converges to an ensemble $\{p^0_i,\rho^0_i\}$ with respect to the metric $\,D_*=D_{\mathrm{ehs}}\,$ if and only if
$\;\lim_{n\rightarrow+\infty}\sum_i p^n_if(\rho^n_i)=\sum_i p^0_if(\rho^0_i)\;$
for any continuous bounded function $f$ on $\,\S(\H)$ \cite{Bil+}. Hence, for an arbitrary continuous bounded function $f$ on $\S(\H)$ the function
\begin{equation}\label{f-hat}
\hat{f}:\{p_i,\rho_i\}\mapsto \sum_i p_if(\rho_i)
\end{equation}
is  continuous on the set of all discrete ensembles w.r.t. to the metric $\,D_*=D_{\mathrm{ehs}}$.
In fact, the following useful result is valid.\smallskip

\begin{lemma}\label{UC-l} \cite{O&C} \emph{For an arbitrary uniformly continuous  function $f$ on $\S(\H)$
the function $\hat{f}$ defined in (\ref{f-hat})
is uniformly continuous on the set of all discrete ensembles w.r.t. to the metric $\,D_*=D_{\mathrm{ehs}}$.}
\end{lemma}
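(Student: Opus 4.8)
The plan is to work directly with the two–variate–distribution representation (\ref{ehs-metric}) of $D_{\mathrm{ehs}}=D_*$, combined with the elementary fact that a uniformly continuous function on a metric space of finite diameter is bounded. First I would record two preliminary facts about $f$. Since $\|\rho-\sigma\|_1\le 2$ for all $\rho,\sigma\in\S(\H)$, uniform continuity of $f$ forces $f$ to be bounded; fix $M$ with $|f|\le M$. Let
$$
\omega_f(s)\doteq\sup\{\,|f(\rho)-f(\sigma)|\,:\,\rho,\sigma\in\S(\H),\ \|\rho-\sigma\|_1\le s\,\}
$$
be the modulus of continuity of $f$, so that $\omega_f$ is nondecreasing, $\omega_f\le 2M$ everywhere, and $\omega_f(s)\to0$ as $s\to0^{+}$.

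Given discrete ensembles $\mu=\{p_i,\rho_i\}$ and $\nu=\{q_j,\sigma_j\}$ with $D_{\mathrm{ehs}}(\mu,\nu)<\delta$, I would choose, using (\ref{ehs-metric}), two–variate probability distributions $\{P_{ij}\}$ and $\{Q_{ij}\}$ with $\sum_jP_{ij}=p_i$, $\sum_iQ_{ij}=q_j$ and $\sum_{i,j}t_{ij}<2\delta$, where $t_{ij}\doteq\|P_{ij}\rho_i-Q_{ij}\sigma_j\|_1$. The marginal conditions give $\hat f(\mu)-\hat f(\nu)=\sum_{i,j}\bigl(P_{ij}f(\rho_i)-Q_{ij}f(\sigma_j)\bigr)$, and the identity $P_{ij}f(\rho_i)-Q_{ij}f(\sigma_j)=P_{ij}\bigl(f(\rho_i)-f(\sigma_j)\bigr)+(P_{ij}-Q_{ij})f(\sigma_j)$ yields
$$
|\hat f(\mu)-\hat f(\nu)|\le\sum_{i,j}P_{ij}\,|f(\rho_i)-f(\sigma_j)|+M\sum_{i,j}|P_{ij}-Q_{ij}| .
$$
The last sum is controlled by taking traces: $|P_{ij}-Q_{ij}|=|\Tr(P_{ij}\rho_i-Q_{ij}\sigma_j)|\le t_{ij}$, so it does not exceed $2M\delta$.

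For the first sum I would use the pointwise estimate $P_{ij}\|\rho_i-\sigma_j\|_1\le2t_{ij}$, which follows from $P_{ij}(\rho_i-\sigma_j)=(P_{ij}\rho_i-Q_{ij}\sigma_j)+(Q_{ij}-P_{ij})\sigma_j$ together with $\|\sigma_j\|_1=1$ and $|P_{ij}-Q_{ij}|\le t_{ij}$. Fixing a threshold $r>0$, split the pairs $(i,j)$ into the set $G$ where $\|\rho_i-\sigma_j\|_1\le r$ and its complement. Pairs in $G$ contribute at most $\omega_f(r)\sum_{(i,j)\in G}P_{ij}\le\omega_f(r)$, while for $(i,j)\notin G$ one has $P_{ij}<2t_{ij}/r$, so those pairs contribute at most $2M\sum_{(i,j)\notin G}P_{ij}\le(4M/r)\sum_{i,j}t_{ij}<8M\delta/r$. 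Hence
$$
|\hat f(\mu)-\hat f(\nu)|<\omega_f(r)+\frac{8M\delta}{r}+2M\delta\qquad\text{for every }r>0 .
$$
Given $\varepsilon>0$, I would first pick $r>0$ with $\omega_f(r)<\varepsilon/2$ (possible since $f$ is uniformly continuous) and then $\delta>0$ with $8M\delta/r+2M\delta<\varepsilon/2$; the resulting $\delta$ depends only on $\varepsilon$ and $f$, which is exactly uniform continuity of $\hat f$ w.r.t. $D_*=D_{\mathrm{ehs}}$.

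I do not expect a genuine obstacle: this is a standard "split at a threshold" estimate. The two points that need care are (i) the uniformity of the final bound over all pairs of ensembles, which the threshold decomposition delivers because neither $\omega_f(r)$ nor $8M\delta/r+2M\delta$ involves $\mu$ or $\nu$; and (ii) the legitimacy of the two–variate–distribution representation, for which one invokes $D_*=D_{\mathrm{ehs}}$ (Appendix A-2). (If $f$ is merely Lipschitz with constant $L$ the threshold is unnecessary, since $P_{ij}|f(\rho_i)-f(\sigma_j)|\le L\,P_{ij}\|\rho_i-\sigma_j\|_1\le2L\,t_{ij}$ already gives the clean bound $|\hat f(\mu)-\hat f(\nu)|\le(4L+2M)\,D_*(\mu,\nu)$.)
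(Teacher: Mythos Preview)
The paper does not supply its own proof of this lemma; it is quoted from \cite{O&C} without argument. Your proof is correct and self-contained: the key observations $|P_{ij}-Q_{ij}|\le t_{ij}$ (by taking the trace) and $P_{ij}\|\rho_i-\sigma_j\|_1\le 2t_{ij}$ are exactly what one needs, and the threshold split at $\|\rho_i-\sigma_j\|_1=r$ is the standard way to pass from uniform continuity of $f$ to a uniform bound. There is nothing to compare against in the present paper, and your write-up could serve as a drop-in proof here.
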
\smallskip

The metric $\,D_*=D_{\mathrm{ehs}}$ is more adequate for the quantum informational tasks than the metric $D_0$, but it is difficult to compute in general.\footnote{For finite ensembles it can be calculated by a linear programming procedure \cite{O&C}.} It is clear that
\begin{equation}\label{d-ineq}
D_*(\mu,\nu)\leq D_0(\mu,\nu)
\end{equation}
for any ensembles $\mu$ and $\nu$ (which are treated as probability measures in $D_*(\mu,\nu)$ and as ordered ensembles in $D_0(\mu,\nu)$).
But in some cases the values of $\,D_0(\mu,\nu)\,$ and $\,D_*(\mu,\nu)\,$  are close to each other or even coincide. This holds, for example,  if the ensemble $\nu$ is obtained by small perturbations  of the states or probabilities of $\mu$.

The second useful metric is the Kantorovich distance
\begin{equation}\label{K-D-d}
D_K(\mu,\nu)=\frac{1}{2}\inf_{\{P_{ij}\}}\sum_{i,j} P_{ij}\|\rho_i-\sigma_j\|_1
\end{equation}
between ensembles $\mu=\{p_i,\rho_i\}$ and $\nu=\{q_i,\sigma_i\}$, where the infimum is over all 2-variate probability distributions $\{P_{ij}\}$ with the marginals $\{p_i\}$ and $\{q_j\}$, i.e. such that $\sum_jP_{ij}=p_i$ for all $i$ and $\sum_iP_{ij}=q_j$ for all $j$ \cite{O&C,Bog,B&K} (see details in Appendix A-1). By noting that  $D_*=D_{\mathrm{ehs}}$, it is easy to show (cf.\cite{O&C}) that
\begin{equation}\label{d-ineq+}
  D_*(\mu,\nu)\leq D_K(\mu,\nu)
\end{equation}
for any discrete ensembles $\mu$ and $\nu$.

The Kantorovich distance generates  the weak convergence topology on the set of all discrete ensembles (considered as probability measures). So, it is topologically
equivalent to the metric $\,D_*=D_{\mathrm{ehs}}$. Moreover, it follows from (\ref{d-ineq+}) that the claim of Lemma \ref{UC-l} holds with $D_*$ replaced by $D_K$.

Naturally, the question arises about the relationship between
$D_K(\mu,\nu)$ and $D_0(\mu,\nu)$ for any ensembles $\mu$ and $\nu$ (which are treated as probability measures in $D_K(\mu,\nu)$ and as ordered ensembles in $D_0(\mu,\nu)$).
It is not hard to construct examples showing that both inequalities $D_K(\mu,\nu)<D_0(\mu,\nu)$ and $D_K(\mu,\nu)>D_0(\mu,\nu)$ are possible.\smallskip

\begin{example}\label{D-0-K-e} To prove that $D_K(\mu,\nu)$ may be essentially less than $D_0(\mu,\nu)$
one can take any different discrete ordered ensembles  $\mu$  and $\nu$ corresponding to the same probability measure.
Then $D_K(\mu,\nu)=0$ (as $D_K$ is a true metric on the set of all ensembles treated as probability measures), while
$D_0(\mu,\nu)$ may be close to $1$.

To prove that $D_0(\mu,\nu)$ may be less than $D_K(\mu,\nu)$ consider
the ensemble  $\mu$ consisting of the  states $\rho_1\otimes|0\rangle\langle0|$ and $\rho_2\otimes|1\rangle\langle1|$ with equal probabilities $1/2$ and
the singleton ensemble  $\mu$ consisting of the state $\sigma\otimes|0\rangle\langle0|$, where $\rho_1=|0\rangle\langle0|$,
$\sigma=\frac{1}{2}|0\rangle\langle0|+\frac{1}{2}|1\rangle\langle1|$ and $\rho_2$ is arbitrary.  Then
$$
2D_0(\mu,\nu)=\textstyle\|\frac{1}{2}\rho_1-\sigma\|_1+\|\frac{1}{2}\rho_2\|_1=\frac{1}{2}+\frac{1}{2}=1
$$
while it is easy to show (cf.~\cite[Section IV]{O&C}) that
$$
2D_K(\mu,\nu)=\textstyle\frac{1}{2}\|\rho_1-\sigma\|_1+\frac{1}{2}\|\rho_2\otimes|2\rangle\langle2|-\sigma\otimes|1\rangle\langle1|\|_1 =\frac{1}{2}+1=\frac{3}{2}
$$
as the states $\rho_2\otimes|1\rangle\langle1|$ and $\sigma\otimes|0\rangle\langle0|$ are orthogonal.
\end{example}\smallskip

We will use the following simple\smallskip
\begin{lemma}\label{D-0-K} \emph{Let $\mu$ and $\nu$ be discrete ordered ensembles of states in $\S(\H)$ with the same probability distribution. Then
$$
D_K(\mu,\nu)\leq D_0(\mu,\nu),
$$
where $\mu$ and $\nu$ in $D_K(\mu,\nu)$ are treated as probability measures on $\S(\H)$.}
\end{lemma}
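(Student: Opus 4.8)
The plan is to prove the inequality by testing the infimum in the definition (\ref{K-D-d}) of the Kantorovich distance on a single, explicitly chosen coupling. Write the two ordered ensembles as $\mu=(p_{i},\rho_{i})$ and $\nu=(p_{i},\sigma_{i})$, using the common weight sequence $\{p_{i}\}$ granted by the hypothesis; then by (\ref{D-0-metric})
$$
D_{0}(\mu,\nu)=\frac{1}{2}\sum_{i}\|p_{i}\rho_{i}-p_{i}\sigma_{i}\|_{1}=\frac{1}{2}\sum_{i}p_{i}\|\rho_{i}-\sigma_{i}\|_{1}.
$$

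First I would recall why it is legitimate to evaluate $D_{K}(\mu,\nu)$ — the Kantorovich distance between the probability measures $\sum_{i}p_{i}\delta(\rho_{i})$ and $\sum_{i}p_{i}\delta(\sigma_{i})$ — via formula (\ref{K-D-d}) applied to these particular representatives sharing the weights $\{p_{i}\}$. For the inequality one needs only the easy half of the representation-independence recalled after (\ref{K-D-d}) (and treated in Appendix A-1): every $2$-variate distribution $\{P_{ij}\}$ admissible in (\ref{K-D-d}) for the chosen representatives induces, by push-forward, a coupling of the two measures carrying the same transport cost $\frac{1}{2}\sum_{i,j}P_{ij}\|\rho_{i}-\sigma_{j}\|_{1}$; hence $D_{K}(\mu,\nu)$ is bounded above by the infimum in (\ref{K-D-d}) taken over $\{P_{ij}\}$ for this choice of representatives.

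Then I would simply take $P_{ij}\doteq p_{i}\delta_{ij}$. It is an admissible $2$-variate distribution: $\sum_{j}P_{ij}=p_{i}$ and $\sum_{i}P_{ij}=p_{j}$, so its marginals are exactly the weight sequences of $\mu$ and of $\nu$ (equal, by hypothesis). Its transport cost is
$$
\frac{1}{2}\sum_{i,j}P_{ij}\|\rho_{i}-\sigma_{j}\|_{1}=\frac{1}{2}\sum_{i}p_{i}\|\rho_{i}-\sigma_{i}\|_{1}=D_{0}(\mu,\nu),
$$
where the rearrangement of the (nonnegative) series is harmless since $\|\rho_{i}-\sigma_{i}\|_{1}\le 2$ and $\sum_{i}p_{i}=1$. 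Combining the last two paragraphs yields $D_{K}(\mu,\nu)\le D_{0}(\mu,\nu)$, as asserted.

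I do not expect any genuine obstacle here: the only point to keep straight is the one in the second paragraph, namely that the diagonal coupling $\{P_{ij}\}$ is being compared against $D_{K}$ of the two \emph{measures} — which is handled either by invoking the representation-independence of (\ref{K-D-d}) recorded in the text (Appendix A-1), or by the one-line push-forward remark above. Everything else is a direct computation, and the essential content is just the observation that when $\mu$ and $\nu$ carry identical weights the ``diagonal'' transport plan is feasible.
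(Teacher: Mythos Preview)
Your proof is correct and follows exactly the same approach as the paper: both take the diagonal coupling $P_{ij}=p_i\delta_{ij}$, verify it is admissible in (\ref{K-D-d}), and compute its transport cost as $D_0(\mu,\nu)$. The only difference is that you add an explicit paragraph about representation-independence, which the paper handles silently by working directly with definition (\ref{K-D-d}).
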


\begin{proof} Let $\mu=(p_i,\rho_i)$ and $\nu=(p_i,\sigma_i)$. Since $P_{ij}=p_i\delta_{ij}$ is
a joint probability distribution such that $\sum_jP_{ij}=p_i$ for all $i$ and $\sum_iP_{ij}=p_j$ for all $j$, we conclude
from the definition (\ref{K-D-d}) that
$$
D_K(\mu,\nu)\leq\frac{1}{2}\sum_{i,j} P_{ij}\|\rho_i-\sigma_j\|_1=\frac{1}{2}\sum_{i} p_{i}\|\rho_i-\sigma_i\|_1=D_0(\mu,\nu).
$$
\end{proof}

The claim of Lemma  \ref{D-0-K} can be deduce from the universal  easy computable  upper bound on $D_K$: the inequality
\begin{equation}\label{d-ineq++}
  D_K(\mu,\nu)\leq\frac{1}{2}\sum_i[\shs\min\{p_i,q_i\}\|\shs\rho_i-\sigma_i\|_1+|\shs p_i-q_i|\shs]
\end{equation}
holds for any discrete ensembles $\mu=\{p_i,\rho_i\}$ and $\nu=\{q_i,\sigma_i\}$. Note that the r.h.s. of (\ref{d-ineq++})
is a metric between ordered ensembles $\mu=(p_i,\rho_i)$ and $\nu=(q_i,\sigma_i)$ which is not less than the metric $D_0$ (this is shown in \cite[Appendix B]{O&C}).

To prove (\ref{d-ineq++}) consider
the ensembles $\hat{\mu}=\{p_i, \rho_i\otimes |i\rangle\langle i|\}$ and $\hat{\nu}=\{q_i, \sigma_i\otimes |i\rangle\langle i|\}$ of states in $\S(\H\otimes\H_R)$
determined by some fixed basis $\{|i\rangle\}$ in a separable Hilbert space $\H_R$. It is proved in \cite[Appendix B]{O&C} that
$D_K(\hat{\mu},\hat{\nu})$ coincides with the r.h.s. of (\ref{d-ineq++}).
As the ensembles $\mu=\{p_i,\rho_i\}$ and $\nu=\{q_i,\sigma_i\}$ are  obtained from $\hat{\mu}$ and $\hat{\nu}$ by
applying  the channel $\Tr_R(\cdot)$, the monotonicity of the metric $D_K$ under action of a channel (see \cite[Section IV]{O&C}) implies (\ref{d-ineq++}).

More detailed information about properties of the metrics $\,D_*=D_{\mathrm{ehs}}$ and $D_K$ can be found in \cite{O&C}.

In the next subsection we will see that both metric $D_*$ and $D_K$ have natural extensions to the set of all generalized (continuous) ensembles and that both these extensions generate the weak convergence topology on this set.

\subsubsection{Generalized (continuous) ensembles: the Kantorovich distance and the extended factor metric $D_*$}

In analysis of infinite-dimensional quantum systems and channels the notion of \textit{generalized (continuous) ensemble} defined as a
Borel probability measure on the set of quantum states naturally appears \cite{H-SCI,H-Sh-2}.\footnote{The relations between this definition of a generalized ensemble
and the concept of a generalized ensemble often used in QIT are described in Section 6 below.} We write $\mathcal{P}(\mathcal{H})$ for the set of all Borel probability measures on $\mathfrak{S}(\mathcal{H})$ equipped with the topology of weak convergence
\cite{Bil+,Bog,Par}.\footnote{The weak convergence of a sequence $\{\mu_n\}$ in $\P(\H)$ to a measure $\mu_0\in\P(\H)$ means that
$\,\lim_{n\rightarrow+\infty}\int f(\rho)\mu_n(d\rho)=\int f(\rho)\mu_0(d\rho)\,$
for any continuous bounded function $f$ on $\,\S(\H)$.}
 The set $\mathcal{P}(\mathcal{H})$ is a complete
separable metric space containing the dense subset $\mathcal{P}_0(\mathcal{H})$ of discrete measures (corresponding to discrete ensembles) \cite{Par}. The average state of a generalized
ensemble $\mu \in \mathcal{P}(\mathcal{H})$ is the barycenter of the measure
$\mu $ defined by the Bochner integral
\begin{equation*}
\bar{\rho}(\mu )=\int_{\mathfrak{S}(\mathcal{H})}\rho \mu (d\rho ).
\end{equation*}

For an ensemble $\mu \in \mathcal{P}(\mathcal{H}_{A})$ its image $\Phi(\mu) $
under a quantum channel $\Phi :A\rightarrow B\,$ is defined as the
ensemble in $\mathcal{P}(\mathcal{H}_{B})$ corresponding to the measure $\mu
\circ \Phi ^{-1}$ on $\mathfrak{S}(\mathcal{H}_{B})$, i.e. $\,\Phi (\mu )[%
\mathfrak{S}_{B}]=\mu[\Phi ^{-1}(\mathfrak{S}_{B})]\,$ for any Borel subset $%
\mathfrak{S}_{B}\subseteq \mathfrak{S}(\mathcal{H}_{B})$, where $\Phi ^{-1}(%
\mathfrak{S}_{B})$ is the pre-image of $\mathfrak{S}_{B}$ under the map $%
\Phi $. If $\mu =\{p _{i},\rho _{i}\}$ then it follows from this definition that $\Phi (\mu)=\{p _{i},\Phi(\rho_{i})\}$.\smallskip

The metrics $D_*$ and $D_K$ on the set of all discrete ensembles described in Section 2.2.1 can be extended to
generalized ensembles.
\smallskip

The Kantorovich distance (\ref{K-D-d}) between discrete ensembles is extended to generalized ensembles $\mu$ and $\nu$ by the explicit expression
\begin{equation}\label{K-D-c}
D_K(\mu,\nu)=\frac{1}{2}\inf_{\Lambda\in\Pi(\mu,\nu)}\int_{\S(\H)\times\S(\H)}\|\rho-\sigma\|_1\Lambda(d\rho,d\sigma),
\end{equation}
where $\Pi(\mu,\nu)$ is the set of all Borel probability measures on $\S(\H)\times\S(\H)$ with the marginals $\mu$ and $\nu$ \cite{Bog,B&K} (see details in Appendix A-1).
\smallskip

\begin{example}\label{D-star-0} Let $\mu$ be an arbitrary generalized ensemble in $\P(\H)$  and $\nu$
be the singleton ensemble consisting of a state $\sigma$. Then
\begin{equation*}
D_K(\mu,\nu)=\frac{1}{2}\int_{\S(\H)}\|\rho-\sigma\|_1\mu(d\rho).
\end{equation*}
Indeed, in this case the measure $\mu\times\delta(\sigma)$ is the only element of the set $\Pi(\mu,\nu)$.
\end{example}
\smallskip

Since $\frac{1}{2}\|\rho-\sigma\|_1\leq 1$ for all $\rho$ and $\sigma$, the Kantorovich distance (\ref{K-D-c}) generates the weak convergence topology on $\mathcal{P}(\mathcal{H})$  \cite{Bog}. This shows, in particular, that
\begin{equation}\label{K-D-c+}
D_K(\mu,\nu)=\lim_{n\to+\infty}D_K(\mu_n,\nu_n)
\end{equation}
for any sequences $\{\mu_n\}$ and $\{\nu_n\}$ of ensembles weakly converging, respectively, to $\mu$  and $\nu$. Since
the set of discrete measures in $\P(\H)$ is a dense subset of $\P(\H)$  we may
define the Kantorovich distance between generalized ensembles by the formula (\ref{K-D-c+}) in which
$\{\mu_n\}$ and $\{\nu_n\}$ are any sequences of discrete ensembles weakly converging, respectively, to $\mu$  and $\nu$ (so that
$D_K$ in the r.h.s. of (\ref{K-D-c+}) is the Kantorovich distance between discrete ensembles defined in (\ref{K-D-d})).

The above approximation approach to definition of the Kantorovich distance between generalized ensembles
motivates an idea to extend  the metric $D_*=D_{\mathrm{ehs}}$ defined for discrete ensembles  by the equivalent expressions (\ref{f-metric}) and (\ref{ehs-metric}) to
generalized ensembles $\mu$ and $\nu$ by the formula
\begin{equation}\label{f-metric+}
D_*(\mu,\nu)=\lim_{n\to+\infty}D_*(\mu_n,\nu_n),
\end{equation}
where $\{\mu_n\}$ and $\{\nu_n\} $ are any sequences of discrete ensembles weakly converging, respectively, to $\mu$  and $\nu$.
Of course, it is necessary to verify that the definition (\ref{f-metric+}) is correct, i.e. to prove the following lemma.\smallskip

\begin{lemma}\label{D-lemma} \emph{Let $\mu$  and $\nu$ be arbitrary generalized ensembles in $\P(\H)$.} \emph{A finite limit
in (\ref{f-metric+}) exists for any sequences $\{\mu_n\}$ and $\{\nu_n\}$ of discrete ensembles weakly converging to $\mu$  and $\nu$
and this limit does not depend on the choice of these sequences.}
\end{lemma}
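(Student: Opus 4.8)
The plan is to reduce everything to properties of the Kantorovich distance. The key inputs are the comparison inequality (\ref{d-ineq+}), which gives $D_*(\mu',\nu')\le D_K(\mu',\nu')$ for all \emph{discrete} ensembles, and the fact recorded just above the lemma that $D_K$ generates the weak convergence topology on $\P(\H)$ (so that a weakly convergent sequence in $\P(\H)$ is $D_K$-convergent). Since $\P_0(\H)$ is dense in $\P(\H)$, sequences $\{\mu_n\}$, $\{\nu_n\}$ of discrete ensembles converging weakly to $\mu$, $\nu$ do exist, so the statement is not vacuous.

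\emph{Existence of the limit.} Fix such sequences $\{\mu_n\}$ and $\{\nu_n\}$. Since $D_*=D_{\mathrm{ehs}}$ is a genuine metric on the set of discrete ensembles treated as probability measures, two applications of the triangle inequality together with (\ref{d-ineq+}) give
\[
|D_*(\mu_n,\nu_n)-D_*(\mu_m,\nu_m)|\le D_*(\mu_n,\mu_m)+D_*(\nu_n,\nu_m)\le D_K(\mu_n,\mu_m)+D_K(\nu_n,\nu_m).
\]
Because $\mu_n\to\mu$ weakly and $D_K$ metrizes weak convergence, $D_K(\mu_n,\mu)\to0$, hence $D_K(\mu_n,\mu_m)\le D_K(\mu_n,\mu)+D_K(\mu,\mu_m)\to0$ as $n,m\to+\infty$; likewise $D_K(\nu_n,\nu_m)\to0$. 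Thus $\{D_*(\mu_n,\nu_n)\}_n$ is a Cauchy sequence of real numbers, and since $D_*(\mu_n,\nu_n)\le D_K(\mu_n,\nu_n)\le 1$ it converges to a finite limit.

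\emph{Independence of the approximating sequences.} If $\{\mu'_n\}$ and $\{\nu'_n\}$ are other sequences of discrete ensembles converging weakly to $\mu$ and $\nu$, the same estimate yields
\[
|D_*(\mu_n,\nu_n)-D_*(\mu'_n,\nu'_n)|\le D_*(\mu_n,\mu'_n)+D_*(\nu_n,\nu'_n)\le D_K(\mu_n,\mu'_n)+D_K(\nu_n,\nu'_n),
\]
and both terms on the right tend to $0$ because $\mu_n,\mu'_n\to\mu$ and $\nu_n,\nu'_n\to\nu$ in the metric $D_K$. Hence the two limits coincide, proving that (\ref{f-metric+}) is well defined.

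The argument is essentially soft: the only point needing care is that the Kantorovich distance (\ref{K-D-d}) between discrete ensembles agrees with the restriction to $\P_0(\H)$ of the Kantorovich distance (\ref{K-D-c}) on $\P(\H)$ — which is how "$D_K(\mu_n,\mu_m)\to0$'' is read off from weak convergence in $\P(\H)$. Once the limit is shown to exist, it is also immediate (by passing to the limit in the metric axioms and in (\ref{d-ineq+})) that the extended $D_*$ is a metric on $\P(\H)$ satisfying $D_*\le D_K$ there; this can be recorded right after the lemma.
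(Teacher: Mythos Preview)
Your proof is correct and follows essentially the same approach as the paper: both rely on the triangle inequality for $D_*$, the comparison $D_*\le D_K$ from (\ref{d-ineq+}), and the fact that $D_K$ metrizes weak convergence to control $D_K(\mu_n,\mu'_n)$ and $D_K(\nu_n,\nu'_n)$. The only difference is presentational --- you argue directly that $\{D_*(\mu_n,\nu_n)\}$ is Cauchy, whereas the paper argues by contradiction (assuming two approximating sequences yield distinct limits $a>b$ and deriving $a\le b$); your version is arguably cleaner since it handles existence and uniqueness in one stroke without passing to subsequences.
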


\begin{proof} Since $D_*(\mu_n,\nu_n)\leq1$ for any $n$, to prove the lemma it suffices to  come to a contradiction assuming that
there exist
\begin{equation}\label{f-metric+a}
\lim_{n\to+\infty}D_*(\mu_n,\nu_n)=a\quad \textrm{and} \quad\lim_{n\to+\infty}D_*(\mu'_n,\nu'_n)=b<a
\end{equation}
where $\{\mu_n\}$ and $\{\mu'_n\} $ are any sequences of discrete ensembles weakly converging to $\mu$  and
$\{\nu_n\}$ and $\{\nu'_n\} $ are any sequences of discrete ensembles weakly converging to $\nu$.

By the triangle inequality for the metric $D_*$ we have
\begin{equation}\label{ab-in}
\begin{array}{c}
D_*(\mu_n,\nu_n)\leq D_*(\mu_n,\mu'_n)+D_*(\mu'_n,\nu'_n)+D_*(\nu_n,\nu'_n)\\\\\leq D_K(\mu_n,\mu'_n)+D_*(\mu'_n,\nu'_n)+D_K(\nu_n,\nu'_n)\quad \forall n,
\end{array}
\end{equation}
where the inequality (\ref{d-ineq+}) was used.  Let $\theta$ be either $\mu$ or $\nu$. Since the triangle inequality for the metric $D_K$ shows that  $D_K(\theta_n,\theta'_n)\leq D_K(\theta_n,\theta)+D_K(\theta'_n,\theta)$,
the weak convergence of the sequences $\{\theta_n\}$ and $\{\theta'_n\}$ to the ensemble $\theta$  implies that
$D_K(\theta_n,\theta'_n)=o(1)$ as $n\to+\infty$  (because $D_K$ generates the weak convergence topology).

Thus, the inequality (\ref{ab-in}) shows that $a\leq b$  contrary to the assumption (\ref{f-metric+a}).
\end{proof}

In is clear the function $(\mu,\nu)\mapsto D_*(\mu,\nu)$ on $\P(\H)\times\P(\H)$ satisfies the triangle inequality (as it satisfies this inequality
for discrete ensembles). If $\mu=\nu$ then by using inequality (\ref{d-ineq+}) and by noting that $D_K(\mu_n,\nu_n)=o(1)$ as $n\to+\infty$
for any sequences $\{\mu_n\}$ and $\{\nu_n\}$ of discrete ensembles weakly converging to $\mu=\nu$ (see the proof of Lemma \ref{D-lemma}) it is easy to show that
$D_*(\mu,\nu)=0$. Thus, to prove that $D_*$ is a true metric on  $\P(\H)$ it suffices to show that $\mu=\nu$ if $D_*(\mu,\nu)=0$.

Assume that $D_*(\mu,\nu)=0$ for some different generalized ensembles  $\mu$ and $\nu$ (i.e. different probability measures $\mu$ and $\nu$ on $\S(\H)$). Then $D_*(\mu_n,\nu_n)=o(1)$ as $n\to+\infty$ for
any sequences $\{\mu_n\}$ and $\{\nu_n\}$ of discrete ensembles weakly converging to $\mu$ and $\nu$ correspondingly. Since $\mu$ and $\nu$ are
different, there exists  an uniformly continuous bounded function $f$ on $\S(\H)$  such that
$$
\int_{\S(\H)}f(\rho)\mu(d\rho)>\int_{\S(\H)}f(\rho)\nu(d\rho)
$$
and hence there is $\,\varepsilon>0\,$ such that
$$
\int_{\S(\H)}f(\rho)\mu_n(d\rho)-\int_{\S(\H)}f(\rho)\nu_n(d\rho)>\varepsilon
$$
for all $n$ large enough. At the same time, for any $\delta$ there is $n_{\delta}$ such that $D_*(\mu_n,\nu_n)\leq\delta$ for all
$n\geq n_{\delta}$. So, we obtain a contradiction to the claim of Lemma \ref{UC-l}.\smallskip

Thus, we may formulate the following\smallskip

\begin{proposition}\label{D-star} A) \emph{The function $(\mu,\nu)\mapsto D_*(\mu,\nu)$ defined in (\ref{f-metric+}) is a true metric
on the set $\P(\H)$ of all  generalised ensembles   of states in $\S(\H)$
which is bounded from above by the  Kantorovich metric $D_K$ defined in (\ref{K-D-c}).}\footnote{It seems that the metric $D_*(\mu,\nu)$ is a partial case (or analog) of one of the general type metrics used in the modern measure theory. I would be grateful for any comments and references.} \smallskip

B) \emph{The metric $D_*$ generates the weak
convergence topology on $\P(\H)$.}\smallskip

C) \emph{For an arbitrary uniformly continuous function $f$ on $\S(\H)$
the function
\begin{equation*}
\hat{f}(\mu)=\int_{\S(\H)}f(\rho)\mu(d\rho)
\end{equation*}
is uniformly continuous on the set $\P(\H)$  w.r.t. to the metric $\,D_*$.}
\end{proposition}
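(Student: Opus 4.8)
The plan is to establish the three assertions in the order A, C, B, since the proof of B will invoke C and the proof of C will invoke A. Part A is, in fact, almost entirely contained in the discussion preceding the proposition: the limit in (\ref{f-metric+}) exists and is independent of the approximating sequences by Lemma \ref{D-lemma}; the triangle inequality and the symmetry of $D_*$ on $\P(\H)$ follow by passing to the limit in the corresponding (in)equalities for discrete ensembles; and both $D_*(\mu,\mu)=0$ and the implication $D_*(\mu,\nu)=0\Rightarrow\mu=\nu$ were verified there with the help of (\ref{d-ineq+}), (\ref{K-D-c+}) and Lemma \ref{UC-l}. It only remains to note that the inequality $D_*\le D_K$ on all of $\P(\H)$ is obtained by passing to the limit in the discrete inequality $D_*(\mu_n,\nu_n)\le D_K(\mu_n,\nu_n)$ and using (\ref{K-D-c+}).

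For part C, I would first observe that a uniformly continuous function $f$ on $\S(\H)$ is automatically bounded: $\S(\H)$ is a convex subset of the normed space $\T(\H)$ of trace-norm diameter at most $2$, so subdividing the segment joining two states into a fixed finite number of pieces of length below the uniform-continuity threshold bounds the oscillation of $f$ on $\S(\H)$. Hence $\hat f(\mu)=\int f\,d\mu$ is finite for every $\mu\in\P(\H)$, and, $f$ being bounded and continuous, $\hat f$ is continuous with respect to the weak convergence topology --- equivalently, with respect to $D_K$. Then I argue by contradiction: were $\hat f$ not $D_*$-uniformly continuous, there would exist $\varepsilon_0>0$ and sequences $\{\mu_k\},\{\nu_k\}$ in $\P(\H)$ with $D_*(\mu_k,\nu_k)\to0$ but $|\hat f(\mu_k)-\hat f(\nu_k)|\ge\varepsilon_0$. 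Using the density of the discrete measures in $\P(\H)$ and the $D_K$-continuity of $\hat f$, one chooses discrete ensembles $\tilde\mu_k,\tilde\nu_k$ with $D_K(\tilde\mu_k,\mu_k),\,D_K(\tilde\nu_k,\nu_k)<1/k$ and $|\hat f(\tilde\mu_k)-\hat f(\mu_k)|,\,|\hat f(\tilde\nu_k)-\hat f(\nu_k)|<1/k$; by part A and the triangle inequality,
$$
D_*(\tilde\mu_k,\tilde\nu_k)\le D_K(\tilde\mu_k,\mu_k)+D_*(\mu_k,\nu_k)+D_K(\nu_k,\tilde\nu_k)\longrightarrow 0 ,
$$
so Lemma \ref{UC-l} gives $|\hat f(\tilde\mu_k)-\hat f(\tilde\nu_k)|\to0$, whence $|\hat f(\mu_k)-\hat f(\nu_k)|\to0$, contradicting the choice of the sequences.

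For part B, since $D_K$ generates the weak convergence topology on the metrizable space $\P(\H)$, it suffices to check that the metrics $D_*$ and $D_K$ have the same convergent sequences; being metrics, they then generate the same topology. If $\mu_n\to\mu_0$ weakly, then $D_K(\mu_n,\mu_0)\to0$ and hence $D_*(\mu_n,\mu_0)\le D_K(\mu_n,\mu_0)\to0$ by part A. Conversely, if $D_*(\mu_n,\mu_0)\to0$, then for every uniformly continuous (hence bounded) function $g$ on $\S(\H)$ part C gives $\int g\,d\mu_n\to\int g\,d\mu_0$, and the portmanteau theorem --- convergence against all bounded uniformly continuous functions forces weak convergence on a metric space --- yields $\mu_n\to\mu_0$ weakly.

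The main obstacle is the approximation step in part C: the discrete approximants $\tilde\mu_k,\tilde\nu_k$ must be chosen \emph{simultaneously} close to $\mu_k,\nu_k$ in the Kantorovich metric (to keep $D_*(\tilde\mu_k,\tilde\nu_k)$ small) and in the value of $\hat f$ (to transfer the estimate back to $\mu_k,\nu_k$), which is exactly what the boundedness of $f$ --- hence the $D_K$-continuity of $\hat f$ --- together with the density of discrete measures makes possible. Everything else reduces to bookkeeping with the triangle inequality and the facts recorded in part A, namely $D_*\le D_K$ and the metrizability of the weak topology by $D_K$.
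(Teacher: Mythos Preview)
Your proof is correct and follows essentially the same route as the paper's: both argue in the order A, C, B, deriving B from C together with $D_*\le D_K$ via the Portmanteau theorem, and both note that A is already established in the discussion preceding the proposition. The only difference is stylistic: for C the paper observes directly that the uniform-continuity modulus furnished by Lemma~\ref{UC-l} on discrete ensembles transfers verbatim to generalized ensembles by passing to the limit in (\ref{f-metric+}) and using the weak continuity of $\hat f$, whereas you package the same ingredients into a contradiction argument with explicit discrete approximants---both are valid, the paper's version being slightly more direct.
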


\begin{proof} A) The first part of this claim is proved before the proposition. The inequality $D_*\leq D_K$
follows from the same inequality for discrete ensembles (inequality (\ref{d-ineq+})) and the representations (\ref{f-metric+}) and (\ref{K-D-c+}).
\smallskip

B) Since $D_*\leq D_K$ and $D_K$ generates the weak
convergence topology on $\P(\H)$, this claim follows from claim C proved below due to the Portmanteau theorem \cite{Bil+}.   \smallskip

C) This claim  follows from the representations (\ref{f-metric+}) and Lemma \ref{UC-l}.
\end{proof}

\begin{lemma}\label{a-s-l} \emph{The inequalities
\begin{equation}\label{a-s-l+}
\|\bar{\rho}(\mu)-\bar{\rho}(\nu)\|_1\leq2D_X(\mu,\nu),\quad  D_X=D_*,D_K,
\end{equation}
hold for arbitrary generalized ensembles $\mu$ and $\nu$ in $\P(\H)$.}
\end{lemma}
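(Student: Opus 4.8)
The plan is to establish the stronger inequality with $D_X=D_*$; the inequality with $D_X=D_K$ then follows at once, since $D_*(\mu,\nu)\le D_K(\mu,\nu)$ by Proposition \ref{D-star}A.

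I would first handle discrete ensembles. For discrete $\mu$ and $\nu$, formula (\ref{f-metric}) expresses $D_*(\mu,\nu)$ as the infimum of $D_0(\mu',\nu')$ over ordered representatives $\mu'=(p_i,\rho_i)\in\mathcal{E}(\mu)$ and $\nu'=(q_i,\sigma_i)\in\mathcal{E}(\nu)$. Since the average state does not depend on the ordering of an ensemble, $\bar{\rho}(\mu)=\sum_i p_i\rho_i$ and $\bar{\rho}(\nu)=\sum_i q_i\sigma_i$, so the triangle inequality for the trace norm gives
\[
\|\bar{\rho}(\mu)-\bar{\rho}(\nu)\|_1=\Big\|\sum_i(p_i\rho_i-q_i\sigma_i)\Big\|_1\le\sum_i\|p_i\rho_i-q_i\sigma_i\|_1=2D_0(\mu',\nu').
\]
Taking the infimum over $\mu'$ and $\nu'$ yields $\|\bar{\rho}(\mu)-\bar{\rho}(\nu)\|_1\le2D_*(\mu,\nu)$ for all discrete ensembles.

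I would then pass to arbitrary generalized ensembles via the definition (\ref{f-metric+}). Choose sequences $\{\mu_n\}$ and $\{\nu_n\}$ of discrete ensembles weakly converging to $\mu$ and $\nu$, so that $D_*(\mu_n,\nu_n)\to D_*(\mu,\nu)$ by Lemma \ref{D-lemma}; the discrete case gives $\|\bar{\rho}(\mu_n)-\bar{\rho}(\nu_n)\|_1\le2D_*(\mu_n,\nu_n)$ for every $n$. To let $n\to+\infty$ I would use that weak convergence $\mu_n\to\mu$ forces $\bar{\rho}(\mu_n)\to\bar{\rho}(\mu)$ in the weak operator topology — test against the bounded continuous functions $\rho\mapsto\langle\varphi|\rho|\varphi\rangle$ on $\S(\H)$ and polarize — and likewise for $\nu_n$; hence $\bar{\rho}(\mu_n)-\bar{\rho}(\nu_n)$ converges to $\bar{\rho}(\mu)-\bar{\rho}(\nu)$ in the weak-$*$ topology of $\T(\H)\cong\K(\H)^*$, in which the trace norm is lower semicontinuous (being a supremum of the weak-$*$-continuous functionals $A\mapsto|\Tr AK|$ over compact contractions $K$). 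Therefore
\[
\|\bar{\rho}(\mu)-\bar{\rho}(\nu)\|_1\le\liminf_{n\to+\infty}\|\bar{\rho}(\mu_n)-\bar{\rho}(\nu_n)\|_1\le\liminf_{n\to+\infty}2D_*(\mu_n,\nu_n)=2D_*(\mu,\nu).
\]

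The discrete step is entirely routine — just the triangle inequality plus the ordering-independence of the average state. The only delicate point, and hence the main obstacle, is justifying the limit for generalized ensembles: one must know that weak convergence of ensembles entails convergence of average states in a topology strong enough for the trace norm to be lower semicontinuous, which is exactly what the weak-$*$ argument above supplies. An alternative that avoids approximation for the $D_K$ case is to note that for any coupling $\Lambda\in\Pi(\mu,\nu)$ one has $\bar{\rho}(\mu)-\bar{\rho}(\nu)=\int(\rho-\sigma)\,\Lambda(d\rho,d\sigma)$, whence $\|\bar{\rho}(\mu)-\bar{\rho}(\nu)\|_1\le\int\|\rho-\sigma\|_1\,\Lambda(d\rho,d\sigma)$ and, taking the infimum over $\Lambda$, the bound $2D_K(\mu,\nu)$; but the $D_*$ inequality proved above is stronger and already covers this.
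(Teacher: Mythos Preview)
Your proof is correct and follows the same overall architecture as the paper: reduce to $D_*$, handle discrete ensembles first, then pass to generalized ensembles by discrete approximation using the definition (\ref{f-metric+}). The discrete step you give is essentially the argument behind the result the paper cites from \cite[Section V-B]{O&C}.

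The one genuine difference is in the limiting step. The paper invokes the construction from \cite[Appendix, Lemma 1]{H-Sh-2} to choose discrete approximations $\{\mu_n\}$, $\{\nu_n\}$ with \emph{fixed} barycenters $\bar{\rho}(\mu_n)=\bar{\rho}(\mu)$ and $\bar{\rho}(\nu_n)=\bar{\rho}(\nu)$; the left-hand side is then constant along the sequence and no convergence argument is needed. You instead take arbitrary weakly convergent discrete approximations and close the gap via weak-$*$ lower semicontinuity of the trace norm on $\T(\H)\cong\K(\H)^*$. Your route is more self-contained (no external construction), while the paper's route sidesteps any functional-analytic limit on the left-hand side. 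Incidentally, the paper elsewhere uses the continuity of the barycenter map $\mu\mapsto\bar{\rho}(\mu)$ from $\P(\H)$ (weak topology) to $\S(\H)$ (trace norm), which would give you $\bar{\rho}(\mu_n)\to\bar{\rho}(\mu)$ in $\|\cdot\|_1$ directly and make your lower-semicontinuity detour unnecessary; but what you wrote is also fine.
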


\begin{proof} By the inequality $D_*\leq D_K$ it suffices to consider the case $D_X=D_*$.

If  $\mu$ and $\nu$ are discrete ensembles in $\P(\H)$ then the required inequality
is proved in \cite[Section V-B]{O&C} (because $D_*=D_{\mathrm{ehs}}$).

If  $\mu$ and $\nu$ are generalized ensembles in $\P(\H)$ then by using the construction from
the proof of Lemma 1 in \cite[the Appendix]{H-Sh-2} on can obtain sequences $\{\mu_n\}$ and $\{\nu_n\}$
of discrete ensembles in $\P(\H)$ weakly converging, respectively, to $\mu$ and $\nu$ such that
$\bar{\rho}(\mu_n)=\bar{\rho}(\mu)$ and $\bar{\rho}(\nu_n)=\bar{\rho}(\nu)$ for all $n$. By the
above remark, the inequality (\ref{a-s-l+}) holds with $D_X(\mu,\nu)$
replaced by $D_*(\mu_n,\nu_n)$ for all $n$.

Since $D_*(\mu_n,\nu_n)$ tends to $D_*(\mu,\nu)$ as $n\to+\infty$  (by the definition of $D_*$),
the inequality (\ref{a-s-l+}) holds with $D_X(\mu,\nu)=D_*(\mu,\nu)$.
\end{proof}

Although the metric $D_*$ for generalised ensembles is defined in a non-explicit way, sometimes it may be more useful
than the Kantorovich metric $D_K$. In particular, in some cases the metric $D_*$ (unlike $D_K$) can be simply estimated.\footnote{Another advantage of the metric $D_*$ (in comparison with $D_K$) is shown in Proposition \ref{nu-l} below.}\smallskip

\begin{example}\label{D-star-e} Let $\rho(x)$ and $\sigma(x)$ be uniformly continuous $\S(\H)$-valued functions
on a complete separable metric space $X$. Let $\pi$ be a Borel nonnegative measure $\pi$ on $X$.

Consider generalized ensembles $\mu$ and $\nu$ in $\P(\H)$ determined, respectively, by the images under the maps $x\mapsto\rho(x)$ and
$x\mapsto\sigma(x)$ of some Borel probability measures $\mu_{X}$ and $\nu_{X}$ on $X$ absolutely continuous w.r.t. $\pi$.
In terms of Section 6 below, these ensembles can be denoted by $\{\mu_{X},\rho(x)\}_{x\in X}$ and $\{\nu_{X},\sigma(x)\}_{x\in X}$, respectively.

Then
\begin{equation}\label{D-star-e+}
D_*(\mu,\nu)\leq \frac{1}{2}\int_X\|p(x)\rho(x)-q(x)\sigma(x)\|_1\pi(dx),
\end{equation}
where $p(x)$ and $q(x)$ are the probability densities of $\mu_{X}$ and $\nu_{X}$  w.r.t. $\pi$.

Note that the r.h.s. of (\ref{D-star-e+}) can be treated as a "continuous" version of the $D_0$-metric defined in (\ref{D-0-metric}).

To prove (\ref{D-star-e+}) assume for each natural $n$ that $X=\bigcup_iX_i^n$ is a countable decomposition of $X$ into disjoint Borel subsets with the diameter not exceeding $1/n$.
Consider the sequences of discrete ensembles $\mu_n=\{p_i^n,\rho_i^n\}$ and $\nu_n=\{q_i^n,\sigma_i^n\}$, where
$p_i^n=\int_{X_i^n}p(x)\pi(dx)$, $q_i^n=\int_{X_i^n}q(x)\pi(dx)$ and $\rho_i^n=(1/p_i^n)\int_{X_i^n}p(x)\rho(x)\pi(dx)$, $\sigma_i^n=(1/q_i^n)\int_{X_i^n}q(x)\sigma(x)\pi(dx)$ (the Bochner integrals) for all $i$. If $p_i^n=0$ (resp. $q_i^n=0$) we assume that
$\rho_i^n$ (resp. $\sigma_i^n$) is an arbitrary state.

By the basic property of the Bochner integral we have
\begin{equation}\label{D-star-e++}
2D_*(\mu_n,\nu_n)\leq \sum_i \|
p_i^n\rho_i^n-q_i^n\sigma_i^n\|_1\leq\int_X\|p(x)\rho(x)-q(x)\sigma(x)\|_1\pi(dx)\quad \forall n.
\end{equation}
So, to prove (\ref{D-star-e+}) it suffices to show that the sequences $\{\mu_n\}$ and
$\{\nu_n\}$ weakly converge, respectively, to the  ensembles $\mu$ and $\nu$.

By symmetry, we have to consider only the sequence $\{\mu_n\}$.  By the Portmanteau theorem \cite{Bil+} to prove the weak convergence of $\{\mu_n\}$ to $\mu$
one should show that $\sum_ip_i^nf(\rho_i^n)$ tends to $\int_{\S(\H)}f(\rho)\mu(d\rho)$ for
any uniformly continuous function $f$ on $\S(\H)$.

For each $n$ and $i$ the set $\rho(X_i^n)$ lies within  some ball $\B_i^n\subset\S(\H)$ with  the diameter $\omega_\rho(1/n)$,
where  $\omega_{\rho}$ is the modulus of continuity of the function $\rho(x)$. So, since the set $\B_i^n$ is convex,
the state  $\rho_i^n$ belongs to $\B_i^n$. It follows that
$|f(\rho(x))-f(\rho_i^n)|\leq\omega_f(\omega_\rho(1/n))$ for all $x\in X_i^n$, where $\omega_f$ is the modulus of continuity of the function $f$. Hence, we have
$$
\begin{array}{c}
\displaystyle\left|\sum_ip_i^nf(\rho_i^n)-\int_{\S(\H)}f(\rho)\mu(d\rho)\right|=\left|\sum_ip_i^nf(\rho_i^n)-\int_{X}f(\rho(x))p(x)\pi(dx)\right|\\\\
\displaystyle\leq \sum_i \int_{X_i^n}|f(\rho_i^n)-f(\rho(x))|\,p(x)\pi(dx)\leq \sum_i\omega_f(\omega_\rho(1/n))\mu_{X}(X_i^n)=\omega_f(\omega_\rho(1/n)).
\end{array}
$$
Since $\omega_f(\omega_\rho(1/n))$ tends to zero as $n\to+\infty$, the required relation is proved.

Note that the simple upper bound (\ref{D-star-e+}) is not valid for the Kantorovich distance $D_K(\mu,\nu)$.
This can be deduced from the fact the Kantorovich distance for discrete ensembles is not majorized by the $D_0$-metric (Example \ref{D-0-K-e}).
\end{example}
\medskip

An important technical task is  the following: for a given ensemble $\mu$ and a state $\sigma$ close to the average state
$\bar{\rho}(\mu)$ of $\mu$ to find an ensemble $\nu$ with the average state $\sigma$ which
is as close to the ensemble $\mu$ as possible (w.r.t. some adequate metric). The following lemma shows that this can be done
for both metrics $D_*$ and $D_K$, but the use of $D_*$ gives a particular advantage (in comparison with $D_K$).\smallskip

\begin{proposition}\label{nu-l} \emph{Let $\mu$ be a generalized ensemble in $\P(\H)$ and $\sigma$ be a state in $\S(\H)$.}\smallskip

A) \emph{If $\,\delta\geq \sqrt{1-F(\bar{\rho}(\mu),\sigma)}$ then there is an ensemble $\nu\in\P(\H)$ such that $\bar{\rho}(\nu)=\sigma$ and $\,D_*(\mu,\nu)\leq \delta$. The following claims are valid (separately or simultaneously):}\footnote{$F(\cdot,\cdot)$ is the fidelity defined in (\ref{fidelity}).}\smallskip
\begin{itemize}
  \item \emph{if $\,\delta>\sqrt{1-F(\bar{\rho}(\mu),\sigma)}$ and the ensemble $\mu$ is discrete then the ensemble $\nu$ can be chosen discrete;}
  \item \emph{if the ensemble $\mu$  consists of pure states then the ensemble $\nu$ can be chosen consisting of pure states.}
\end{itemize}

B) \emph{If $\,\delta\geq\sqrt{\varepsilon(2-\varepsilon)}$, where $\varepsilon=\frac{1}{2}\|\bar{\rho}(\mu)-\sigma\|_1$,  then  there is an ensemble $\nu\in\P(\H)$ such that $\bar{\rho}(\nu)=\sigma$ and $\,D_K(\mu,\nu)\leq \delta$.} \emph{If the ensemble $\mu$ is discrete then the ensemble $\nu$ can be chosen discrete with the same probability distribution.}

\end{proposition}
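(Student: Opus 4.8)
The plan is to derive both parts from Uhlmann's theorem, handling discrete ensembles first and then lifting to generalized ones; the two parts need different constructions because in B the probabilities of a discrete ensemble must be preserved, which costs a weaker bound.

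\emph{Part A.} Using (\ref{F-Tn-eq}) I would fix purifications $|\varphi\rangle$ of $\bar\rho(\mu)$ and $|\psi\rangle$ of $\sigma$ in a common $\H\otimes\H_R$ with $\|\,|\varphi\rangle\langle\varphi|-|\psi\rangle\langle\psi|\,\|_1=2\sqrt{1-F(\bar\rho(\mu),\sigma)}$. For discrete $\mu=\{p_i,\rho_i\}$ the purification/HJW construction gives a POVM $\{M_i\}$ on $\H_R$ with $p_i\rho_i=\Tr_R[(I\otimes M_i)|\varphi\rangle\langle\varphi|]$; let $\nu=\{q_i,\sigma_i\}$ be the posterior ensemble of this measurement performed on $|\psi\rangle$, so that $\bar\rho(\nu)=\Tr_R|\psi\rangle\langle\psi|=\sigma$. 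The map $X\mapsto\sum_i\Tr_R[(I\otimes M_i)X]\otimes|i\rangle\langle i|$ is a channel carrying $|\varphi\rangle\langle\varphi|$ and $|\psi\rangle\langle\psi|$ to the q-c states of the ordered ensembles $\mu,\nu$, so by contractivity of channels in trace norm and (\ref{d-ineq}), $D_*(\mu,\nu)\le D_0(\mu,\nu)\le\sqrt{1-F(\bar\rho(\mu),\sigma)}\le\delta$. A discrete $\mu$ gives a discrete $\{M_i\}$, hence a discrete $\nu$; the strict inequality is only used to cover a possible non-attainment of the Uhlmann optimum in infinite dimensions, where one replaces $|\psi\rangle$ by a near-optimal purification. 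If the $\rho_i$ are pure the POVM can be taken rank-one and, measuring the whole reference of $|\psi\rangle$, the posterior states are pure. For a generalized $\mu$ I would replace $\{M_i\}$ by the observable on $\H_R$ whose posterior ensemble on $|\varphi\rangle$ is $\mu$, take $\nu$ to be its posterior ensemble on $|\psi\rangle$, and obtain the $D_*$-bound by approximating $\mu$ and $\nu$ by their cell-averaged discrete ensembles (as in Example~\ref{D-star-e}) and letting the cell size go to $0$ in the defining limit (\ref{f-metric+}).

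\emph{Part B.} Since the measurement of Part~A distorts the probabilities, the crucial step is the lemma: there exist states $\sigma_i$ with $\sum_i p_i\sigma_i=\sigma$ and $\sum_i p_i\sqrt{F(\rho_i,\sigma_i)}\ge 1-\varepsilon$, where $\varepsilon=\tfrac12\|\bar\rho(\mu)-\sigma\|_1$. Granting it, $\nu=\{p_i,\sigma_i\}$ has $\bar\rho(\nu)=\sigma$ with the same distribution, and by Lemma~\ref{D-0-K} together with (\ref{F-Tn-ineq}) ($\tfrac12\|\rho_i-\sigma_i\|_1\le\sqrt{1-F(\rho_i,\sigma_i)}$), concavity of $t\mapsto\sqrt{1-t}$ and Cauchy--Schwarz, one gets $D_K(\mu,\nu)\le D_0(\mu,\nu)=\tfrac12\sum_i p_i\|\rho_i-\sigma_i\|_1$ and then $D_K(\mu,\nu)^2\le 1-\bigl(\sum_i p_i\sqrt{F(\rho_i,\sigma_i)}\bigr)^2\le 1-(1-\varepsilon)^2=\varepsilon(2-\varepsilon)\le\delta^2$. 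To prove the lemma I would invoke the semidefinite characterization $\sqrt{F(\rho,\sigma)}=\max\{\operatorname{Re}\Tr X:\left(\begin{smallmatrix}\rho&X\\X^\ast&\sigma\end{smallmatrix}\right)\ge0\}$, fix an optimal (or near-optimal) $X_0$ with $\operatorname{Re}\Tr X_0\ge1-\varepsilon$ (using $\sqrt{F(\bar\rho(\mu),\sigma)}\ge1-\varepsilon$ from (\ref{F-Tn-ineq})), and refine the positive operator $M=\left(\begin{smallmatrix}\bar\rho(\mu)&X_0\\X_0^\ast&\sigma\end{smallmatrix}\right)$ along the decomposition $\bar\rho(\mu)=\sum_i p_i\rho_i$ into positive operators $M_i=\left(\begin{smallmatrix}\rho_i&X_i\\X_i^\ast&\sigma_i\end{smallmatrix}\right)$ with $\sum_i p_iM_i=M$ (the usual freedom in decomposing a positive operator matching a given decomposition of one diagonal block). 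Then $\sum_i p_iX_i=X_0$ and, by the same SDP, $\operatorname{Re}\Tr X_i\le\sqrt{F(\rho_i,\sigma_i)}$, so $\sum_i p_i\sqrt{F(\rho_i,\sigma_i)}\ge\operatorname{Re}\Tr X_0\ge1-\varepsilon$. A generalized $\mu$ is handled by the measurable-family version of this refinement, with $\nu$ the image of $\mu$ under $\rho'\mapsto\sigma_{\rho'}$ and $D_K(\mu,\nu)\le\tfrac12\int\|\rho'-\sigma_{\rho'}\|_1\mu(d\rho')$ from the coupling $\rho'\mapsto(\rho',\sigma_{\rho'})$.

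\emph{Main obstacle.} I expect the delicate point to be the lemma, specifically ensuring that the second diagonal blocks $\sigma_i$ of the refined $M_i$ are genuine states: the naive refinement of $M$ yields sub-normalized $\sigma_i$ (with $\sum_i p_i\Tr\sigma_i=1$ but individual traces $\ne1$), which would change the distribution. This should be repaired either by choosing the refinement so that the reference marginals of $|\varphi\rangle$ and $|\psi\rangle$ are matched under the induced POVM, or by a direct transportation-type construction of the $\sigma_i$ based on the splitting $\bar\rho(\mu)=\bar\rho(\mu)\wedge\sigma+\varepsilon\hat R$, $\sigma=\bar\rho(\mu)\wedge\sigma+\varepsilon\hat S$ with $\hat R\perp\hat S$; the lemma is immediate when $\sigma$ is pure (then $\sigma_i=\sigma$, $F(\rho_i,\sigma)$ is linear in $\rho_i$, and $\sum_i p_iF(\rho_i,\sigma)=F(\bar\rho(\mu),\sigma)\ge(1-\varepsilon)^2$ already suffices), which makes the general statement plausible. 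The remaining burden is the measurable-selection and weak-convergence bookkeeping for generalized ensembles, and the infinite-dimensional attainability issues responsible for the strict inequality in Part~A.
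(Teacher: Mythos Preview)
Your Part~A is essentially the paper's argument: HJW/POVM on a purification of $\bar\rho(\mu)$, Uhlmann-optimal purification of $\sigma$, contractivity of the induced channel, then weak-limit/approximation bookkeeping for the generalized case.

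Part~B is where you diverge from the paper, and your own diagnosis of the obstacle is correct --- it is a genuine gap. The block-matrix refinement you propose cannot in general be arranged so that both diagonal blocks are prescribed: decomposing $M=\left(\begin{smallmatrix}\bar\rho(\mu)&X_0\\X_0^\ast&\sigma\end{smallmatrix}\right)$ as $\sum_i p_iM_i$ with top-left blocks equal to the given $\rho_i$ does not force the bottom-right blocks to be states (and forcing them to average to $\sigma$ with trace~$1$ each is exactly the constraint you cannot meet without extra input). Your repair suggestions (matching reference marginals, or a transportation splitting $\bar\rho(\mu)\wedge\sigma+\varepsilon\hat R$) are pointing at the right phenomenon but are not proofs.

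The paper's route avoids this entirely by invoking Winter's \emph{quantum coupling lemma} (Proposition~5 in \cite{W-CB}): given the pure purification $\hat\rho$ of $\bar\rho(\mu)$, it produces a (generally mixed) state $\hat\sigma$ on $\H\otimes\H_R$ with $\Tr_R\hat\sigma=\sigma$, $\tfrac12\|\hat\rho-\hat\sigma\|_1\le\sqrt{\varepsilon(2-\varepsilon)}$, and crucially $\hat\sigma_R=\hat\rho_R$. The last equality is exactly what forces $q_i=\Tr M_i\hat\sigma_R=\Tr M_i\hat\rho_R=p_i$, so the posterior ensemble of the \emph{same} POVM applied to $\hat\sigma$ has the original probabilities, and $D_K(\mu,\nu)\le D_0(\mu,\nu)\le\tfrac12\|\hat\rho-\hat\sigma\|_1$ by Lemma~\ref{D-0-K} and channel contractivity. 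Your fidelity-splitting lemma, if it were true, would give the same bound via the Cauchy--Schwarz step you wrote (which is fine), but you would essentially need Winter's coupling lemma --- or an equivalent --- to prove it; the SDP refinement alone does not do it.
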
\smallskip

\begin{proof} In the proofs of both claims A and B it suffices to consider the case $\delta<1$.

A) Assume first that  $\mu=\{p_i,\rho_i\}$ is a discrete ensemble\footnote{The proof of claims of Proposition \ref{nu-l} for discrete ensemble uses the standard arguments \cite{Nielsen,W-CB}.} in $\P(\H)$. Then the Schrodinger-Gisin-Hughston-Jozsa-Wootters theorem (cf.\cite{Schr,Gisin,HJW}) implies that  $p_i\rho_i=\Tr_R [I_{\H}\otimes M_i]\shs\hat{\rho}\,$ for all $i$, where $\hat{\rho}$ is a pure state in $\S(\H\otimes\H_R)$
such that $\rho =\Tr_R\shs\hat{\rho}$ and $\{M_i\}$ is the corresponding POVM
on appropriate Hilbert space $\H_R$.

Let $n$ be a natural number such that $1/n<1-\delta$. As $F(\bar{\rho}(\mu),\sigma)\geq 1-\delta^2$, it follows from (\ref{F-Tn-eq}) that there is a pure state  $\hat{\sigma}_n$ in $\S(\H\otimes\H_R)$ such that $\sigma=\Tr_R\shs\hat{\sigma}_n$ and $\,\frac{1}{2}\|\shs\hat{\rho}-\hat{\sigma}_n\|_1\leq\delta+1/n$.

Let $\nu_n=\{q^n_i,\sigma^n_i\}$, where $q^n_i=\Tr[I_{\H}\otimes M_i]\shs\hat{\sigma}_n\,$ and $\sigma^n_i=(1/q^n_i)\Tr_R [I_{\H}\otimes M_i]\shs\hat{\sigma}_n\,$ 
(if $q^n_i=0$ then we assume that $\sigma^n_i$ is any given state). Then $\bar{\rho}(\nu_n)=\sigma$.
Consider the channel $\,\Lambda(\varrho)=\sum_i[\Tr M_i\varrho]|i\rangle\langle i|\,$ from $R$ to an infinite-dimensional system $E$, where  $\{|i\rangle\}$ is a basis
in $\H_E$. Then
$$
\sum_i p_i\rho_i\otimes|i\rangle\langle i|=I_{\H}\otimes\Lambda(\hat{\rho})\quad \textrm{and}\quad \sum_i q^n_i\sigma^n_i\otimes|i\rangle\langle i|=I_{\H}\otimes\Lambda(\hat{\sigma}_n)
$$
and hence inequality (\ref{d-ineq}) implies that
$$
\begin{array}{c}
2D_*(\mu,\nu_n)\leq2D_0((p_i,\rho_i),(q^n_i,\sigma^n_i))\doteq\sum_i\|p_i\rho_i-q^n_i\sigma^n_i\|_1=\left\|I_{\H}\otimes\Lambda(\hat{\rho})-I_{\H}\otimes\Lambda(\hat{\sigma}_n)\right\|_1\\\\
\leq\|\shs\hat{\rho}-\hat{\sigma}_n\|_1\leq2(\delta+1/n)
\end{array}
$$
by the monotonicity of the trace norm under action of the channel $I_{\H}\otimes\Lambda$. \smallskip

Since $\,\bar{\rho}(\nu_n)=\sigma\,$ for all $n$, Proposition 2 in \cite{H-Sh-2} implies
the relative compactness of the sequence $\{\nu_n\}$ in the weak convergence topology. Thus, there is a subsequence $\{\nu_{n_k}\}$ weakly converging to some
ensemble $\nu\in\P(\H)$ such that $\bar{\rho}(\nu)=\sigma$. Then
$$
D_*(\mu,\nu)=\lim_{n\to+\infty}D_*(\mu,\nu_n)\leq \lim_{n\to+\infty}(\delta+1/n)=\delta.
$$

If the ensemble $\mu$ consists of pure states then the above POVM $\{M_i\}$ can be chosen consisting of 1-rank operators. It follows that
the above sequence $\{\nu_n\}$ consists  of pure states ensembles (since $\hat{\sigma}_n$ is a pure state for each $n$).
Thus, the above limit ensemble $\nu$ consists of pure states, because the set of all pure states ensembles is closed in the weak convergence topology.\footnote{In other words, the set of all probability measures supported by pure states is closed in the weak convergence topology. This follows from the closedness of the set of pure states and the basic results concerning the weak convergence of probability measures \cite{Bil+,Bog}.} \smallskip

If $\,\delta>\sqrt{1-F(\bar{\rho}(\mu),\sigma)}$ then we may take any discrete ensemble $\nu_n$ for $n$ large enough in the role of $\nu$ (instead of the limit point of the sequence $\{\nu_n\}$ that may be a non-discrete ensemble).

Assume now that $\mu$ is a generalized ensemble in $\P(\H)$. Then there is a sequence $\{\mu_n\}$ of discrete ensembles weakly converging to $\mu$.
By the first part of the proof for each $n$ there exists an ensemble $\nu_n\in\P(\H)$ such that $\bar{\rho}(\nu_n)=\sigma$ and $D_*(\mu_n,\nu_n)\leq \delta_n$,
where $\delta_n \doteq \sqrt{1-F(\bar{\rho}(\mu_n),\sigma)}$. Since $\,\bar{\rho}(\nu_n)=\sigma\,$ for all $n$, by applying Proposition 2 in \cite{H-Sh-2} again one can  show the existence of a subsequence $\{\nu_{n_k}\}$ weakly converging to some
ensemble $\nu\in\P(\H)$ such that $\bar{\rho}(\nu)=\sigma$. Then
$$
D_*(\mu,\nu)=\lim_{n\to+\infty}D_*(\mu_n,\nu_n)\leq \lim_{n\to+\infty}\delta_n=\delta,
$$
as the sequence $\{\bar{\rho}(\mu_n)\}$ tends to $\bar{\rho}(\mu)$ by the continuity of the barycenter map.

If $\mu$ is an ensemble of pure states (i.e. $\mu$ is a measure supported by the set of pure states) then the above sequence $\{\mu_n\}$ can be chosen consisting of pure states  ensembles.
By the first  part of the proof the above sequence $\{\nu_n\}$ can be chosen consisting of  pure states ensembles as well. Thus, the
above limit ensemble $\nu$ consists of pure states, because the set of all pure states ensembles is closed in the weak convergence topology.

B) This claim is proved by applying the arguments from the proof of part A  with the use of Winter's "quantum coupling lemma" presented in Proposition 5 in \cite{W-CB}
instead of the relation (\ref{F-Tn-eq}). By this result the condition $\,\delta\geq\sqrt{\varepsilon(2-\varepsilon)}$
allows us to choose (within the proof of part A) a state $\hat{\sigma}$ in $\S(\H\otimes\H_R)$ such that $\sigma=\Tr_R\shs\hat{\sigma}$, $\,\frac{1}{2}\|\shs\hat{\rho}-\hat{\sigma}\|_1\leq\delta$ and $\hat{\sigma}_R=\hat{\rho}_R$.  The last property  implies
that the  ordered ensemble $\nu$ defined via the state $\hat{\sigma}$ by the same way as the ensemble $\nu_n$ via the state $\hat{\sigma}_n$ 
(in the proof of part A) has the same probability distribution as the ordered ensemble $\mu$, i.e. $q_i=p_i$ for all $i$. So, in follows from Lemma \ref{D-0-K} that $D_K(\mu,\nu)\leq D_0(\mu,\nu)\leq\frac{1}{2}\|\shs\hat{\rho}-\hat{\sigma}\|_1\leq \delta$.

If $\mu$ is a generalized ensemble then we repeat the corresponding arguments from the proof of part A with the use of $D_K$ instead of $D_*$  assuming that $\,\delta_n=\sqrt{\varepsilon_n(2-\varepsilon_n)}$.
\end{proof}

The advantage of the metric $D_*$ mentioned before the proposition consists in the fact that claim A
of Proposition \ref{nu-l} is \emph{stronger} than claim B by the following reasons:
\begin{itemize}
  \item the condition
$\,\delta\geq \sqrt{1-F(\bar{\rho}(\mu),\sigma)}$ is weaker than  the condition $\,\delta\geq\sqrt{\varepsilon(2-\varepsilon)}$
(this follows from the first inequality in (\ref{F-Tn-ineq}));
  \item claim B does not assert that the ensemble $\nu$ consists of pure states provided that the ensemble $\mu$ consists of pure states (because the  state $\hat{\sigma}$
   produced by Winter's quantum coupling lemma in the proof of claim B is not pure in general).
\end{itemize}
We don't know how to prove claim A of Proposition \ref{nu-l} with $D_*$ replaced by $D_K$ (because $D_*$ may be strictly less than $D_K$).\smallskip

\begin{remark}\label{st-r}
Proposition \ref{nu-l} gives a quantitative description of the openness on the barycenter map
$$
\P(\H)\ni\mu\mapsto \bar{\rho}(\mu)\in\S(\H),
$$
in terms of the metrics $D_*$ and $D_K$ generating the topology on $\P(\H)$. This property is one of the equivalent characterisations of the \emph{stability} of the convex set $\S(\H)$ \cite{EM,stability}.
Claim A of Proposition \ref{nu-l} also quantifies the openness on the map
$$
\widehat{\P}(\H)\ni\mu\mapsto \bar{\rho}(\mu)\in\S(\H),
$$
where $\widehat{\P}(\H)$ is the closed subset of $\P(\H)$ consisting of pure states ensembles. Hence, \emph{$D_*$  is the only metric 
on $\P(\H)$ that allows us to quantify the openness on the last map.}
\end{remark}

\subsubsection{Average passive energy of ensemble of quantum states}

Assume that $H$ is a positive operator on a Hilbert space $\H$ having the form (\ref{H-form}) with the discrete  spectrum $\{E_i\}_{i=0}^{+\infty}$ arranged in the non-decreasing order. We will treat $H$ as the Hamiltonian (energy observable) of a quantum system described by the space $\H$ and write $E_H(\rho)$ for the (mean) energy $\Tr H \rho$ of a state $\rho$ defined in (\ref{H-fun}).
Define the \emph{passive energy} of
any positive operator $\rho$ in $\T(\H)$ as follows
\begin{equation}\label{p-e-def}
E_{H}^{\rm psv}(\rho)\doteq\sum_{i=0}^{+\infty}E_i\lambda_i^{\rho}\leq+\infty,
\end{equation}
where $\{\lambda^{\rho}_i\}_{i=0}^{+\infty}$ is the sequences of eigenvalues of $\rho$ arranged in the non-increasing order. We use the term "passive", since $E_{H}^{\rm psv}(\rho)$ for a state $\rho$
coincides with the energy $E_{H}(\rho_H^{\rm psv})$ of any passive state $\rho_H^{\rm psv}$ corresponding to $\rho$ \cite{Erg-2,p-state}.\footnote{$\rho_H^{\rm psv}=\sum_{i=0}^{+\infty}\lambda^{\rho}_i|\tau_i\rangle\langle\tau_i|$, where  $\{\tau_i\}_{i=0}^{+\infty}$ is the basic from representation (\ref{H-form}) of $H$.}

It is well known that  $E_{H}^{\rm psv}(\rho)\leq E_{H}(\rho)$, the difference $E_{H}(\rho)-E_{H}^{\rm psv}(\rho)$ called \emph{ergotropy} has been studied a lot in recent years \cite{Erg-1,Erg-2,Erg-3}.

Assume that the operator $H$ satisfies the Gibbs condition (\ref{H-cond}). Since $S(\rho_H^{\rm psv})=S(\rho)$ for any state $\rho$ and  $\Tr H\rho_H^{\rm psv}=E_{H}^{\rm psv}(\rho)$, the well known facts
(described in Section 2.1) imply that
\begin{equation}\label{P-B}
S(\rho)\leq F_H(E_{H}^{\rm psv}(\rho))\doteq S(\gamma_H(E_{H}^{\rm psv}(\rho))),
\end{equation}
where $\gamma_H(E)$ is the Gibbs state (\ref{Gibbs}).

By using Mirsky inequality (\ref{Mirsky-ineq+}) it is easy to show that the function $\rho\mapsto E_{H}^{\rm psv}(\rho)$ is lower semicontinuous (and, hence, Borel) on $\S(\H)$. So, for
any generalized ensemble $\mu$ we may define the \emph{average passive energy} of $\mu$ as
\begin{equation}\label{p-a-e-def}
\bar{E}_{H}^{\rm psv}(\mu)\doteq \int_{\S(\H)}E_{H}^{\rm psv}(\rho)\mu(d\rho)\leq+\infty.
\end{equation}
In particular, if $\mu=\{p_k,\rho_k\}$ is any discrete ensemble then
\begin{equation*}
\bar{E}_{H}^{\rm psv}(\mu)\doteq\sum_kp_kE_{H}^{\rm psv}(\rho_k).
\end{equation*}

The lower semicontinuity of the nonnegative function $\rho\mapsto E_{H}^{\rm psv}(\rho)$ on $\S(\H)$ mentioned before
and the Portmanteau theorem \cite{Bil+} imply the following \smallskip

\begin{lemma}\label{PE-LSC}
\emph{The functional $\,\mu\mapsto\bar{E}_{H}^{\rm psv}(\mu)\,$ is lower semicontinuous on $\P(\H)$.}\footnote{$\P(\H)$ is the set of all generalized ensembles
of states in $\S(\H)$ equipped with the weak convergence topology (see Section 2.2.2).}
\end{lemma}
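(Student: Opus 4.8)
The plan is to deduce the lemma from the fact, established above, that $\rho\mapsto E_{H}^{\rm psv}(\rho)$ is a nonnegative lower semicontinuous function on $\S(\H)$, via the standard principle that the integral of a nonnegative lower semicontinuous function against a probability measure is a lower semicontinuous functional of that measure in the weak convergence topology. Since $\P(\H)$ is a complete separable metric space (and its topology is generated, e.g., by the metric $D_K$, or by $D_*$ of Proposition \ref{D-star}), lower semicontinuity on $\P(\H)$ is equivalent to sequential lower semicontinuity. So it suffices to take a sequence $\{\mu_n\}\subset\P(\H)$ weakly converging to $\mu_0\in\P(\H)$ and prove that $\liminf_{n\to+\infty}\bar{E}_{H}^{\rm psv}(\mu_n)\geq\bar{E}_{H}^{\rm psv}(\mu_0)$.

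The key step is to approximate $E_{H}^{\rm psv}$ from below by continuous bounded functions. I would use the truncated passive energies
\begin{equation*}
E_{H,N}^{\rm psv}(\rho)\doteq\sum_{i=0}^{N}E_i\lambda_i^{\rho},\qquad N\in\N .
\end{equation*}
By Abel summation, $E_{H,N}^{\rm psv}(\rho)=E_N\shs\Sigma_N(\rho)+\sum_{i=0}^{N-1}(E_i-E_{i+1})\Sigma_i(\rho)$, where $\Sigma_i(\rho)\doteq\sum_{j=0}^{i}\lambda_j^{\rho}=\sup\{\Tr P\rho\,:\,P=P^*=P^2,\ \rank P\leq i+1\}$ is the sum of the $i+1$ largest eigenvalues of $\rho$. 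Each $\Sigma_i$ is a supremum of $1$-Lipschitz (w.r.t. $\|\cdot\|_1$) affine functions, hence $1$-Lipschitz, continuous and bounded by $1$; since $E_i-E_{i+1}\leq 0$ and $E_N\geq 0$, the function $E_{H,N}^{\rm psv}$ is continuous and takes values in $[0,E_N]$. Moreover $E_{H,N+1}^{\rm psv}-E_{H,N}^{\rm psv}=E_{N+1}\lambda_{N+1}^{\rho}\geq 0$, so the sequence $\{E_{H,N}^{\rm psv}\}_N$ is nondecreasing, and by (\ref{p-e-def}) it converges pointwise to $E_{H}^{\rm psv}$ as the monotone limit of partial sums of a nonnegative series.

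The conclusion then follows by a routine limiting argument. For each fixed $N$, the weak convergence $\mu_n\to\mu_0$ together with the continuity and boundedness of $E_{H,N}^{\rm psv}$ gives $\int E_{H,N}^{\rm psv}\,d\mu_n\to\int E_{H,N}^{\rm psv}\,d\mu_0$, while $E_{H,N}^{\rm psv}\leq E_{H}^{\rm psv}$ implies $\bar{E}_{H}^{\rm psv}(\mu_n)\geq\int E_{H,N}^{\rm psv}\,d\mu_n$ for all $n$. Hence
\begin{equation*}
\liminf_{n\to+\infty}\bar{E}_{H}^{\rm psv}(\mu_n)\ \geq\ \lim_{n\to+\infty}\int_{\S(\H)}E_{H,N}^{\rm psv}(\rho)\,\mu_n(d\rho)\ =\ \int_{\S(\H)}E_{H,N}^{\rm psv}(\rho)\,\mu_0(d\rho),
\end{equation*}
and letting $N\to+\infty$ and applying the monotone convergence theorem to the right-hand side yields $\liminf_{n}\bar{E}_{H}^{\rm psv}(\mu_n)\geq\bar{E}_{H}^{\rm psv}(\mu_0)$ (the bound being trivially true when the right-hand side is $+\infty$). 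I do not expect a genuine obstacle here; the only point needing a moment's care is the continuity of the truncated functionals, since $\rho\mapsto\lambda_i^{\rho}$ is continuous but not affine — this is resolved by the variational (Ky--Fan-type) formula for $\Sigma_i$ used above. Alternatively one could bypass the explicit construction and take the Moreau--Yosida regularizations $\rho\mapsto\min\{N,\inf_{\sigma\in\S(\H)}[\shs E_{H}^{\rm psv}(\sigma)+N\|\rho-\sigma\|_1]\}$, which are $N$-Lipschitz, bounded, and increase to $E_{H}^{\rm psv}$ by its lower semicontinuity, and run the same argument; or simply invoke this monotone-approximation statement as the "lower semicontinuous" part of the Portmanteau theorem, as indicated before the lemma.
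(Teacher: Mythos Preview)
Your proof is correct and follows essentially the same approach as the paper: the paper simply invokes the Portmanteau theorem together with the previously established lower semicontinuity and nonnegativity of $\rho\mapsto E_H^{\rm psv}(\rho)$, which is exactly the principle you unpack explicitly via the monotone approximation by the continuous bounded truncations $E_{H,N}^{\rm psv}$. You even note this coincidence at the end of your argument, so there is nothing to add.
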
\smallskip

By using upper bound (\ref{P-B}) and definition (\ref{p-a-e-def}) we obtain
\begin{equation}\label{P-B+}
\int_{\S(\H)}S(\rho)\mu(d\rho)\leq F_H(\bar{E}_{H}^{\rm psv}(\mu)).
\end{equation}

It is clear that $\bar{E}_{H}^{\rm psv}(\mu)\leq\bar{E}_{H}(\mu)\doteq E_{H}(\bar{\rho}(\mu))$. For example, if $\mu$ is an ensemble of pure states
then $\bar{E}_{H}^{\rm psv}(\mu)=E_0\doteq \inf_{\|\varphi\|=1}\langle\varphi|H|\varphi\rangle$, while  $\bar{E}_{H}(\mu)$ may be equal to $+\infty$.
\smallskip

\begin{example}\label{APE-ex} Assume that $\H$ is a Hilbert space describing  one-mode quantum oscillator and $H=\hat{N}\doteq a^{\dagger}a$ is the number operator on $\H$.
Let $\mu=\{\pi_{\mu},\rho_\zeta\}_{\zeta\in\mathbb{C}}$ be the  ensemble determined
by the probability measure
$\pi_{\mu}$ on $\mathbb{C}$ with the density $p_{\mu}(\zeta)=\frac{1}{\pi N}\exp(-|\zeta|^2/N)$ and the family of states $\rho_\zeta=D(\zeta)\gamma_{\hat{N}}(N_0)D^*(\zeta)$,
where $D(\zeta)$ is the displacement (unitary) operator and $\gamma_{\hat{N}}(N_0)$ is
the Gibbs state corresponding to the
"energy" (number of quanta) $N_0$ \cite[Ch.12]{H-SCI}.\footnote{Here we use the concept of ensemble
described at the begin of Section 6 below. Its relation with the the above-described concept of ensemble is explained Section 6.}

Since any state $\rho_\zeta$ is unitarily equivalent to the state $\gamma_{\hat{N}}(N_0)$ we have
$E_{H}^{\rm psv}(\rho_\zeta)=N_0$. Hence, $\bar{E}_{H}^{\rm psv}(\mu)=N_0$. At the same time,  it is easy to show that $\bar{\rho}(\mu)=\gamma_{\hat{N}}(N+N_0)$.
So, the average energy $E_{H}(\bar{\rho}(\mu))$ of $\mu$ is equal to $N+N_0>N_0$. $\Box$
\end{example}\smallskip

The notion of the average passive energy of ensemble allows us to essentially improve the semicontinuity bound for the quantum conditional entropy of quantum-classical states
with the energy-type constraint obtained in \cite{LCB} (see Lemma \ref{qce-qc-1} in Section 2.3).

\subsection{Semicontinuity bounds for the quantum conditional entropy of quantum-classical states}

In this subsection we consider modified versions of the (semi)continuity bounds for the quantum conditional entropy (QCE) of quantum-classical states
obtained in \cite{Wilde-CB,LCB}.

States $\rho$ and $\sigma$ of a bipartite quantum system $AB$ having the form
\begin{equation}\label{qc-states}
\rho=\sum_{k} p_k\, \rho_k\otimes \vert k\rangle\langle k\vert \quad \textrm{and}\quad \sigma=\sum_{k} q_k\, \sigma_k\otimes \vert k\rangle\langle k\vert,
\end{equation}
where $(p_k,\rho_k)$ and $(q_k,\sigma_k)$ are ordered ensembles of states in $\S(\H_A)$ and $\{\vert k\rangle\}$ is a fixed orthonormal basis in $\H_B$,
are  called \emph{quantum-classical} (briefly, \emph{q-c states}).

By using definition (\ref{ce-ext}) one can show (see the proof of Corollary 3 in \cite{Wilde-CB}) that
\begin{equation}\label{ce-rep}
S(A\vert B)_{\rho}=\sum_kp_kS(\rho_k)\leq+\infty\quad \textrm{and} \quad S(A\vert B)_{\sigma}=\sum_kq_kS(\sigma_k)\leq+\infty.
\end{equation}
The expressions in (\ref{ce-rep}) allow us to define the QCE on the set $\S_{\mathrm{qc}}$ of all q-c states
(including the q-c states $\rho$ with $\,S(\rho_A)=+\infty\,$ for which definition (\ref{ce-ext}) does not work).\smallskip

In 2019  Wilde  obtained the optimal continuity bound
for the QCE on the set of quantum-classical states presented in \cite{Wilde-CB}. By using this continuity bound one can  obtain the following
result essentially used in this article. \smallskip

\begin{lemma}\label{Wilde-SCB}  \emph{Let $A$ and $B$ be arbitrary quantum systems. Let
$\rho$ and $\sigma$ be q-c states in $\S(\H_{AB})$ defined in (\ref{qc-states}). Assume that
$\,\rank \rho_k\leq r<+\infty$ for all $\,k$ such that $\,p_k\neq0$. }\smallskip

\emph{If $\,\varepsilon=\frac{1}{2}\|\rho-\sigma\|_1\leq 1-1/r\,$ then}\footnote{In (\ref{Wilde-SCB+}) and in the below semicontinuty bounds it is assumed that the l.h.s. may be equal to $-\infty$.}
\begin{equation}\label{Wilde-SCB+}
S(A|B)_{\rho}-S(A|B)_{\sigma}\leq \varepsilon \ln(r-1)+h_2(\varepsilon).
\end{equation}

\emph{If $\,\varepsilon=\frac{1}{2}\|\rho-\sigma\|_1\geq 1-1/r\,$ then (\ref{Wilde-SCB+}) holds with the r.h.s.
replaced by $\,\ln r$.}
\end{lemma}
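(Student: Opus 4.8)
The plan is to handle the case $\varepsilon\ge1-1/r$ by a direct estimate and to reduce the case $\varepsilon\le1-1/r$ to Wilde's optimal continuity bound for the QCE of q-c states in a \emph{fixed} finite dimension.

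If $\varepsilon\ge1-1/r$ the bound $\ln r$ is immediate: since $\rank\rho_k\le r$ for every $k$ with $p_k\ne0$, we have $S(\rho_k)\le\ln r$, so $S(A|B)_\rho=\sum_kp_kS(\rho_k)\le\ln r$ by (\ref{ce-rep}), while $S(A|B)_\sigma=\sum_kq_kS(\sigma_k)\ge0$; hence $S(A|B)_\rho-S(A|B)_\sigma\le\ln r$.

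For $\varepsilon\le1-1/r$ the strategy is to construct a q-c state $\tilde\sigma=\sum_kq_k\tilde\sigma_k\otimes|k\rangle\langle k|$, with all $\tilde\sigma_k$ supported on one fixed $r$-dimensional subspace $\H_0\subseteq\H_A$, such that (a) $\frac{1}{2}\|\rho-\tilde\sigma\|_1\le\varepsilon$ and (b) $S(A|B)_{\tilde\sigma}\le S(A|B)_\sigma$. Granting this, one regards $\rho$ and $\tilde\sigma$ as q-c states on $\H_0\otimes\H_B$ with $\dim\H_0=r$ and applies Wilde's continuity bound from \cite{Wilde-CB} to the pair $(\rho,\tilde\sigma)$ (reducing first, if needed, to a finite classical register by truncating the index $k$ and passing to the limit with the help of lower semicontinuity of the entropy). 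With $\varepsilon'=\frac{1}{2}\|\rho-\tilde\sigma\|_1\le\varepsilon\le1-1/r$ this gives $S(A|B)_\rho-S(A|B)_{\tilde\sigma}\le\varepsilon'\ln(r-1)+h_2(\varepsilon')$; since $t\mapsto t\ln(r-1)+h_2(t)$ is nondecreasing on $[0,1-1/r]$ and $S(A|B)_{\tilde\sigma}\le S(A|B)_\sigma$ by (b), we obtain $S(A|B)_\rho-S(A|B)_\sigma\le S(A|B)_\rho-S(A|B)_{\tilde\sigma}\le\varepsilon\ln(r-1)+h_2(\varepsilon)$, which is (\ref{Wilde-SCB+}).

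It remains to produce $\tilde\sigma$, which is the heart of the matter. Conjugating each block by a suitable unitary $U_k$ of $\H_A$ changes neither $S(A|B)_\rho$, nor $S(A|B)_\sigma$, nor any $\|p_k\rho_k-q_k\sigma_k\|_1$, hence not $\varepsilon$; so we may assume every $\rho_k$ with $p_k\ne0$ is supported on the span $\H_0$ of the first $r$ vectors of a fixed basis of $\H_A$. Let $P$ be the projection onto $\H_0$, put $c_k=\Tr(I_{\H_A}-P)\sigma_k$, and define $\tilde\sigma_k=P\sigma_kP+c_k|v_k\rangle\langle v_k|$, where $|v_k\rangle\in\H_0$ is a unit eigenvector of $P\sigma_kP$ for its largest eigenvalue (any unit vector of $\H_0$ if $P\sigma_kP=0$). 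Since pinching with respect to $\{P,I_{\H_A}-P\}$ is trace-norm contractive and $\rho_k=P\rho_kP$, we get $\|p_k\rho_k-q_k\sigma_k\|_1\ge\|p_k\rho_k-q_kP\sigma_kP\|_1+q_kc_k\ge\|p_k\rho_k-q_k\tilde\sigma_k\|_1$, the last step by the triangle inequality; summing over $k$ yields (a). For (b) it suffices to check $S(\tilde\sigma_k)\le S(\sigma_k)$ for each $k$: writing $\lambda_1\ge\lambda_2\ge\cdots$ for the eigenvalues of $P\sigma_kP$ (padded by zeros, at most $r$ nonzero) and $\mu_1\ge\mu_2\ge\cdots$ for those of $\sigma_k$, the ordered eigenvalues of $\tilde\sigma_k$ are $\lambda_1+c_k,\lambda_2,\lambda_3,\dots$; the Courant--Fischer min-max principle gives the interlacing $\mu_i\ge\lambda_i$ for all $i$, and since $\sum_i(\mu_i-\lambda_i)=c_k$ this forces $\sum_{i=1}^n\mu_i\le c_k+\sum_{i=1}^n\lambda_i$ for every $n$, i.e. the eigenvalue sequence of $\sigma_k$ is majorized by that of $\tilde\sigma_k$, whence $S(\sigma_k)\ge S(\tilde\sigma_k)$ by monotonicity of the von Neumann entropy under majorization. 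I expect the inequality $S(\tilde\sigma_k)\le S(\sigma_k)$, and the care needed to make the interlacing/majorization argument rigorous for operators with infinitely many eigenvalues, to be the only real obstacle; the unitary normalization, the pinching estimate, and the invocation of Wilde's bound are routine.
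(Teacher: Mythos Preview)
Your proposal is correct and follows the same high-level strategy as the paper (reduce both q-c states to a common $r$-dimensional quantum register, check that the reduction does not increase the trace distance and can only decrease $S(A|B)_\sigma$, then invoke Wilde's optimal bound), but the concrete reduction is genuinely different. The paper first replaces every $\rho_k$ and $\sigma_k$ by the diagonal state in a fixed basis having the same ordered spectrum and uses the Mirsky inequality (\ref{Mirsky-ineq+}) to control the trace distance; it then applies a ``folding'' channel $\Phi(\cdot)=\sum_j W_j(\cdot)W_j^*$ that maps the basis vectors $\varphi_{rj+t}$ to $\varphi_t$, and shows by an explicit majorization that $S(\Phi(\tilde\sigma_k))\le S(\tilde\sigma_k)$. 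You instead use a block-dependent unitary to place each $\rho_k$ inside a fixed $r$-dimensional subspace $\H_0$, then compress $\sigma_k$ by pinching with $\{P,I-P\}$ and re-depositing the escaped mass $c_k$ onto the top eigenvector of $P\sigma_kP$; Cauchy interlacing then gives the majorization $\tilde\sigma_k\succ\sigma_k$. Your construction avoids Mirsky's inequality and the folding channel at the cost of a slightly more delicate (but standard) interlacing argument for trace-class operators; the paper's route has the mild advantage that after the Mirsky step everything is commutative, so the majorization is a purely classical calculation. Both arguments need the same (routine) care when the classical register $B$ is infinite, and both terminate with the same appeal to \cite{Wilde-CB}.
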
\medskip

\textbf{Note:}  no rank constraint is imposed on the state $\sigma$ in Lemma \ref{Wilde-SCB}.

\begin{proof} Since for any $k$ such that $p_k=0$ we may choose the state $\rho_k$ arbitrarily,
we may assume that $\rho$ and $\sigma$ are q-c states such that
$\,\sup_k\rank \rho_k=r\,$  and
$$
\|\rho-\sigma\|_1=\sum_k\|p_k\rho_k-q_k\sigma_k\|_1=2\varepsilon.
$$

For each $k$ let $\{\lambda^{\rho_k}_i\}_{i=0}^{+\infty}$ and $\{\lambda^{\sigma_k}_i\}_{i=0}^{+\infty}$ be sequences of eigenvalues of $\rho_k$ and $\sigma_k$
arranged in the non-increasing order. Let $\{\varphi_i\}_{i=0}^{+\infty}$ be a fixed orthonormal basis in $\H_A$. Consider the q-c states
\begin{equation*}
\tilde{\rho}=\sum_{k} p_k\, \tilde{\rho}_k\otimes \vert k\rangle\langle k\vert\quad \textrm{and} \quad\tilde{\sigma}=\sum_{k} q_k\tilde{\sigma}_k\otimes \vert k\rangle\langle k\vert,
\end{equation*}
where $\,\tilde{\rho}_k=\sum_{i=0}^{+\infty} \lambda^{\rho_k}_i \vert \varphi_i\rangle\langle \varphi_i\vert\,$ and
$\,\tilde{\sigma}_k=\sum_{i=0}^{+\infty} \lambda^{\sigma_k}_i \vert \varphi_i\rangle\langle \varphi_i\vert\,$ for each $k$.\smallskip

By Mirsky inequality (\ref{Mirsky-ineq+}) we have
\begin{equation}\label{ce-one}
\|\tilde{\rho}-\tilde{\sigma}\|_1=\sum_k\|p_k\tilde{\rho}_k-q_k\tilde{\sigma}_k\|_1\leq\sum_k\|p_k\rho_k-q_k\sigma_k\|_1=2\varepsilon.
\end{equation}
The representations in (\ref{ce-rep}) imply that
\begin{equation}\label{ce-two}
S(A\vert B)_{\rho}=\sum_kp_kS(\rho_k)=\sum_kp_kS(\tilde{\rho}_k)=S(A\vert B)_{\tilde{\rho}}
\end{equation}
and
\begin{equation}\label{ce-two+}
S(A\vert B)_{\sigma}=\sum_kq_kS(\sigma_k)=\sum_kq_kS(\tilde{\sigma}_k)=S(A\vert B)_{\tilde{\sigma}}.
\end{equation}

Consider the quantum channel
$$
\S(\H_A)\ni\varrho\mapsto\Phi(\varrho)=\sum_{j=0}^{+\infty}W_j\varrho W_j^*\in\S(\H_A),
$$
where $\,W_j=\sum_{t=0}^{r-1}\vert \varphi_t\rangle\langle\varphi_{t+rj}\vert\,$ is a partial isometry such that
$W_jW^*_j$ and $W^*_jW_j$ are the projectors on the linear spans of the sets $\,\{\varphi_0,...,\varphi_{r-1}\}\,$ and $\,\{\varphi_{rj},...,\varphi_{rj+r-1}\}\,$
correspondingly.

Let $\tilde{\tilde{\sigma}}_k=\Phi(\tilde{\sigma}_k)$ for all $k$. Then all the states $\tilde{\rho}_k$ and $\tilde{\tilde{\sigma}}_k$ are supported by the $r$-dimensional subspace
generated by the vectors $\varphi_0,...,\varphi_{r-1}$. Since $\tilde{\rho}_k=\Phi(\tilde{\rho}_k)$ by the condition $\rank \rho_k\leq r$ for all $k$, we have
$$
\sum_k\|p_k\tilde{\rho}_k-q_k\tilde{\tilde{\sigma}}_k\|_1=\sum_k\|p_k\Phi(\tilde{\rho}_k)-q_k\Phi(\tilde{\sigma}_k)\|_1\leq\sum_k\|p_k\tilde{\rho}_k-q_k\tilde{\sigma}_k\|_1\leq2\varepsilon
$$
due to monotonicity of the trace norm under action of a channel, where that last inequality follows from (\ref{ce-one}).

By the above inequality it follows from (\ref{ce-two}) and (\ref{ce-two+}) that the claim of the lemma can be derived from Proposition 1 in \cite{Wilde-CB} by showing that $\,S(A|B)_{\tilde{\tilde{\sigma}}}\leq S(A|B)_{\tilde{\sigma}}$, where
$\tilde{\tilde{\sigma}}=\sum_{k} q_k\tilde{\tilde{\sigma}}_k\otimes \vert k\rangle\langle k\vert$. By the representation in (\ref{ce-rep}) this can be done by proving that
\begin{equation}\label{l-in}
 S(\tilde{\tilde{\sigma}}_k)\leq S(\tilde{\sigma}_k)\quad \forall k.
\end{equation}

For each $k$ we have
$$
\tilde{\tilde{\sigma}}_k=\sum_{j=0}^{+\infty}\lambda^{\sigma_k}_{jr}\vert \varphi_0\rangle\langle\varphi_0\vert +...+\sum_{j=0}^{+\infty}\lambda^{\sigma_k}_{jr+t}\vert \varphi_t\rangle\langle\varphi_t\vert +...
+\sum_{j=0}^{+\infty}\lambda^{\sigma_k}_{jr+r-1}\vert \varphi_{r-1}\rangle\langle\varphi_{r-1}\vert,
$$
and hence the state $\tilde{\sigma}_k$ is majorized by the state $\tilde{\tilde{\sigma}}_k$ in the sense of \cite[Section 13.5]{S-T-1}.
This implies  (\ref{l-in}).
\end{proof}

To obtain a necessary modification of the semicontinuity bound
presented in Proposition 3B in \cite{LCB} we use the
notion of the average passive energy of quantum states ensemble introduced in Section 2.2.3. \smallskip

\begin{lemma}\label{qce-qc-1} \emph{Let $A$ and $B$ be arbitrary  quantum systems. Let $H$ be a positive operator on $\H_A$ having the form (\ref{H-form}) and satisfying conditions (\ref{H-cond}) and (\ref{star}).}\smallskip

\emph{If $\,\rho=\sum_{k} p_k\, \rho_k\otimes \vert k\rangle\langle k\vert$ is a q-c state in $\S(\H_{AB})$  s.t. $\bar{E}_{H}^{\rm psv}(\{p_k,\rho_k\})\leq E<+\infty$  then
\begin{equation}\label{qce-qc-2}
   S(A\vert B)_{\rho}-S(A\vert B)_{\sigma}\leq \varepsilon F_H((E-E^{\rm psv}_{H,\shs\varepsilon}(\rho))/\varepsilon)+g(\varepsilon)\leq \varepsilon F_H(E/\varepsilon)+g(\varepsilon)
\end{equation}
for any q-c state $\,\sigma=\sum_{k} q_k\, \sigma_k\otimes \vert k\rangle\langle k\vert$ in $\S(\H_{AB})$ such that $\,\frac{1}{2}\|\rho-\sigma\|_1\leq \varepsilon$, where
\begin{equation*}
E^{\rm psv}_{H,\shs\varepsilon}(\rho)\doteq \sum_{k}E^{\rm psv}_H([p_k\rho_k-\varepsilon I_{A}]_+),
\end{equation*}
$\,E^{\rm psv}_H([p_k\rho_k-\varepsilon  I_{A}]_+)$ is the passive energy of the positive part of the Hermitian operator $\,p_k\rho_k-\varepsilon  I_{A}$ w.r.t. $H$ (defined in (\ref{p-e-def})). The l.h.s. of (\ref{qce-qc-2}) may be equal to $-\infty$.}\smallskip
\end{lemma}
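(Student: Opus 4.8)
The plan is to run the Alicki--Fannes--Winter (AFW) mixing argument for the quantum conditional entropy (QCE) of quantum-classical (q-c) states, but to bound the auxiliary ``witness'' entropy by the \emph{passive} energy via (\ref{P-B}) rather than by the full output energy; the new ingredient is a per-block eigenvalue estimate for $[p_k\rho_k-q_k\sigma_k]_+$ that produces the correction term $E^{\rm psv}_{H,\shs\varepsilon}(\rho)$.

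Put $\varepsilon'=\frac12\|\rho-\sigma\|_1\le\varepsilon$ and assume $\varepsilon'>0$ (if $\rho=\sigma$ there is nothing to prove, the r.h.s. of (\ref{qce-qc-2}) being nonnegative). By (\ref{ce-rep}), (\ref{P-B}) and the concavity of $F_H$ one has $0\le S(A|B)_\rho=\sum_kp_kS(\rho_k)\le\sum_kp_kF_H(E^{\rm psv}_H(\rho_k))\le F_H(E)<+\infty$, so the l.h.s. of (\ref{qce-qc-2}) is either $-\infty$ (when $S(A|B)_\sigma=+\infty$, and then the claim is trivial) or finite. Introduce the q-c states $\delta_+=[\rho-\sigma]_+/\varepsilon'$ and $\delta_-=[\rho-\sigma]_-/\varepsilon'$ (both of trace $1$, since $\Tr[\rho-\sigma]_\pm=\varepsilon'$) and the subnormalized operator $\omega=\rho+\varepsilon'\delta_-=\sigma+\varepsilon'\delta_+$ of trace $1+\varepsilon'$. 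Apply to the QCE the concavity inequality for the decomposition $\frac{\omega}{1+\varepsilon'}=\frac{1}{1+\varepsilon'}\rho+\frac{\varepsilon'}{1+\varepsilon'}\delta_-$ and the ``almost convexity'' inequality $S(A|B)_{\alpha\vartheta_1+(1-\alpha)\vartheta_2}\le\alpha S(A|B)_{\vartheta_1}+(1-\alpha)S(A|B)_{\vartheta_2}+h_2(\alpha)$ for the decomposition $\frac{\omega}{1+\varepsilon'}=\frac{1}{1+\varepsilon'}\sigma+\frac{\varepsilon'}{1+\varepsilon'}\delta_+$ (both hold for q-c states by the corresponding properties of the von Neumann entropy; alternatively one may invoke Proposition 1 in \cite{Wilde-CB}, as in the proof of Lemma \ref{Wilde-SCB}). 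Subtracting, using the identity $(1+x)h_2((1+x)^{-1})=g(x)$ and $S(A|B)_{\delta_-}\ge0$ (since $\delta_-$ is q-c), this gives
\[
S(A|B)_\rho-S(A|B)_\sigma\le\varepsilon'\bigl(S(A|B)_{\delta_+}-S(A|B)_{\delta_-}\bigr)+g(\varepsilon')\le\varepsilon'S(A|B)_{\delta_+}+g(\varepsilon').
\]

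Write $[\rho-\sigma]_+=\sum_k\omega_k\otimes|k\rangle\langle k|$ with $\omega_k=[p_k\rho_k-q_k\sigma_k]_+$ and $\tau_k=\Tr\omega_k$; then $\sum_k\tau_k=\Tr[\rho-\sigma]_+=\varepsilon'$ and $\varepsilon'S(A|B)_{\delta_+}=\sum_{k:\shs\tau_k>0}\tau_kS(\omega_k/\tau_k)$. The crucial step is the bound $E^{\rm psv}_H(\omega_k)\le p_kE^{\rm psv}_H(\rho_k)-E^{\rm psv}_H([p_k\rho_k-\varepsilon I_A]_+)$ for each $k$. To prove it, let $P_k$ be the projection onto $\supp\omega_k$; since $p_k\rho_k-q_k\sigma_k=\omega_k-\omega_k'$ with $\omega_k'=[q_k\sigma_k-p_k\rho_k]_+$ supported orthogonally to $\supp\omega_k$, one has $\omega_k=P_k\omega_kP_k=p_kP_k\rho_kP_k-q_kP_k\sigma_kP_k\le p_kP_k\rho_kP_k$, hence $\lambda_i(\omega_k)\le p_k\lambda_i(P_k\rho_kP_k)$ by Weyl's monotonicity of eigenvalues; since $P_k\rho_kP_k$ has the same eigenvalues as $\rho_k^{1/2}P_k\rho_k^{1/2}\le\rho_k$, we get $\lambda_i(\omega_k)\le p_k\lambda_i(\rho_k)$, and combining with $\lambda_i(\omega_k)\le\lambda_0(\omega_k)\le\Tr\omega_k=\tau_k\le\varepsilon'\le\varepsilon$ we obtain $\lambda_i(\omega_k)\le\min\{p_k\lambda_i(\rho_k),\varepsilon\}$. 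As $E_j\ge0$ and $\min\{a,\varepsilon\}=a-(a-\varepsilon)_+$,
\[
E^{\rm psv}_H(\omega_k)=\sum_jE_j\lambda_j(\omega_k)\le\sum_jE_j\min\{p_k\lambda_j(\rho_k),\varepsilon\}=p_kE^{\rm psv}_H(\rho_k)-E^{\rm psv}_H\!\bigl([p_k\rho_k-\varepsilon I_A]_+\bigr).
\]
Summing over $k$ and using $\bar E^{\rm psv}_H(\{p_k,\rho_k\})\le E$ yields $\sum_kE^{\rm psv}_H(\omega_k)\le E-E^{\rm psv}_{H,\shs\varepsilon}(\rho)<+\infty$. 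Now bound $S(\omega_k/\tau_k)\le F_H(E^{\rm psv}_H(\omega_k)/\tau_k)$ by (\ref{P-B}) and use Jensen's inequality for the concave $F_H$ with weights $\tau_k/\varepsilon'$:
\[
\varepsilon'S(A|B)_{\delta_+}\le\sum_k\tau_kF_H\!\Bigl(\tfrac{1}{\tau_k}E^{\rm psv}_H(\omega_k)\Bigr)\le\varepsilon'F_H\!\Bigl(\tfrac1{\varepsilon'}\sum_kE^{\rm psv}_H(\omega_k)\Bigr)\le\varepsilon'F_H\!\Bigl(\tfrac1{\varepsilon'}\bigl(E-E^{\rm psv}_{H,\shs\varepsilon}(\rho)\bigr)\Bigr).
\]

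Combining the two displayed bounds gives $S(A|B)_\rho-S(A|B)_\sigma\le\varepsilon'F_H\bigl(\tfrac1{\varepsilon'}(E-E^{\rm psv}_{H,\shs\varepsilon}(\rho))\bigr)+g(\varepsilon')$; since $0\le E-E^{\rm psv}_{H,\shs\varepsilon}(\rho)\le E$, since $g$ is nondecreasing, and since $x\mapsto xF_H(c/x)$ is nondecreasing on $(0,+\infty)$ for every fixed $c\ge0$ by (\ref{W-L}) (which uses (\ref{star})), this yields both inequalities in (\ref{qce-qc-2}). I expect the eigenvalue estimate to be the main obstacle: the operator inequality $[p_k\rho_k-q_k\sigma_k]_+\le p_k\rho_k$ is \emph{false} in general, so $\lambda_i(\omega_k)$ cannot be compared with $\lambda_i(p_k\rho_k)$ directly --- one must first compress onto $\supp\omega_k$ and then use $\lambda_i(P\rho P)\le\lambda_i(\rho)$. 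One should also dispatch the $\pm\infty$ cases in the mixing step (using $S(A|B)_\rho\le F_H(E)<+\infty$ and the finiteness of $\varepsilon'S(A|B)_{\delta_+}$ established above) and check that the almost-convexity constant is exactly $h_2$, so that the error term is precisely $g(\varepsilon)$.
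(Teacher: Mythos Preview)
Your proof is correct but takes a genuinely different route from the paper's.

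The paper does not run the AFW argument directly. Instead it uses a \emph{passivization trick}: replace each $\rho_k$, $\sigma_k$ by states $\tilde\rho_k$, $\tilde\sigma_k$ that are diagonal in the eigenbasis $\{\tau_i\}$ of $H$ with the same spectra (exactly the construction already used in the proof of Lemma~\ref{Wilde-SCB}). This preserves the block entropies, hence $S(A|B)$, and by the Mirsky inequality does not increase $\|\rho-\sigma\|_1$. The point is that after this basis change the passive energies become actual energies: $\Tr H\tilde\rho_A=\bar E_H^{\rm psv}(\{p_k,\rho_k\})$ and $\Tr H[p_k\tilde\rho_k-\varepsilon I_A]_+=E_H^{\rm psv}([p_k\rho_k-\varepsilon I_A]_+)$. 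One then simply invokes the energy-constrained semicontinuity bound already proved in \cite[Proposition~3B]{LCB} for the pair $(\tilde\rho,\tilde\sigma)$, and the lemma drops out in one line.

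Your argument is more self-contained: you reprove the AFW bound for q-c states and replace the black-box call to \cite{LCB} by the explicit eigenvalue estimate $\lambda_i([p_k\rho_k-q_k\sigma_k]_+)\le\min\{p_k\lambda_i(\rho_k),\varepsilon\}$, obtained via compression onto $\supp\omega_k$ and the $AB\sim BA$ spectral identity. This is exactly the content that \cite[Proposition~3B]{LCB} encapsulates (after passivization), so the two proofs are equivalent at the level of ideas; yours makes the mechanism producing the correction term $E_{H,\varepsilon}^{\rm psv}(\rho)$ visible, while the paper's reduction is shorter and reuses existing machinery without reopening the AFW computation.
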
\smallskip

The function $\varepsilon\mapsto \varepsilon F_H(E/\varepsilon)+g(\varepsilon)$ is nondecreasing and tends to zero as $\varepsilon\to0$. This follows from inequality (\ref{W-L}) and the equivalence of (\ref{H-cond}) and (\ref{H-cond-a}).

\begin{proof} Let $\tilde{\rho}=\sum_{k}p_k\tilde{\rho}_k\otimes|k\rangle\langle k|$ and
$\tilde{\sigma}=\sum_{i}q_k\tilde{\sigma}_k\otimes|k\rangle\langle k|$ be the q-c states constructed in the proof of Lemma \ref{Wilde-SCB} with the basis $\{\varphi_i\}$ replaced by the basic $\{\tau_i\}$ from representation (\ref{H-form}) of $H$.
By the construction  $\Tr H\tilde{\rho}_A=\bar{E}_{H}^{\rm psv}(\{p_k,\rho_k\})\leq E$ and $\Tr H[p_k\tilde{\rho}_k-\varepsilon I_{A}]_+=E^{\rm psv}_H([p_k\rho_k-\varepsilon I_{A}]_+)$ for all $k$. Thus,
by applying  the semicontinuity bound from Proposition 3B in \cite{LCB} to the q-c states $\tilde{\rho}$ and
$\tilde{\sigma}$ with the use of (\ref{ce-one}), (\ref{ce-two}) and (\ref{ce-two+}) we obtain the claim of the lemma.
\end{proof}

\section{Average Output Entropy of a channel: analytical properties and approximation}

The Average Output Entropy (AOE) of a quantum channel $\Phi:A\to B$ at a (generalised) ensemble $\mu$ of states in $\S(\H_A)$ is defined as
\begin{equation*}
  \overline{S}(\Phi(\mu))\doteq\int_{\S(\H_A)}S(\Phi(\rho))\mu(d\rho).
\end{equation*}

This quantity is widely used in quantum information theory, since several important  characteristics are expressed via AOE. For example, the output Holevo
information
\begin{equation}\label{H-Q}
\chi(\Phi(\mu))=\int_{\mathfrak{S}(\mathcal{H}_A)} D(\mathrm{\Phi}(\rho)\shs \|\shs \mathrm{\Phi}(\bar{\rho}(\mu)))\mu (d\rho)
\end{equation}
of a channel $\Phi:A\to B$ at a (generalised) input ensemble $\mu$ can be expressed as
\begin{equation}\label{H-Q+}
\chi(\Phi(\mu))=S(\Phi(\bar{\rho}(\mu)))-\overline{S}(\Phi(\mu))
\end{equation}
provided that $S(\Phi(\bar{\rho}(\mu)))<+\infty$, where $\bar{\rho}(\mu)$ is the average state of $\mu$ \cite{H-SCI,H-Sh-2}.
Another examples are the discrete and continuous versions of the Entanglement of Formation (EoF) of a state $\rho$ of a bipartite
system $AB$ which can be defined as:
\begin{equation}\label{EoF}
E_{F,d}(\rho)=\inf_{\mu\in\widehat{\P}_0(\H_{AB}),\bar{\rho}(\mu)=\rho}\overline{S}(\Phi(\mu)),\quad E_{F,c}(\rho)=\inf_{\mu\in\widehat{\P}(\H_{AB}),\bar{\rho}(\mu)=\rho}\overline{S}(\Phi(\mu)),
\end{equation}
where $\Phi:AB\to A$ is the partial trace channel $\varrho\mapsto \varrho_A$, $\widehat{\P}_0(\H_{AB})$ is the set of all discrete ensembles of pure states in $\S(\H_{AB})$ and $\widehat{\P}(\H_{AB})$ is the set of all generalised ensembles of pure states in $\S(\H_{AB})$ \cite{Bennett,Lami-new,EM}.\medskip

The continuous version of the EoF is a partial case of the\emph{ convex closure of the output entropy}
of a quantum channel $\Phi$ from $A$ to $B$ at a state $\rho$ in $\S(\H_A)$ defined as
$$
 \widehat{S}_{\Phi}(\rho)=\inf_{\mu\in\widehat{\P}(\H_A),\bar{\rho}(\mu)=\rho}\overline{S}(\Phi(\mu)),
$$
which plays important role in analysis of the classical capacity of a channel \cite{H-SCI,AQC}.\smallskip

Below we describe  analytical properties of the function $(\Phi,\mu)\mapsto\overline{S}(\Phi(\mu))$
on the set $\F(A,B)\times \P(\H_A)$, where $\F(A,B)$ is the set of all channels from $A$ to $B$ equipped with the strong convergence topology
and $\P(\H_A)$ is the a set of all generalised ensembles of states in $\S(\H_A)$ equipped with the weak convergence topology (see Sections 2.1,2.2).
\smallskip

\begin{proposition}\label{AOE-a-p} \emph{The nonnegative function $(\Phi,\mu)\mapsto\overline{S}(\Phi(\mu))$ is lower semicontinuous
on $\,\F(A,B)\times \P(\H_A)$. If $\,\{\Phi_n\}$ is a sequence in $\,\F(A,B)$ strongly converging to $\Phi_0$
and $\,\{\mu_n\}$ is a sequence in $\P(\H_A)$ weakly converging to $\mu_0$ such that
$$
\lim_{n\to+\infty}S(\Phi_n(\bar{\rho}(\mu_n)))=S(\Phi_0(\bar{\rho}(\mu_0)))<+\infty
$$
then
$$
\lim_{n\to+\infty}\overline{S}(\Phi_n(\mu_n))=\overline{S}(\Phi_0(\mu_0))<+\infty.
$$}
\end{proposition}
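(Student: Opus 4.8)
The plan is to deduce both assertions from lower semicontinuity of the mean entropy and Holevo functionals on the space of output ensembles, together with Lemma~\ref{vsl}.

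First I would establish that the map $(\Phi,\mu)\mapsto\Phi(\mu)$ is continuous from $\F(A,B)\times\P(\H_A)$ to $\P(\H_B)$. From $\|\Phi_n(\rho_n)-\Phi_0(\rho_0)\|_1\leq\|\rho_n-\rho_0\|_1+\|\Phi_n(\rho_0)-\Phi_0(\rho_0)\|_1$ (channels are trace-norm contractions) one sees that $\Phi_n(\rho_n)\to\Phi_0(\rho_0)$ whenever $\Phi_n\to\Phi_0$ strongly and $\rho_n\to\rho_0$; equivalently, strongly convergent channels converge uniformly on compact subsets of $\S(\H_A)$. Hence, for $f\in C_b(\S(\H_B))$, the functions $G_n=f\circ\Phi_n$ are uniformly bounded and satisfy $G_n(\rho_n)\to(f\circ\Phi_0)(\rho_0)$ for $\rho_n\to\rho_0$, so that $\int G_n\,d\mu_n\to\int(f\circ\Phi_0)\,d\mu_0$ by the Skorokhod representation theorem (valid since $\S(\H_A)$ is a complete separable metric space) and bounded convergence; this is exactly the weak convergence $\Phi_n(\mu_n)\to\Phi_0(\mu_0)$. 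I would also record the identity $\bar\rho(\Phi(\mu))=\Phi(\bar\rho(\mu))$ (interchange of the bounded operator $\Phi$ with the Bochner integral).

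Next, writing $\widehat S(\nu)=\int S(\omega)\shs\nu(d\omega)$ and $\widehat\chi(\nu)=\int D(\omega\shs\|\shs\bar\rho(\nu))\shs\nu(d\omega)$ for $\nu\in\P(\H_B)$, one has $\overline S(\Phi(\mu))=\widehat S(\Phi(\mu))$ and, by the change of variables and the identity just mentioned, $\chi(\Phi(\mu))=\widehat\chi(\Phi(\mu))$ (cf.~(\ref{H-Q})). Both $\widehat S$ and $\widehat\chi$ are nonnegative and lower semicontinuous on $\P(\H_B)$: for $\widehat S$ this follows from lower semicontinuity of $S$ exactly as in Lemma~\ref{PE-LSC}, and for $\widehat\chi$ from lower semicontinuity of the relative entropy and continuity of the barycenter map (alternatively, both follow from the Skorokhod/Fatou argument above applied to nonnegative lower semicontinuous integrands). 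Composing with the continuous pushforward map, the functions $(\Phi,\mu)\mapsto\overline S(\Phi(\mu))$ and $(\Phi,\mu)\mapsto\chi(\Phi(\mu))$ are nonnegative and lower semicontinuous on $\F(A,B)\times\P(\H_A)$, which is the first assertion.

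For the convergence statement I would set $\nu_n=\Phi_n(\mu_n)$, so that $\nu_n\to\nu_0=\Phi_0(\mu_0)$ weakly by the first step, and $\bar\rho(\nu_n)=\Phi_n(\bar\rho(\mu_n))$, whence $S(\bar\rho(\nu_n))\to S(\bar\rho(\nu_0))<+\infty$ by hypothesis. By (\ref{H-Q+}) (which gives $\widehat S(\nu)+\widehat\chi(\nu)=S(\bar\rho(\nu))$ whenever $S(\bar\rho(\nu))<+\infty$) we have $(\widehat S+\widehat\chi)(\nu_0)=S(\bar\rho(\nu_0))<+\infty$ and $(\widehat S+\widehat\chi)(\nu_n)=S(\bar\rho(\nu_n))$ for all $n$ with $S(\bar\rho(\nu_n))<+\infty$, that is for all large $n$; therefore $(\widehat S+\widehat\chi)(\nu_n)\to(\widehat S+\widehat\chi)(\nu_0)<+\infty$. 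Applying Lemma~\ref{vsl} on $\P(\H_B)$ to the nonnegative lower semicontinuous functions $\widehat S$ and $\widehat\chi$ and the sequence $\nu_n\to\nu_0$ gives $\widehat S(\nu_n)\to\widehat S(\nu_0)<+\infty$, i.e. $\overline S(\Phi_n(\mu_n))\to\overline S(\Phi_0(\mu_0))<+\infty$. The main obstacle, I expect, is the continuity of the pushforward map, which mixes strong convergence of channels with weak convergence of measures and relies on the Skorokhod representation (or, equivalently, on tightness of a weakly convergent sequence together with uniform convergence of channels on compact sets of states); the remaining ingredients — lower semicontinuity of $\widehat S$ and $\widehat\chi$, the Holevo identity, and Lemma~\ref{vsl} — are standard or already available.
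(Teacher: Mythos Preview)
Your argument is correct and follows essentially the same route as the paper: decompose $S(\Phi(\bar\rho(\mu)))=\chi(\Phi(\mu))+\overline{S}(\Phi(\mu))$ via (\ref{H-Q+}), use lower semicontinuity of both summands, and apply Lemma~\ref{vsl}. The paper simply cites external references (\cite[Lemma 1]{AQC} and \cite[Proposition 15]{CMC}) for the lower semicontinuity of $(\Phi,\mu)\mapsto\overline{S}(\Phi(\mu))$ and $(\Phi,\mu)\mapsto\chi(\Phi(\mu))$ and applies Lemma~\ref{vsl} directly on $\F(A,B)\times\P(\H_A)$, whereas you give a self-contained derivation by first establishing continuity of the pushforward $(\Phi,\mu)\mapsto\Phi(\mu)$ (via Skorokhod/uniform-on-compacta convergence) and then invoking lower semicontinuity of $\widehat S$ and $\widehat\chi$ on $\P(\H_B)$; this factorization is a clean way to organize what the cited references contain, and buys independence from those sources at the cost of a little extra work.
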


\begin{proof} The lower semicontinuity of the function $(\Phi,\mu)\mapsto\overline{S}(\Phi(\mu))$
on $\F(A,B)\times \P(\H_A)$ is proved in \cite[Lemma 1]{AQC}. The last claim of the proposition is proved by using
the first one and Lemma  \ref{vsl} in Section 2.1. Indeed, it follows from (\ref{H-Q+}) that
$$
S(\Phi(\bar{\rho}(\mu)))=\chi(\Phi(\mu))+\overline{S}(\Phi(\mu)),\quad (\Phi,\mu)\in\F(A,B)\times \P(\H_A),
$$
provided that $S(\Phi(\bar{\rho}(\mu)))<+\infty$, and the function $(\Phi,\mu)\mapsto\chi(\Phi(\mu))$ is lower semicontinuous
on $\,\F(A,B)\times \P(\H_A)$ \cite[Proposition 15]{CMC}.
\end{proof}

The following approximation lemma plays a basic role in this article.\smallskip

\begin{lemma}\label{b-lem} \emph{Let $\Phi$ be a quantum channel from $A$ to $B$ and $\mu_0$ be an arbitrary  generalized ensemble
in $\P(\H_A)$ such that $\overline{S}(\Phi(\mu_0))<+\infty$. Let $H_A$ and $H_B$ be  positive operators on the spaces $\H_A$ and $\H_B$ correspondingly.  Then there exists a sequence $\{\mu_n\}$ of discrete ensembles in $\P(\H_A)$ weakly converging to the ensemble
$\mu_0$ such that $\;\supp\bar{\rho}(\mu_n)\subseteq \supp\bar{\rho}(\mu_0)\,$ for all $n$ and the following limit relations hold
\begin{equation}\label{b-lem+}
\lim_{n\to+\infty}\overline{S}(\Phi(\mu_n))=\overline{S}(\Phi(\mu_0)),
\end{equation}
\begin{equation}\label{b-lem++}
\lim_{n\to+\infty}\Tr H_A\bar{\rho}(\mu_n)=\Tr H_A\bar{\rho}(\mu_0)\leq+\infty,
\end{equation}
\begin{equation}\label{b-lem+++}
\lim_{n\to+\infty}\bar{E}_{H_B}^{\rm psv}(\Phi(\mu_n))=\bar{E}_{H_B}^{\rm psv}(\Phi(\mu_0))\leq+\infty,
\end{equation}
where $\bar{E}_{H_B}^{\rm psv}$ denotes the average passive energy with respect to $H_B$ defined in (\ref{p-a-e-def}).}

\emph{If $\,\rank\Phi(\rho)\leq r_B<+\infty$ for $\mu_0$-almost all $\rho$ then the above sequence $\{\mu_n\}$ can be chosen
in such a way that $\,\rank\Phi(\rho)\leq r_B$ for $\mu_n$-almost all $\rho$ for all $n$.}
\end{lemma}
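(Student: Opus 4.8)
The plan is to build the approximating sequence by a two-step discretization of the measure $\mu_0$, combined with Lemma \ref{vsl} to pass from convergence of a sum of entropies to convergence of each summand. First I would recall (or reconstruct) the standard construction used in the proof of Lemma 1 in the Appendix of \cite{H-Sh-2}: for each $n$, partition $\S(\H_A)$ into countably many disjoint Borel pieces $\{X_i^n\}$ of diameter at most $1/n$, set $p_i^n=\mu_0(X_i^n)$ and $\rho_i^n=(1/p_i^n)\int_{X_i^n}\rho\,\mu_0(d\rho)$ (Bochner integral, arbitrary state if $p_i^n=0$), and let $\mu_n=\{p_i^n,\rho_i^n\}$. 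Since each $\rho_i^n$ is a barycenter of states drawn from $X_i^n$, it lies in the closed convex hull of $X_i^n$; using that any continuous bounded $f$ on $\S(\H_A)$ is close to constant on each $X_i^n$ (up to $\omega_f(1/n)$), one gets $\int f\,d\mu_n\to\int f\,d\mu_0$, i.e. $\mu_n\Rightarrow\mu_0$ weakly, exactly as in Example \ref{D-star-e}. The condition $\supp\bar{\rho}(\mu_n)\subseteq\supp\bar{\rho}(\mu_0)$ is automatic here because $\bar{\rho}(\mu_n)=\bar{\rho}(\mu_0)$ for every $n$ (the barycenter is preserved by this averaging construction), which also instantly gives (\ref{b-lem++}) as an equality, not just a limit.

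Next I would handle the three limit relations. Relation (\ref{b-lem++}) is trivial from $\bar{\rho}(\mu_n)=\bar{\rho}(\mu_0)$. For (\ref{b-lem+++}), the functional $\mu\mapsto\bar{E}_{H_B}^{\rm psv}(\Phi(\mu))$ is lower semicontinuous on $\P(\H_A)$ (compose the lower semicontinuous map $\mu\mapsto\Phi(\mu)$ with the lower semicontinuous functional of Lemma \ref{PE-LSC}), so $\liminf_n\bar{E}_{H_B}^{\rm psv}(\Phi(\mu_n))\geq\bar{E}_{H_B}^{\rm psv}(\Phi(\mu_0))$; the matching upper bound should follow because passive energy is a ``Schur-type'' function of the state's eigenvalues and $\rho_i^n$ is a convex combination of states in a ball of radius $1/n$, so $E_{H_B}^{\rm psv}(\Phi(\rho_i^n))\leq\frac1{p_i^n}\int_{X_i^n}E_{H_B}^{\rm psv}(\Phi(\rho))\,\mu_0(d\rho)+o(1)$ by continuity considerations plus Mirsky's inequality (\ref{Mirsky-ineq+}) controlling eigenvalue perturbations; summing gives $\limsup_n\bar{E}_{H_B}^{\rm psv}(\Phi(\mu_n))\leq\bar{E}_{H_B}^{\rm psv}(\Phi(\mu_0))$. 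Similarly, the AOE functional $\mu\mapsto\overline{S}(\Phi(\mu))$ is lower semicontinuous by Proposition \ref{AOE-a-p}, giving ``$\geq$'' in (\ref{b-lem+}); for ``$\leq$'' one uses concavity of $S$ to get $S(\Phi(\rho_i^n))\geq\frac1{p_i^n}\int_{X_i^n}S(\Phi(\rho))\,\mu_0(d\rho)$ — wait, that goes the wrong way, so instead I would exploit that $\Phi(\rho_i^n)$ differs from the average of $\Phi(\rho)$, $\rho\in X_i^n$, by at most $\omega(1/n)$ in trace norm, apply Winter-type continuity of the von Neumann entropy under a suitable energy bound (here is where $H_B$ is used), and conclude $\overline{S}(\Phi(\mu_n))\leq\overline{S}(\Phi(\mu_0))+o(1)$; combined with lower semicontinuity this yields the limit.

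The main obstacle I expect is the ``$\leq$'' direction of (\ref{b-lem+}) in full generality — i.e. controlling $\sum_i p_i^n S(\Phi(\rho_i^n))$ from above without an a priori energy constraint on the outputs. The honest route is: since $\overline{S}(\Phi(\mu_0))<+\infty$, the output entropy is $\mu_0$-integrable, so one can truncate — work on a set where $S(\Phi(\rho))\leq N$ up to small $\mu_0$-mass, refine the partition $\{X_i^n\}$ so that it respects this truncation, apply local continuity of $S$ (uniform continuity of the entropy on states of bounded entropy, or Winter's bound via a dominating output Hamiltonian $H_B$ which exists because any state of finite entropy has finite energy for a suitable Gibbs operator), and let $N\to\infty$. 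The rank claim at the end is easy and should be recorded last: if $\rank\Phi(\rho)\leq r_B$ for $\mu_0$-a.e.\ $\rho$, then choosing the partition sets $X_i^n$ inside the $\mu_0$-full set $\{\rho:\rank\Phi(\rho)\leq r_B\}$ ensures each $\rho_i^n$ is (a limit of) barycenters of such states — but rank is only lower semicontinuous, so instead I would first intersect with the Borel set $\{\rho:\Phi(\rho)$ supported in a fixed $r_B$-dimensional subspace$\}$ up to arbitrarily small mass, or simply note $\Phi(\rho_i^n)$ is a convex combination of states of rank $\leq r_B$ lying in a common small ball, hence — after the limit-point extraction if needed — has rank $\leq r_B$; the cleanest formulation is to observe that $\rho_i^n\in\overline{\mathrm{conv}}\{\rho\in X_i^n:\rank\Phi(\rho)\leq r_B\}$ and that $\Phi$ maps this into the (closed) set of states of rank $\leq r_B$ only after passing to a fixed supporting subspace, which one arranges by a further countable decomposition of $\S(\H_A)$ according to an exhausting sequence of finite-dimensional subspaces of $\H_B$.
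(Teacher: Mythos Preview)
Your barycenter construction is the wrong device here, and the difficulties you flag are not cosmetic but fatal. Taking $\rho_i^n=(1/p_i^n)\int_{X_i^n}\rho\,\mu_0(d\rho)$ forces, by concavity of the entropy, the inequality
\[
\overline{S}(\Phi(\mu_n))=\sum_i p_i^n S(\Phi(\rho_i^n))\geq \sum_i\int_{X_i^n}S(\Phi(\rho))\,\mu_0(d\rho)=\overline{S}(\Phi(\mu_0)),
\]
which is exactly the opposite of the upper bound you need for (\ref{b-lem+}). Your proposed repair via truncation plus Winter-type continuity does not close this gap: you would need a quantitative continuity bound for $S(\Phi(\cdot))$ that is \emph{uniform over the whole partition} and \emph{integrable against $\mu_0$}, but no such bound is available without an a priori output energy constraint, and the lemma is stated with no assumption on $H_B$ beyond positivity. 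The same obstruction hits (\ref{b-lem+++}): the passive energy $E_{H_B}^{\rm psv}$ is neither concave nor convex in $\rho$, so the averaging step gives you nothing, and Mirsky's inequality only controls eigenvalue perturbations by the trace distance, which again requires an energy bound to translate into a passive-energy bound. Finally, the rank claim collapses completely under barycenters: a convex combination of states $\rho$ with $\rank\Phi(\rho)\leq r_B$ can have $\rank\Phi(\rho_i^n)$ arbitrarily large, and your proposed further decomposition by output subspaces does not save this.

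The key idea you are missing, and which the paper uses, is to replace barycenters by \emph{minimizers}. One forms the lower semicontinuous function
\[
F(\rho)=S(\Phi(\rho))+a\,\Tr H_A\rho+b\,E_{H_B}^{\rm psv}(\Phi(\rho))
\]
(with $a,b\in\{0,1\}$ according to whether the corresponding global quantity is finite), restricts to compact sets $\C_n$ carrying all but $1/n$ of the mass of $\mu_0$, partitions $\C_n$ into pieces $\A_i^n$ of diameter $<1/n$, and takes $\rho_i^n\in\overline{\A_i^n}$ to be a point where $F$ attains its minimum on that compact piece. Then $\widehat{F}(\mu_n)\leq c_n^{-1}\widehat{F}(\mu_0)$ is immediate from the minimizer property, and lower semicontinuity of $\widehat{F}$ plus Lemma~\ref{vsl} finishes all three limits at once. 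The support and rank conditions come for free because each $\rho_i^n$ is an actual state lying in the closed set $\S_*=\{\rho:\supp\rho\subseteq\supp\bar{\rho}(\mu_0),\;\rank\Phi(\rho)\leq r_B\}$, not a mixture of such states.
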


\begin{proof} Let $\S_0=\{\rho\in\S(\H_A)\,|\,\supp\rho\subseteq \supp\bar{\rho}(\mu_0)\}$ be a closed convex subset of $\S(\H_A)$.
To prove the main claim and the last claim of the lemma simultaneously assume that
$\S_*=\{\rho\in\S_0\,|\,\rank\Phi(\rho)\leq r_B\}$ if $r_B<+\infty$ and that $\S_*=\S_0$ in the general case.
Since the set $\S_*$ is closed, it is clear that it contains the support of the measure $\mu_0$ \cite{Par}. Since any probability
measure on the complete separable metric space $\S(\H_A)$ is tight~\cite{Bil+,Par}, for any $n\in\N$ there exists
a compact subset $\C_{n}$ of $\S_*$ such that
$\mu_0(\C_{n})>1-1/n$. By the compactness of the set $\C_{n}$ there is a
decomposition $\C_{n}=\bigcup_{i=1}^{m(n)}\A^{n}_i$, where
$\{\A^{n}_i\}_{i=1}^{m(n)}$ is a finite collection of disjoint
Borel subsets with the diameter less than $1/n$. Without loss of
generality we may assume that $\mu_0(\A^{n}_i)>0$ for all $i$ and
$n$. By construction the closure $\overline{\A^{n}_i}$ of $\A^{n}_i$ is a  compact set contained in some closed ball $\B^{n}_i$ of diameter $1/n$ for all
$i$ and $n$.

Consider the nonnegative lower semicontinuous (and hence Borel) function
$$
F(\rho)=S(\Phi(\rho))+a\Tr H_A\rho+b\bar{E}_{H_B}^{\rm psv}(\Phi(\rho))
$$
on the set $\S(\H_A)$, where $a=1$  if $\,\Tr H_A\bar{\rho}(\mu_0)<+\infty\,$   and $a=0$  otherwise,  $b=1$  if $\,\bar{E}_{H_B}^{\rm psv}(\Phi(\mu_0))<+\infty\,$  and  $\,b=0\,$ otherwise. The lower semicontinuity of the first and second summands are well known, the lower semicontinuity of the third one follows from the lower semicontinuity of the passive energy as a function of a state  mentioned in Section 2.2.3.
By the assumption
$$
\int_{\S(\H_A)}F(\rho)\,\mu_0(d\rho)<+\infty
$$
and hence the function $F(\rho)$ is  $\mu_{0}$-almost
everywhere finite. Since the function $F(\rho)$ is lower semicontinuous it
achieves its \textit{finite} minimum on the compact set
$\overline{\A^{n}_i}$ of \textit{positive} measure at some
state $\rho^{n}_i\in\overline{\A^{n}_i}$ for each $n$ and $i$. Consider the ensemble (probability measure)
$$
\mu_n=c_n^{-1}\sum_{i=1}^{m(n)}\mu_0(\A^{n}_i)\delta(\rho^{n}_i),
$$
where
$\delta(\rho^{n}_i)$ is the Dirac (single-atom) measure concentrated at the state $\rho^{n}_i$ and $c_n=\mu_0(\C_n)$. Since  $\rho^{n}_i\in\overline{\A^{n}_i}\subset\S_*\subset\S_0$ for all $n$ and $i$ and the set $\S_0$ is
convex, we have  $\supp\bar{\rho}(\mu_n)\subseteq \supp\bar{\rho}(\mu_0)$ for all $n$.

Show first that the sequence $\{\mu_n\}$ weakly converges to the ensemble
$\mu_0$. Take an arbitrary  uniformly continuous function $f$ on the set $\S(\H_A)$. Let $M_f=\sup_{\rho\in\S(\H_A)}f(\rho)$
and $\,\omega_f(\varepsilon)=\sup_{\rho,\sigma\in\S(\H_A),\|\rho-\sigma\|_1\leq\varepsilon}|f(\rho)-f(\sigma)|\,$ be the modulus of continuity of $f$.
Then for each $n$ we have
$$
\begin{array}{c}
\displaystyle\left|\int_{\S(\H_A)}f(\rho)\mu_n(d\rho)-\int_{\S(\H_A)}f(\rho)\mu_0(d\rho)\right|\\\\
\displaystyle\leq
(1-c_n)\left|\int_{\S(\H_A)}f(\rho)\mu_n(d\rho)\right|+
\left|c_n\int_{\S(\H_A)}f(\rho)\mu_n(d\rho)-\int_{\S(\H_A)}f(\rho)\mu_0(d\rho)\right|\\\\\displaystyle\leq
(1-c_n)M_f+\sum_{i=1}^{m(n)}\int_{\A_i^n}|f(\rho_i^n)-f(\rho)|\mu_0(d\rho)+\int_{\S(\H_A)\backslash\C_{n}}|f(\rho)|\mu_0(d\rho)
\\\\\displaystyle\leq (1-c_n)M_f+c_n\omega_f(1/n)+(1-c_n)M_f\leq \omega_f(1/n)+2M_f/n
\end{array}
$$
because $\A^{n}_i\subseteq\B^{n}_i$ for all $i$. Since $\,\omega_f(1/n)+2M_f/n\,$ tends to zero as $\,n\to+\infty$, the sequence $\{\mu_n\}$ weakly converges to the ensemble
$\mu_0$ by the Portmanteau theorem  \cite{Bil+}.

Denote the functional $\mu\to \int_{\S(\H_A)}
F(\rho)\,\mu(d\rho)$ on $\P(\H_A)$ by $\widehat{F}(\mu)$. Since  the function $F(\rho)$ nonnegative and lower semicontinuous
on $\S(\H_A)$, the functional $\widehat{F}(\mu)$
is lower semicontinuous on $\P(\H_A)$ by the Portmanteau theorem  \cite{Bil+}.

By the choice of the states $\rho^{n}_i$ for each  $i$ and $n$ we
have $F(\rho^{n}_i)\le F(\rho)$ for all $\rho$ in
$\A^{n}_i$. Hence,
$$
\begin{array}{c}
\displaystyle\widehat{F}(\mu_n)=c_n^{-1}\sum_{i=1}^{m(n)}\mu_0(\A^{n}_i)\,F(\rho^{n}_i)\le c_n^{-1}\sum_{i=1}^{m(n)}
\int_{\A^{n}_i}F(\rho)\,\mu_0(d\rho)
\\\\
\displaystyle\;\;\, \le
c_n^{-1}\int_{\S(\H_A)}
F(\rho)\,\mu_0(d\rho)=c_n^{-1}\widehat{F}(\mu_0)<+\infty.
\end{array}
$$

Thus, $\limsup_{n\rightarrow+\infty}\widehat{F}(\mu_n)\le\widehat{F}(\mu_0)$ and
hence the lower semicontinuity of $\widehat{F}(\mu)$ implies that
\begin{equation}\label{F-l-r}
\lim_{n\rightarrow+\infty}\widehat{F}(\mu_n)=\widehat{F}(\mu_0)<+\infty.
\end{equation}

If $\Tr H_A\bar{\rho}(\mu_0),\bar{E}_{H_B}^{\rm psv}(\Phi(\mu_0))<+\infty$ then
$\widehat{F}(\mu)$ is the sum of three  lower semicontinuous
functionals $\mu\mapsto\overline{S}(\Phi(\mu))$, $\mu\mapsto\Tr H_A\bar{\rho}(\mu)$ and $\mu\mapsto\bar{E}_{H_B}^{\rm psv}(\Phi(\mu))$ (the lower semicontinuity of the last functional follows from Lemma \ref{PE-LSC}).
So, it follows from Lemma \ref{vsl} in Section 2.1 that the limit relation (\ref{F-l-r})  implies that (\ref{b-lem+}),(\ref{b-lem++}) and (\ref{b-lem+++}) hold.

If $\Tr H_A\bar{\rho}(\mu_0)<+\infty$ and $\bar{E}_{H_B}^{\rm psv}(\Phi(\mu_0))=+\infty$ then
$\widehat{F}(\mu)$ is the sum of two  lower semicontinuous
functionals $\mu\mapsto\overline{S}(\Phi(\mu))$ and $\mu\mapsto\Tr H_A\bar{\rho}(\mu)$.
So, it follows from Lemma \ref{vsl} in Section 2.1 that the limit relation (\ref{F-l-r})  implies that (\ref{b-lem+}) and (\ref{b-lem++}) hold.
The validity of  (\ref{b-lem+++}) in this case is a direct corollary of the lower semicontinuity
of the functional $\mu\mapsto\bar{E}_{H_B}^{\rm psv}(\Phi(\mu))$.

Similar arguments show that  the limit relation (\ref{F-l-r}) implies the validity of (\ref{b-lem+}),(\ref{b-lem++}) and (\ref{b-lem+++}) in the
case when $\Tr H_A\bar{\rho}(\mu_0)=+\infty$ and $\bar{E}_{H_B}^{\rm psv}(\Phi(\mu_0))<+\infty$ and in the case of $\Tr H_A\bar{\rho}(\mu_0)=\Tr \bar{E}_{H_B}^{\rm psv}(\Phi(\mu_0))=+\infty$.
\end{proof}



\section{Semicontinuity bounds for the AOE}

In this section we use the results from Section 2.3 and the approximation Lemma \ref{b-lem} in Section 3 to obtain
semicontinuity bounds for the function $\,(\Phi,\mu)\mapsto\overline{S}(\Phi(\mu))\,$
under the constraints of different forms. \smallskip

\begin{proposition}\label{main-1} \emph{Let
$\,\Phi$ be an  arbitrary channel from $A$ to $B$ and $\mu$ a (generalized) ensemble in $\P(\H_A)$}.\smallskip

\noindent A) \emph{If $\,\rank\Phi(\rho)\leq r_B\in\N\cap[2,+\infty)\,$ for $\mu$-almost all $\rho$ then the inequality\footnote{If $r_B=1$ then $\overline{S}(\Phi(\mu))=0$, so in this case (\ref{main-1+}) holds trivially with the r.h.s. equal to  $0$.}
\begin{equation}\label{main-1+}
\overline{S}(\Phi(\mu))-\overline{S}(\Psi(\nu))\leq \varepsilon\ln (r_B-1)+h_2(\varepsilon)
\end{equation}
holds for any (generalized) ensemble $\nu$ in $\P(\H_A)$ and any channel $\Psi:A\to B$ such that
\begin{equation}\label{e-cond-1}
D_X(\mu,\nu)+\textstyle\frac{1}{2}\|\Phi-\Psi\|_{Y}\leq\varepsilon\leq1-1/r_B,
\end{equation}
where $D_X$ is one of the metrics $D_K$ and $D_*$ defined, respectively, in [(\ref{K-D-d}),(\ref{K-D-c})] and [(\ref{f-metric}),(\ref{f-metric+})],\footnote{These double references indicate the "discrete" and the general definitions of a metric.} $\|\cdot\|_{Y}$ is either the diamond norm $\|\cdot\|_{\diamond}$ defined in (\ref{d-norm}) or its unstabilized version $\|\cdot\|_{1\to1}$ defined in (\ref{d-norm+}).}\smallskip

\emph{If $\,D_X(\mu,\nu)+\textstyle\frac{1}{2}\|\Phi-\Psi\|_{Y}>1-1/r_B\,$ then (\ref{main-1+}) holds with the r.h.s. equal to $\,\ln r_B$.}\medskip

\noindent B) \emph{The following replacements can be done in claim A independently on each other:}
\begin{enumerate}[(1)]
\item \emph{if $\mu=(p_i,\rho_i)$ and $\nu=(q_i,\sigma_i)$ are discrete ordered
ensembles then the metric $D_0$ defined in (\ref{D-0-metric}) can be used in the role of $D_X$  in (\ref{e-cond-1});} \smallskip
\item \emph{if either $\,\Tr H_A\bar{\rho}(\mu)=E_{A}<+\infty$ or $\,\Tr H_A\bar{\rho}(\nu)=E_{A}<+\infty$ then the energy-constrained diamond norm $\|\cdot\|^{H_A}_{\diamond,E_A}$ defined in (\ref{ec-d-norm}) can be used in the role of $\|\cdot\|_{Y}$ in  (\ref{e-cond-1}).}
\end{enumerate}

\noindent C) \emph{Let $\,r_B\in\N\cap[2,+\infty)$ and $\,\varepsilon\in(0,1-1/r_B]$. Then}
\begin{enumerate}[(1)]
\item \emph{there exist discrete  ensembles  $\mu$ and $\nu$ in $\P(\H_A)$ and a channel $\Phi$  from $A$ to $B$ such that
$\,\rank\Phi(\rho)\leq r_B<+\infty\,$ for $\mu$-almost all $\rho$,
\begin{equation*}
\overline{S}(\Phi(\mu))-\overline{S}(\Phi(\nu))=\varepsilon\ln (r_B-1)+h_2(\varepsilon) \quad\textit{and}\quad D_X(\mu,\nu)=\varepsilon,
\end{equation*}
where  $D_X$ is one of the metrics $D_0,D_*,D_K$;} \smallskip
\item \emph{there exist channels $\Phi$ and  $\Psi$  from $A$ to $B$ and a discrete  ensemble  $\mu$ such that
$\,\rank\Phi(\rho)\leq r_B<+\infty\,$ for $\mu$-almost all $\rho$,
\begin{equation*}
\overline{S}(\Phi(\mu))-\overline{S}(\Psi(\mu))=\varepsilon\ln (r_B-1)+h_2(\varepsilon)\quad\textit{and}\quad\textstyle\frac{1}{2}\|\Phi-\Psi\|_{\diamond}\leq\varepsilon.
\end{equation*}}
\end{enumerate}
\end{proposition}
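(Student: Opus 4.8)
The plan is to express the average output entropy of a discrete ensemble as the quantum conditional entropy of an associated quantum-classical state, to apply the rank-constrained semicontinuity bound of Lemma~\ref{Wilde-SCB}, and then to carry the result over to generalized ensembles by the approximation Lemma~\ref{b-lem}. I would begin with discrete ordered ensembles $\mu=(p_i,\rho_i)$ and $\nu=(q_i,\sigma_i)$ over a common index set (padding by zero-weight states), assuming $\rank\Phi(\rho_i)\le r_B$ whenever $p_i\ne 0$. Put $\tilde\rho=\sum_i p_i\,\Phi(\rho_i)\otimes|i\rangle\langle i|$ and $\tilde\sigma=\sum_i q_i\,\Psi(\sigma_i)\otimes|i\rangle\langle i|$; by (\ref{ce-rep}) the quantum conditional entropies of these q-c states equal $\overline{S}(\Phi(\mu))$ and $\overline{S}(\Psi(\nu))$, and the quantum components of $\tilde\rho$ have rank $\le r_B$ whenever $p_i\ne 0$. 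Writing $\hat\rho=\sum_i p_i\rho_i\otimes|i\rangle\langle i|$ and $\hat\sigma=\sum_i q_i\sigma_i\otimes|i\rangle\langle i|$, so that $\tilde\rho=(\Phi\otimes\id_R)(\hat\rho)$ and $\tilde\sigma=(\Psi\otimes\id_R)(\hat\sigma)$, I would estimate
$$
\|\tilde\rho-\tilde\sigma\|_1\le\|(\Phi\otimes\id_R)(\hat\rho-\hat\sigma)\|_1+\|[(\Phi-\Psi)\otimes\id_R](\hat\sigma)\|_1\le\|\hat\rho-\hat\sigma\|_1+\sum_i q_i\|\Phi(\sigma_i)-\Psi(\sigma_i)\|_1,
$$
using monotonicity of the trace norm under the channel $\Phi\otimes\id_R$ and the block-diagonal structure of $[(\Phi-\Psi)\otimes\id_R](\hat\sigma)$. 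Since $\|\hat\rho-\hat\sigma\|_1=2D_0(\mu,\nu)$ and $\sum_i q_i\|\Phi(\sigma_i)-\Psi(\sigma_i)\|_1\le\|\Phi-\Psi\|_{1\to1}\le\|\Phi-\Psi\|_\diamond$, this gives $\tfrac12\|\tilde\rho-\tilde\sigma\|_1\le\varepsilon$; then Lemma~\ref{Wilde-SCB} applied to $\tilde\rho,\tilde\sigma$, combined with the fact that $t\mapsto t\ln(r_B-1)+h_2(t)$ is nondecreasing on $[0,1-1/r_B]$ (where it equals $\ln r_B$ at $t=1-1/r_B$), yields (\ref{main-1+}) when $\varepsilon\le 1-1/r_B$ and the bound $\ln r_B$ otherwise. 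If instead $\Tr H_A\bar\rho(\nu)=E_A<+\infty$ (resp.\ $\Tr H_A\bar\rho(\mu)=E_A<+\infty$) I would route the splitting through $(\Phi\otimes\id_R)(\hat\sigma)$ (resp.\ $(\Psi\otimes\id_R)(\hat\rho)$) and bound the channel-difference term by $\|\Phi-\Psi\|^{H_A}_{\diamond,E_A}$, since $\hat\sigma_A=\bar\rho(\nu)$ (resp.\ $\hat\rho_A=\bar\rho(\mu)$) has energy $\le E_A$ and the supremum in (\ref{ec-d-norm}) does not change when the reference system is enlarged (purify and compress). This establishes the inequality for discrete ordered ensembles and the metric $D_0$, with $\|\cdot\|_Y$ any of the three norms; that is exactly claims~B(1) and B(2) in this setting.

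To obtain claim~A as stated (with $D_K$ or $D_*$) it suffices to handle $D_X=D_*$, since $D_*\le D_K$ and $D_*\le D_0$ by (\ref{d-ineq}),(\ref{d-ineq+}). Given discrete ensembles $\mu,\nu$ (as measures) with $D_*(\mu,\nu)+\tfrac12\|\Phi-\Psi\|_Y\le\varepsilon$, I would, for each $\delta>D_*(\mu,\nu)$, pick ordered representatives $\mu'\in\mathcal{E}(\mu)$, $\nu'\in\mathcal{E}(\nu)$ with $D_0(\mu',\nu')<\delta$; because $\overline{S}(\Phi(\cdot))$, $\overline{S}(\Psi(\cdot))$, the barycenter and the rank condition depend only on the underlying measure, the previous step applied to $\mu',\nu'$ yields (\ref{main-1+}) with $\varepsilon$ replaced by $\delta+\tfrac12\|\Phi-\Psi\|_Y$, and letting $\delta\downarrow D_*(\mu,\nu)$ finishes the discrete case by continuity and monotonicity. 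For generalized ensembles I would first discard the vacuous case $\overline{S}(\Psi(\nu))=+\infty$ (then the left side of (\ref{main-1+}) is $-\infty$, since $\rank\Phi(\rho)\le r_B$ for $\mu$-almost all $\rho$ forces $\overline{S}(\Phi(\mu))\le\ln r_B<+\infty$). Otherwise, apply Lemma~\ref{b-lem} to $(\Phi,\mu)$ and to $(\Psi,\nu)$ --- with $H_A$ the relevant energy operator in case~B(2) --- to obtain discrete $\mu_n\to\mu$ and $\nu_n\to\nu$ weakly with $\overline{S}(\Phi(\mu_n))\to\overline{S}(\Phi(\mu))$, $\overline{S}(\Psi(\nu_n))\to\overline{S}(\Psi(\nu))$, the rank constraint preserved $\mu_n$-almost everywhere, and $\Tr H_A\bar\rho(\mu_n)\to\Tr H_A\bar\rho(\mu)$ (resp.\ for $\nu$) by (\ref{b-lem++}); since $D_*$ and $D_K$ metrize the weak topology, also $D_X(\mu_n,\nu_n)\to D_X(\mu,\nu)$. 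Applying the discrete bound to $(\Phi,\mu_n),(\Psi,\nu_n)$ and passing to the limit proves claims~A and B(2) in general; for the energy-constrained norm one additionally invokes the monotonicity of $E\mapsto\|\Phi-\Psi\|^{H_A}_{\diamond,E}$ and its right-continuity (obtained by mixing states with the ground state $|\tau_0\rangle\langle\tau_0|$ of $H_A$), letting the energy bound relax back to $E_A$ at the end. I expect this transfer to be the main difficulty: one needs the average output entropy, the divergence $D_X$, and the input energy to converge simultaneously along a single discrete approximation, which is precisely what Lemma~\ref{b-lem} supplies; the remainder is the application of Lemma~\ref{Wilde-SCB} plus bookkeeping among the norms.

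For the sharpness statement~C I would exhibit explicit examples. For~C(1): take $B=A=\mathbb{C}^{r_B}$, $\Phi=\id$, and the singleton ensembles concentrated, respectively, at $\sigma=(1-\varepsilon)|0\rangle\langle0|+\frac{\varepsilon}{r_B-1}\sum_{k=1}^{r_B-1}|k\rangle\langle k|$ and at $|0\rangle\langle0|$; then $\rank\Phi(\sigma)=r_B$, $\overline{S}(\Phi(\mu))-\overline{S}(\Phi(\nu))=S(\sigma)=\varepsilon\ln(r_B-1)+h_2(\varepsilon)$, while for singleton ensembles $D_0$, $D_*$ and $D_K$ all coincide with $\tfrac12\|\sigma-|0\rangle\langle0|\|_1=\varepsilon$. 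For~C(2): take any system $A$, any singleton ensemble $\mu$, and the constant channels $\Phi(\varrho)=(\Tr\varrho)\,\sigma$ and $\Psi(\varrho)=(\Tr\varrho)\,|0\rangle\langle0|$ with the same $\sigma$; then $\rank\Phi(\rho)=r_B$, the map $(\Phi-\Psi)\otimes\id_R$ sends $\varrho$ to $(\sigma-|0\rangle\langle0|)\otimes\varrho_R$, so $\tfrac12\|\Phi-\Psi\|_\diamond=\tfrac12\|\sigma-|0\rangle\langle0|\|_1=\varepsilon$, and $\overline{S}(\Phi(\mu))-\overline{S}(\Psi(\mu))=S(\sigma)=\varepsilon\ln(r_B-1)+h_2(\varepsilon)$.
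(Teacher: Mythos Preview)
Your proposal is correct and follows essentially the same approach as the paper: represent the AOE as the conditional entropy of a q-c state, bound the trace distance by $D_0(\mu,\nu)+\tfrac12\|\Phi-\Psi\|_Y$ via the triangle inequality (the paper routes through $\Psi\otimes\id(\hat\rho)$ rather than $\Phi\otimes\id(\hat\sigma)$, but this is cosmetic), apply Lemma~\ref{Wilde-SCB}, pass from $D_0$ to $D_*$ via ordered representatives, and extend to generalized ensembles via Lemma~\ref{b-lem}. Your sharpness example in C(2) using constant channels differs from the paper's mixing channel $\Phi(\varrho)=(1-\varepsilon)\varrho+\varepsilon(r_B-1)^{-1}\sum_i|i\rangle\langle i|$, but both work and yours is arguably cleaner.
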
\smallskip

\noindent\textbf{Note:} Claim C shows that the semicontinuity bound presented in claim A is optimal.

\begin{proof} Since $\,\|\cdot\|_{1\to1}\leq\|\cdot\|_{\diamond}$, it suffices
to prove all the claims of the proposition assuming that  $\,\|\cdot\|_{1\to1}$ is  used
in the role of $\,\|\cdot\|_{Y}$ (excepting the case in B-2).\smallskip

A) Note first that the condition $\,\rank\Phi(\rho)\leq r_B<+\infty$ for $\mu$-almost all $\rho$ implies that $\overline{S}(\Phi(\mu))\leq\ln r_B<+\infty$.
So, if $\,D_X(\mu,\nu)+\textstyle\frac{1}{2}\|\Phi-\Psi\|_{Y}>1-1/r_B\,$ then the
claim is trivial. Hence, it suffices to consider the case when condition (\ref{e-cond-1}) holds.

Assume that $\mu=\{p_i,\rho_i\}$ and $\nu=\{q_i,\sigma_i\}$ are discrete ensembles in $\P(\H_A)$ and that $\rank\Phi(\rho_i)\leq r_B$ for all $i$ such that $p_i\neq0$.
Take any $\epsilon>0$. Let  $\,(\tilde{p}_i,\tilde{\rho}_i)$ and $(\tilde{q}_i,\tilde{\sigma}_i)$ be ensembles belonging, respectively, to the sets $\mathcal{E}(\{p_i,\rho_i\})$ and $\mathcal{E}(\{q_i,\sigma_i\})$ such that
\begin{equation}\label{e-1}
D_*(\{p_i,\rho_i\}, \{q_i,\sigma_i\})\geq D_0((\tilde{p}_i,\tilde{\rho}_i), (\tilde{q}_i,\tilde{\sigma}_i))-\epsilon
\end{equation}
(see definition (\ref{f-metric}) of $D_*$). Note that $\rank\Phi(\tilde{\rho_i})\leq r_B$ for all $i$ such that $\tilde{p}_i\neq0$.

Consider the q-c states $\hat{\rho}=\sum_{i} \tilde{p}_i\tilde{\rho}_i\otimes |i\rangle\langle i|$
and $\hat{\sigma}=\sum_{i}\tilde{q}_i\tilde{\sigma}_i\otimes |i\rangle\langle i|$
in $\S(\H_{AC})$, where $\{|i\rangle\}$ is a basic in a separable Hilbert space $\H_C$. It follows from (\ref{ce-rep}) that
\begin{equation}\label{one-v}
\overline{S}(\{p_i,\Phi(\rho_i)\})=\overline{S}(\{\tilde{p}_i,\Phi(\tilde{\rho}_i)\})=S(B|C)_{\Phi\otimes \id_C(\hat{\rho})}
\end{equation}
and
\begin{equation}\label{two-v}
\overline{S}(\{q_i,\Psi(\sigma_i)\})=\overline{S}(\{\tilde{q}_i,\Psi(\tilde{\sigma}_i)\})=S(B|C)_{\Psi\otimes \id_C(\hat{\sigma})}.
\end{equation}

By using the definition (\ref{d-norm+}), the monotonicity of the trace norm under action of a channel and (\ref{e-1}) we obtain
\begin{equation}\label{norm+}
\begin{array}{c}
\|\Phi\otimes \id_C(\hat{\rho})-\Psi\otimes \id_C(\hat{\sigma})\|_1\leq\|\Phi\otimes \id_C(\hat{\rho})-\Psi\otimes \id_C(\hat{\rho})\|_1\\\\+\|\Psi\otimes \id_C(\hat{\rho})-\Psi\otimes \id_C(\hat{\sigma})\|_1
\leq \sum_i\tilde{p}_i\|\Phi(\tilde{\rho}_i)-\Psi(\tilde{\rho}_i)\|_{1}+\|\hat{\rho}-\hat{\sigma}\|_1
\\\\
\leq\sum_i\tilde{p}_i\|\Phi-\Psi\|_{1\to1}+\|\hat{\rho}-\hat{\sigma}\|_1\leq \|\Phi-\Psi\|_{1\to1}+2D_*(\mu,\nu)+2\epsilon.
\end{array}
\end{equation}

The norm $\|\Phi-\Psi\|_{1\to1}$ in this estimate can be replaced with $\|\Phi-\Psi\|^{H_A}_{\diamond,E_A}$ if either $\,\Tr H_A\bar{\rho}(\mu)=E_{A}$ or $\,\Tr H_A\bar{\rho}(\nu)=E_{A}$ for some finite $E_A$.\footnote{We cannot replace $\|\Phi-\Psi\|_{1\to1}$ by the "unstabilized" version $\|\Phi-\Psi\|^{H_A}_{1\to1,E_A}$ of  $\|\Phi-\Psi\|^{H_A}_{\diamond,E_A}$ (defined in an obvious way) because we cannot prove the concavity of the function $E\mapsto\|\Phi-\Psi\|^{H_A}_{1\to1,E}$.} Indeed, if
$\,\Tr H_A\bar{\rho}(\mu)=\Tr H_A\hat{\rho}_A=E_{A}$ then it follows from the first inequality in (\ref{norm+}) that
\begin{equation}\label{norm++}
\|\Phi\otimes \id_C(\hat{\rho})-\Psi\otimes \id_C(\hat{\sigma})\|_1\leq \|\Phi-\Psi\|^{H_A}_{\diamond,E_A}+2D_*(\mu,\nu)+2\epsilon
\end{equation}
The similar arguments (with the obvious permutations) show that the above estimate holds also if $\,\Tr H_A\bar{\rho}(\nu)=E_{A}$.

Since $\rank\Phi(\tilde{\rho}_i)\leq r_B$ for all $i$ such that $\tilde{p}_i\neq0$,  by using (\ref{norm+}) and applying  Lemma \ref{Wilde-SCB} in Section 2.3
to estimate the difference between (\ref{one-v}) and (\ref{two-v})
we obtain  (by possibility to take $\epsilon$ arbitrarily small)  that (\ref{main-1+}) holds for any
$\,\varepsilon\geq D_*(\mu,\nu)+\textstyle\frac{1}{2}\|\Phi-\Psi\|_{1\to1}$.

If either $\,\Tr H_A\bar{\rho}(\mu)=E_{A}<+\infty$ or $\,\Tr H_A\bar{\rho}(\nu)=E_{A}<+\infty$ then
by using (\ref{norm++}) instead of (\ref{norm+}) we conclude that (\ref{main-1+}) holds for any
$\,\varepsilon\geq D_*(\mu,\nu)+\textstyle\frac{1}{2}\|\Phi-\Psi\|^{H_A}_{\diamond,E_A}$.\smallskip

Thus, we have proved the inequality (\ref{main-1+}) for discrete ensembles
$\mu$ and $\nu$  provided that condition (\ref{e-cond-1}) holds with  the distance $D_*(\mu,\nu)$ between them (in the role of  $D_X(\mu,\nu)$).
We have proved also that $\|\Phi-\Psi\|_{Y}$ in condition (\ref{e-cond-1}) can be replaced by $\|\Phi-\Psi\|^{H_A}_{\diamond,E_A}$ provided that
either $\,\Tr H_A\bar{\rho}(\mu)=E_{A}<+\infty$ or $\,\Tr H_A\bar{\rho}(\nu)=E_{A}<+\infty$.
Since the r.h.s. of (\ref{main-1+}) is a nondecreasing function of $\varepsilon$, inequalities (\ref{d-ineq}) and (\ref{d-ineq+})  show that
all these claims  are valid for discrete ensembles
$\mu$ and $\nu$ provided that the quantities $D_0(\mu,\nu)$  and $D_K(\mu,\nu)$ are used in condition (\ref{e-cond-1}).
\smallskip

Assume now that $\mu$ and $\nu$ are arbitrary generalized ensembles. We may assume that $\overline{S}(\Phi(\nu))<+\infty$,
since otherwise inequality (\ref{main-1+}) holds trivially. By Lemma \ref{b-lem} in Section 3 there exist sequences
$\{\mu_n\}$ and $\{\nu_n\}$ of discrete ensembles in $\P(\H_A)$ weakly converging to the ensembles
 $\mu$ and $\nu$ correspondingly such that
\begin{equation}\label{nb-lem+1}
\lim_{n\to+\infty}\overline{S}(\Theta(\vartheta_n))=\overline{S}(\Theta(\vartheta))<+\infty,\quad (\Theta,\vartheta)=(\Phi,\mu),(\Psi,\nu),
\end{equation}
\begin{equation}\label{nb-lem++1}
\lim_{n\to+\infty}\Tr H_A\bar{\rho}(\vartheta_n)=\Tr H_A\bar{\rho}(\vartheta)\leq+\infty,\quad \vartheta=\mu,\nu,
\end{equation}
and $\,\rank\Phi(\rho)\leq r_B<+\infty$ for $\mu_n$-almost all $\rho$ for each $n$.

Let $D_X$ be either $D_*$ or $D_K$. The weak convergence of $\{\mu_n\}$ and $\{\nu_n\}$ to  $\mu$ and $\nu$ implies that (see Section 2.2.2)
\begin{equation}\label{D-lim-1}
\lim_{n\to+\infty}D_X(\mu_n,\nu_n)=D_X(\mu,\nu).
\end{equation}

By the above part of the proof we have $\overline{S}(\Phi(\mu_n))-\overline{S}(\Psi(\nu_n))\leq f(\varepsilon_n)$,
where $\varepsilon_n=D_X(\mu_n,\nu_n)+\frac{1}{2}\|\Phi-\Psi\|_{1\to1}\,$ and
\begin{equation*}
f(\varepsilon)=
\left\{\begin{array}{l}
        \varepsilon\ln (r_B-1)+h_2(\varepsilon) \;\;\textrm{if}\;\; \varepsilon\in[0,1-1/r_B)\\
        \ln r_B\;\qquad\qquad\qquad\;\;\,\textrm{if}\;\;\varepsilon>1-1/r_B.
        \end{array}\right.
\end{equation*}
By taking the limit as $n\to+\infty$ and applying the limit relations (\ref{nb-lem+1}) and (\ref{D-lim-1})
we conclude that  (\ref{main-1+}) holds with $\varepsilon=D_X(\mu,\nu)+\frac{1}{2}\|\Phi-\Psi\|_{1\to1}$ (due to the continuity of the function  $f$).
Hence, (\ref{main-1+})  holds for if $\,D_X(\mu,\nu)+\frac{1}{2}\|\Phi-\Psi\|_{1\to1}\leq\varepsilon\leq1-1/r_B$, since $f$ is a nondecreasing function.\smallskip

If $\,\Tr H_A\bar{\rho}(\vartheta)=E_{A}<+\infty$, where $\vartheta$ is  either $\mu$ or $\nu$,
then
the above part of the proof shows that $\overline{S}(\Phi(\mu_n))-\overline{S}(\Psi(\nu_n))\leq f(\varepsilon_n)$, where $\varepsilon_n=D_X(\mu_n,\nu_n)+\frac{1}{2}\|\Phi-\Psi\|^{H_A}_{\diamond,E^n_A}$, $E^n_A=\Tr H_A\bar{\rho}(\vartheta_n)$ (if $E_A^n=+\infty$ then $\,\|\Phi-\Psi\|^{H_A}_{\diamond,E^n_A}=\|\Phi-\Psi\|_{\diamond}$). So, by taking the limit as $n\to+\infty$ and applying the limit relations (\ref{nb-lem+1}), (\ref{nb-lem++1}) and (\ref{D-lim-1})
we conclude that (\ref{main-1+}) holds with $\varepsilon=D_X(\mu,\nu)+\frac{1}{2}\|\Phi-\Psi\|^{H_A}_{\diamond,E_A}$ (due to the continuity of the functions $f$ and $E\mapsto \frac{1}{2}\|\Phi-\Psi\|^{H_A}_{\diamond,E}$). Hence, (\ref{main-1+})  holds if $\,D_X(\mu,\nu)+\frac{1}{2}\|\Phi-\Psi\|^{H_A}_{\diamond,E_A}\leq\varepsilon\leq1-1/r_B$.\medskip

B) Claims B-1 and B-2 are proved within the proof of claim A.
\smallskip

C) To prove claim C-1 assume that $\mu$ and $\nu$ are singleton ensembles
consisting of the states $\,\rho=(1-\varepsilon)|0\rangle\langle 0|+\varepsilon(r_B-1)^{-1}\sum_{i=1}^{r_B-1}|i\rangle\langle i|\,$
and $\sigma=|0\rangle\langle 0|$, where $\{|i\rangle\}_{i\geq0}$ is a basis in $\H_A$. Then by taking $\Phi=\id_A$ we obtain
$$
\overline{S}(\Phi(\mu))-\overline{S}(\Phi(\nu))=S(\rho)-S(\sigma)=\varepsilon\ln (r_B-1)+h_2(\varepsilon)
$$
and $D_X(\mu,\nu)=\frac{1}{2}\|\rho-\sigma\|_1=\varepsilon$, $D_X=D_0,D_*,D_K$.\smallskip

To prove claim C-2 assume that $\,\Phi(\varrho)=(1-\varepsilon)\varrho+\varepsilon(r_B-1)^{-1}\sum_{i=1}^{r_B-1}|i\rangle\langle i|\,$
and $\Psi=\id_A$, where $\{|i\rangle\}_{i\geq0}$ is a basis in $\H_A$. So, if $\mu$ is the singleton ensemble
consisting of the state $\,\sigma=|0\rangle\langle 0|\,$ then
$$
\overline{S}(\Phi(\mu))-\overline{S}(\Psi(\mu))=S(\rho)-S(\sigma)=\varepsilon\ln (r_B-1)+h_2(\varepsilon),
$$
where $\rho$ is  the state introduced in the proof of claim C-1. Since it is easy to see that $\|\Phi\otimes\id_R(\omega)-\Psi\otimes\id_R(\omega)\|_1\leq 2\varepsilon\,$
for any state $\omega\in\S(\H_{AR})$ (where $R$ is any quantum system), we have $\|\Phi-\Psi\|_{\diamond}\leq 2\varepsilon$.
\end{proof}

\begin{corollary}\label{main-1-c} \emph{Let
$\Phi$ be a channel from $A$ to $B$ with finite Choi rank $\,\mathrm{Ch}(\Phi)\geq2$ and $\mu$ a (generalized) ensemble of pure states in $\S(\H_A)$}.\footnote{If $\mathrm{Ch}(\Phi)=1$ then $\,\overline{S}(\Phi(\mu))=0$ for any ensemble $\mu$  of pure states in $\S(\H_A)$.}\emph{The inequality
\begin{equation*}
\overline{S}(\Phi(\mu))-\overline{S}(\Psi(\nu))\leq \varepsilon\ln (\mathrm{Ch}(\Phi)-1)+h_2(\varepsilon)
\end{equation*}
holds for any (generalized) ensemble $\nu$ in $\P(\H_A)$ and any channel $\Psi:A\to B$ such that
\begin{equation}\label{e-cond-1-c}
D_X(\mu,\nu)+\textstyle\frac{1}{2}\|\Phi-\Psi\|_{Y}\leq\varepsilon\leq1-1/\mathrm{Ch}(\Phi),
\end{equation}
where $D_X$ is one of the metrics $D_K$ and $D_*$ defined, respectively, in [(\ref{K-D-d}),(\ref{K-D-c})] and [(\ref{f-metric}),(\ref{f-metric+})], $\|\cdot\|_{Y}$ is either the diamond norm $\|\cdot\|_{\diamond}$ defined in (\ref{d-norm}) or its unstabilized version $\|\cdot\|_{1\to1}$ defined in (\ref{d-norm+}).}\smallskip\pagebreak

\emph{The following replacements can be done independently on each other:}
\begin{enumerate}[(1)]
\item \emph{if $\mu=(p_i,\rho_i)$ and $\nu=(q_i,\sigma_i)$ are discrete ordered
ensembles then the metric $D_0$ defined in (\ref{D-0-metric}) can be used in the role of $D_X$  in (\ref{e-cond-1-c});} \smallskip
\item \emph{if either $\,\Tr H_A\bar{\rho}(\mu)=E_{A}<+\infty$ or $\,\Tr H_A\bar{\rho}(\nu)=E_{A}<+\infty$ then the energy-constrained diamond norm $\|\cdot\|^{H_A}_{\diamond,E_A}$ defined in (\ref{ec-d-norm}) can be used in the role of $\|\cdot\|_{Y}$ in  (\ref{e-cond-1-c}).}
\end{enumerate}
\end{corollary}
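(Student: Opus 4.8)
The plan is to derive the corollary directly from Proposition \ref{main-1} by checking that its output-rank hypothesis is automatically satisfied in the present setting. The only nontrivial ingredient is the observation that a channel $\Phi$ with finite Choi rank $d=\mathrm{Ch}(\Phi)$ sends every pure input state to an output state of rank at most $d$. To see this I would use the minimal Stinespring dilation $V_{\Phi}:\H_A\to\H_B\otimes\H_E$ with $\dim\H_E=d$: for a pure state $\rho=|\psi\rangle\langle\psi|$, the operator $V_{\Phi}\rho V_{\Phi}^{*}=|V_{\Phi}\psi\rangle\langle V_{\Phi}\psi|$ is a pure state on $\H_B\otimes\H_E$, so its marginals on $B$ and on $E$ share the same nonzero spectrum, and the $E$-marginal lives on the $d$-dimensional space $\H_E$; hence $\rank\Phi(\rho)=\rank\bigl(\Tr_B V_{\Phi}\rho V_{\Phi}^{*}\bigr)\leq d$.

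Having established this, I would set $r_B=\mathrm{Ch}(\Phi)$, which by hypothesis lies in $\N\cap[2,+\infty)$, and note that the condition ``$\rank\Phi(\rho)\leq r_B$ for $\mu$-almost all $\rho$'' of Proposition \ref{main-1}A holds for the pure-state ensemble $\mu$ (indeed for every $\rho$ in its support, a fortiori $\mu$-almost everywhere). Proposition \ref{main-1}A then yields the asserted inequality together with condition (\ref{e-cond-1-c}), which is exactly (\ref{e-cond-1}) with $r_B$ replaced by $\mathrm{Ch}(\Phi)$, as well as the complementary bound $\ln\mathrm{Ch}(\Phi)$ once $D_X(\mu,\nu)+\frac{1}{2}\|\Phi-\Psi\|_{Y}$ exceeds the threshold $1-1/\mathrm{Ch}(\Phi)$. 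Finally, the two independent replacements announced in the statement — using $D_0$ for discrete ordered ensembles, and using the energy-constrained diamond norm $\|\cdot\|^{H_A}_{\diamond,E_A}$ when one of the average states has finite $H_A$-energy — are precisely clauses B-1 and B-2 of Proposition \ref{main-1}, so they transfer verbatim.

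I do not anticipate a genuine obstacle here: the statement is essentially a specialization of Proposition \ref{main-1}. The one point deserving care is the rank bound for pure-state outputs, and in particular the identification of the minimal environment dimension with the Choi rank (which is how the Choi rank is defined in Section 2.1). The case $\mathrm{Ch}(\Phi)=1$ is excluded from the statement and is trivial anyway, since a Choi-rank-one channel is the compression of an isometry and hence maps pure states to pure states, giving $\overline{S}(\Phi(\mu))=0$ for any pure-state ensemble $\mu$.
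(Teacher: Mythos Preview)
Your proposal is correct and follows exactly the paper's own route: the paper derives the corollary from Proposition~\ref{main-1} by the single observation that $\rank\Phi(\rho)\leq\mathrm{Ch}(\Phi)$ for any pure state $\rho$, which is precisely the Stinespring-based rank bound you spell out. Your argument is simply a more detailed version of the same one-line reduction.
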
\smallskip

To derive the claims of Corollary \ref{main-1-c} from Proposition \ref{main-1}
it suffices to note that  $\,\rank\Phi(\rho)\leq\mathrm{Ch}(\Phi)$ for any pure state $\rho\in\S(\H_A)$.\medskip


The following proposition is an energy  constrained version of Proposition \ref{main-1}. We use the notion of the \emph{average output passive energy} of a channel $\Phi:A\to B$ at a (generalized)
ensemble $\mu$ in $\P(\H_A)$ w.r.t. Hamiltonian $H_B$ of $B$ defined as
\begin{equation}\label{OPE-def}
 \bar{E}_{H_B}^{\rm psv}(\Phi(\mu))=\int_{\S(\H_A)}E_{H_B}^{\rm psv}(\Phi(\rho))\mu(d\rho)=\int_{\S(\H_B)}E_{H_B}^{\rm psv}(\sigma)[\Phi(\mu)](d\sigma),
\end{equation}
i.e. as the average passive energy of the ensemble $\Phi(\mu)$ (see definition (\ref{p-a-e-def}) in Section 2.2.3).
It is clear that
\begin{equation}\label{OPE-def+}
 \bar{E}_{H_B}^{\rm psv}(\Phi(\mu))\leq\int_{\S(\H_A)}\Tr H_B\Phi(\rho)\mu(\rho)=\Tr H_B\Phi(\bar{\rho}(\mu))
\end{equation}
and that the difference between $\Tr H_B\Phi(\bar{\rho}(\mu))$ and $\bar{E}_{H_B}^{\rm psv}(\Phi(\mu))$ (that can be called average output ergotropy)
may be large: it is easy to find a channel $\Phi$, an input ensemble $\mu$ and output Hamiltonian $H_B$ s.t. $\bar{E}_{H_B}^{\rm psv}(\Phi(\mu))=0$ and $\Tr H_B\Phi(\bar{\rho}(\mu))=+\infty$.\smallskip

In Examples \ref{C-states}, \ref{main-3-e+} and \ref{C-states+} below, the quantity $\bar{E}_{H}^{\rm psv}(\Phi(\mu))$ is  estimated for real quantum channels $\Phi$ and ensembles $\mu$.
\smallskip

\begin{proposition}\label{main-2} \emph{Let
$\,\Phi$ be an  arbitrary channel from $A$ to $B$ and $\mu$ a (generalized) ensemble in $\P(\H_A)$. Let $H_B$ be a positive operator on the space $\H_B$ satisfying conditions (\ref{H-cond}) and (\ref{star}). Let $F_{H_B}$ and $g$ be the functions defined, respectively, in (\ref{F-def}) and (\ref{g-def}).}\smallskip

\noindent A) \emph{If $\,\bar{E}_{H_B}^{\rm psv}(\Phi(\mu))=E_B<+\infty\,$ then
\begin{equation}\label{main-2+}
\overline{S}(\Phi(\mu))-\overline{S}(\Psi(\nu))\leq \varepsilon F_{H_B}(E_B/\varepsilon)+g(\varepsilon)
\end{equation}
for any (generalized) ensemble $\nu$ in $\P(\H_A)$ and any channel $\Psi:A\to B$ such that
\begin{equation}\label{e-cond-2}
D_X(\mu,\nu)+\textstyle\frac{1}{2}\|\Phi-\Psi\|_{Y}\leq\varepsilon,
\end{equation}
where $D_X$ is one of the metrics $D_K$ and $D_*$ defined, respectively, in [(\ref{K-D-d}),(\ref{K-D-c})] and [(\ref{f-metric}),(\ref{f-metric+})],\footnote{These double references indicate the "discrete" and the general definitions of a metric.} $\|\cdot\|_{Y}$ is either the diamond norm $\|\cdot\|_{\diamond}$ defined in (\ref{d-norm}) or its unstabilized version $\|\cdot\|_{1\to1}$ defined in (\ref{d-norm+}).}\smallskip\pagebreak

\noindent B) \emph{The following replacements can be done independently on each other:}
\begin{enumerate}[(1)]
\item \emph{if $\mu=(p_i,\rho_i)$ and $\nu=(q_i,\sigma_i)$ are discrete ordered
ensembles then the metric $D_0$ defined in (\ref{D-0-metric}) can be used in the role of $D_X$ in  (\ref{e-cond-2});} \smallskip
\item  \emph{$E_B=\bar{E}_{H_B}^{\rm psv}(\Phi(\mu))$ in (\ref{main-2+}) can be replaced with $\,E_B=\Tr H_B\Phi(\bar{\rho}(\mu))$;} \smallskip
\item \emph{if either $\,\Tr H_A\bar{\rho}(\mu)=E_{A}<+\infty\,$ or $\,\Tr H_A\bar{\rho}(\nu)=E_{A}<+\infty$, where $H_A$ is a positive operator on $\H_A$, then the energy-constrained diamond norm $\|\cdot\|^{H_A}_{\diamond,E_A}$ defined in (\ref{ec-d-norm}) can be used in the role of $\|\cdot\|_{Y}$ in  (\ref{e-cond-2}).}
\end{enumerate}
\smallskip

\emph{If $\mu=(p_i,\rho_i)$ and $\nu=(q_i,\sigma_i)$ are discrete ordered
ensembles and the metric  $D_0$ is used in (\ref{e-cond-2}) in the role of  $D_X$ then $E_B=\bar{E}_{H_B}^{\rm psv}(\Phi(\mu))$ in (\ref{main-2+}) can be replaced by $\,E_B-E_B(\varepsilon)$, where
\begin{equation*}
E_B(\varepsilon)\doteq \sum_{i}\bar{E}_{H_B}^{\rm psv}[p_i\Phi(\rho_i)-\varepsilon I_{B}]_+,
\end{equation*}
$\,[p_i\Phi(\rho_i)-\varepsilon  I_{B}]_+$ is the positive part of the Hermitian operator $\,p_i\Phi(\rho_i)-\varepsilon  I_{B}$.}\medskip

\noindent C) \emph{Let $E>0$ and $\varepsilon\in(0,1]$. Then}
\begin{enumerate}[(1)]
\item \emph{there exist discrete ensembles  $\mu$ and $\nu$ in $\P(\H_A)$ and a channel $\Phi$  from $A$ to $B$ such that
$$
\overline{S}(\Phi(\mu))-\overline{S}(\Phi(\nu))> \varepsilon F_{H_B}(E/\varepsilon),\quad \bar{E}_{H_B}^{\rm psv}(\Phi(\mu))=E\quad \textit{and} \quad D_X(\mu,\nu)=\varepsilon,
$$
where  $D_X$ is one of the metrics $D_0,D_*,D_K$;}

\item \emph{there exist channels $\,\Phi$ and $\,\Psi$  from $A$ to $B$ and a discrete  ensemble  $\mu$ in $\P(\H_A)$ such that}
$$
\overline{S}(\Phi(\mu))-\overline{S}(\Psi(\mu))> \varepsilon F_H(E/\varepsilon),\quad \bar{E}_{H_B}^{\rm psv}(\Phi(\mu))=E\quad \textit{and} \quad \textstyle\frac{1}{2}\|\Phi-\Psi\|_{\diamond}\leq\varepsilon.
$$
\end{enumerate}
\end{proposition}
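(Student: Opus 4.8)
The plan is to mirror the proof of Proposition \ref{main-1}, replacing the rank‑constrained semicontinuity bound for the QCE of q‑c states (Lemma \ref{Wilde-SCB}) by the energy‑constrained one (Lemma \ref{qce-qc-1}), and using the third limit relation (\ref{b-lem+++}) of the approximation Lemma \ref{b-lem} to carry the discrete‑case estimates over to generalized ensembles.

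For claims A and B I would first record two easy reductions: the hypothesis $\bar{E}_{H_B}^{\rm psv}(\Phi(\mu))=E_B<+\infty$ forces $\overline{S}(\Phi(\mu))\le F_{H_B}(E_B)<+\infty$ by (\ref{P-B+}), and (\ref{main-2+}) is trivial when $\overline{S}(\Psi(\nu))=+\infty$; moreover $\|\cdot\|_{1\to1}\le\|\cdot\|_{\diamond}$ lets us argue with $\|\cdot\|_{1\to1}$ except in B‑3. Then, for discrete $\mu=\{p_i,\rho_i\}$, $\nu=\{q_i,\sigma_i\}$ with $\bar{E}_{H_B}^{\rm psv}(\Phi(\mu))=E_B$, I proceed exactly as in Proposition \ref{main-1}A: for $\epsilon>0$ pick ordered representatives $(\tilde p_i,\tilde\rho_i)\in\mathcal{E}(\mu)$, $(\tilde q_i,\tilde\sigma_i)\in\mathcal{E}(\nu)$ nearly realizing $D_*$, form the q‑c states $\hat\rho=\sum_i\tilde p_i\tilde\rho_i\otimes|i\rangle\langle i|$, $\hat\sigma=\sum_i\tilde q_i\tilde\sigma_i\otimes|i\rangle\langle i|$ in $\S(\H_{AC})$, identify $\overline{S}(\Phi(\mu))=S(B|C)_{\Phi\otimes\id_C(\hat\rho)}$ and $\overline{S}(\Psi(\nu))=S(B|C)_{\Psi\otimes\id_C(\hat\sigma)}$ via (\ref{ce-rep}), and bound $\|\Phi\otimes\id_C(\hat\rho)-\Psi\otimes\id_C(\hat\sigma)\|_1$ by $\|\Phi-\Psi\|_{1\to1}+2D_*(\mu,\nu)+2\epsilon$ as in (\ref{norm+}) (or by its energy‑constrained analog via the first inequality there). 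Since the passive energy depends only on the state, the $B$‑marginal ensemble of $\Phi\otimes\id_C(\hat\rho)$ has average passive energy $\sum_i\tilde p_iE_{H_B}^{\rm psv}(\Phi(\tilde\rho_i))=\bar{E}_{H_B}^{\rm psv}(\Phi(\mu))=E_B$, so Lemma \ref{qce-qc-1} (with its $A,B$ playing the roles of our $B,C$) gives the difference bound $\varepsilon F_{H_B}((E_B-E_{H_B,\varepsilon}^{\rm psv})/\varepsilon)+g(\varepsilon)\le\varepsilon F_{H_B}(E_B/\varepsilon)+g(\varepsilon)$; letting $\epsilon\to0$ and using that $\varepsilon\mapsto\varepsilon F_{H_B}(E_B/\varepsilon)+g(\varepsilon)$ is nondecreasing (the remark after Lemma \ref{qce-qc-1}) yields (\ref{main-2+}) for discrete ensembles with $D_*$, hence with $D_0$ (claim B‑1) and $D_K$ by (\ref{d-ineq})–(\ref{d-ineq+}), and with $\|\cdot\|^{H_A}_{\diamond,E_A}$ in place of $\|\cdot\|_{1\to1}$ when the input energy is finite (claim B‑3). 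Claim B‑2 follows from $\bar{E}_{H_B}^{\rm psv}(\Phi(\mu))\le\Tr H_B\Phi(\bar\rho(\mu))$ (cf. (\ref{OPE-def+})) and monotonicity of $F_{H_B}$; the refinement $E_B\rightsquigarrow E_B-E_B(\varepsilon)$ for ordered ensembles with $D_0$ comes from applying the sharper form of Lemma \ref{qce-qc-1} directly to $\Phi\otimes\id_C(\hat\rho)=\sum_ip_i\Phi(\rho_i)\otimes|i\rangle\langle i|$, for which $E_{H_B,\varepsilon}^{\rm psv}$ is exactly $E_B(\varepsilon)$ and no $\mathcal{E}(\mu)$‑refinement is needed. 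Finally, for arbitrary $\mu,\nu$, apply Lemma \ref{b-lem} to $(\Phi,\mu)$ and $(\Psi,\nu)$ to get discrete $\mu_n\to\mu$, $\nu_n\to\nu$ with $\overline{S}(\Theta(\vartheta_n))\to\overline{S}(\Theta(\vartheta))$, $\Tr H_A\bar\rho(\vartheta_n)\to\Tr H_A\bar\rho(\vartheta)$, and, crucially, $\bar{E}_{H_B}^{\rm psv}(\Phi(\mu_n))\to E_B$; since $D_*$ and $D_K$ generate the weak topology, $D_X(\mu_n,\nu_n)\to D_X(\mu,\nu)$, and passing to the limit in the discrete bound (using continuity of $F_{H_B}$, $g$, and, in the B‑3 variant, of $E\mapsto\frac12\|\Phi-\Psi\|^{H_A}_{\diamond,E}$) proves A and B.

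For claim C I would give explicit examples built from the Gibbs state $\gamma=\gamma_{H_B}(E/\varepsilon)$, parallel to Proposition \ref{main-1}C. For C‑1 take $A$ a qubit, $\Phi$ the measure‑and‑prepare channel $\varrho\mapsto\langle0|\varrho|0\rangle\,|\tau_0\rangle\langle\tau_0|+\langle1|\varrho|1\rangle\,\gamma$, $\mu$ the singleton $\{(1-\varepsilon)|0\rangle\langle0|+\varepsilon|1\rangle\langle1|\}$ and $\nu$ the singleton $\{|0\rangle\langle0|\}$; then $D_0(\mu,\nu)=D_*(\mu,\nu)=D_K(\mu,\nu)=\varepsilon$, and since the ground energy is $0$ by (\ref{star}) one checks $\bar{E}_{H_B}^{\rm psv}(\Phi(\mu))=E_{H_B}^{\rm psv}((1-\varepsilon)|\tau_0\rangle\langle\tau_0|+\varepsilon\gamma)=\varepsilon\Tr H_B\gamma=E$, while the identity $S(\sum_ip_i\varrho_i)=\sum_ip_iS(\varrho_i)+\chi(\{p_i,\varrho_i\})$ with $\chi>0$ (because $|\tau_0\rangle\langle\tau_0|\neq\gamma$) gives $\overline{S}(\Phi(\mu))-\overline{S}(\Phi(\nu))=S((1-\varepsilon)|\tau_0\rangle\langle\tau_0|+\varepsilon\gamma)=\varepsilon F_{H_B}(E/\varepsilon)+\chi>\varepsilon F_{H_B}(E/\varepsilon)$. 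For C‑2 keep $\mu$ a singleton pure state, let $\Psi$ be the channel preparing $|\tau_0\rangle\langle\tau_0|$ and $\Phi$ the channel preparing $(1-\varepsilon)|\tau_0\rangle\langle\tau_0|+\varepsilon\gamma$; then $\frac12\|\Phi-\Psi\|_{\diamond}=\varepsilon(1-\langle\tau_0|\gamma|\tau_0\rangle)\le\varepsilon$, $\bar{E}_{H_B}^{\rm psv}(\Phi(\mu))=E$ as above, and $\overline{S}(\Phi(\mu))-\overline{S}(\Psi(\mu))=\varepsilon F_{H_B}(E/\varepsilon)+\chi>\varepsilon F_{H_B}(E/\varepsilon)$ (for $\varepsilon\in(0,1)$; the endpoint $\varepsilon=1$ is degenerate, since there $\overline{S}(\Phi(\mu))\le F_{H_B}(E)$ by (\ref{P-B+})).

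The main obstacle, exactly as in Proposition \ref{main-1}, is the passage from discrete to generalized ensembles; this is precisely what Lemma \ref{b-lem} is built for, and the single new ingredient needed here — the convergence $\bar{E}_{H_B}^{\rm psv}(\Phi(\mu_n))\to E_B$ of the average output passive energies — is supplied by (\ref{b-lem+++}). The remaining difficulty is bookkeeping: keeping track of the several norm variants ($\|\cdot\|_{1\to1}$, $\|\cdot\|_\diamond$, $\|\cdot\|^{H_A}_{\diamond,E_A}$) and metrics ($D_0,D_*,D_K$) simultaneously, and observing that the refined estimate $E_B-E_B(\varepsilon)$ is claimed only for ordered ensembles with $D_0$, where the relevant trace distance is controlled by $\varepsilon$ directly so that the $\varepsilon$‑dependence inside Lemma \ref{qce-qc-1} matches (\ref{e-cond-2}).
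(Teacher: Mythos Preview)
Your proof of claims A and B is correct and follows essentially the same line as the paper's: reduce to discrete ensembles via representatives nearly realizing $D_*$, form the q--c states, apply Lemma \ref{qce-qc-1} (in place of Lemma \ref{Wilde-SCB}) to the difference (\ref{one-v})--(\ref{two-v}), then extend to generalized ensembles via Lemma \ref{b-lem} using the limit relation (\ref{b-lem+++}); the handling of $D_0$, $D_K$, the energy-constrained norm, and the refinement $E_B\rightsquigarrow E_B-E_B(\varepsilon)$ all match the paper. For claim C the paper uses slightly different examples --- it takes $B=A$ with $\Phi=\id_A$ and singleton ensembles $\mu=\{\varepsilon\gamma_{H_B}(E/\varepsilon)+(1-\varepsilon)|\tau_0\rangle\langle\tau_0|\}$, $\nu=\{|\tau_0\rangle\langle\tau_0|\}$ for C-1, and $\Phi(\varrho)=(1-\varepsilon)\varrho+\varepsilon\gamma_{H_B}(E/\varepsilon)$, $\Psi=\id_A$ for C-2 --- but your measure-and-prepare/constant-channel constructions produce the same output state and the same entropy gap, so the two routes are interchangeable (and your observation about the degenerate endpoint $\varepsilon=1$ applies equally to the paper's examples).
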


\noindent\textbf{Note A:} The r.h.s. of (\ref{main-2+}) is a nondecreasing
function of $\varepsilon$ which tends to zero as $\varepsilon\to0$. This follows from inequality (\ref{W-L}) and the equivalence of (\ref{H-cond}) and (\ref{H-cond-a}).
\smallskip

\noindent\textbf{Note B:} Claim C shows that the semicontinuity bound presented in claim A is close-to-optimal  for large $E$.

\begin{proof} Since $\,\|\cdot\|_{1\to1}\leq\|\cdot\|_{\diamond}$, it suffices
to prove all the claims of the proposition assuming that  $\,\|\cdot\|_{1\to1}$ is used
in the role of $\,\|\cdot\|_{Y}$ (excepting the case in B-3).\smallskip

A) The condition $\,\bar{E}_{H_B}^{\rm psv}(\Phi(\mu))=E_B<+\infty$  implies that $\overline{S}(\Phi(\mu))\leq F_{H_B}(E_B)<+\infty$ by inequality (\ref{P-B+}).\smallskip

Assume that $\mu=\{p_i,\rho_i\}$ and $\nu=\{q_i,\sigma_i\}$ are discrete ensembles in $\P(\H_A)$ and that
$\,\bar{E}_{H_B}^{\rm psv}(\Phi(\mu))=E_B<+\infty$.  Take any
$\epsilon>0$. Let  $\,(\tilde{p}_i,\tilde{\rho}_i)$ and $(\tilde{q}_i,\tilde{\sigma}_i)$ be ordered ensembles belonging, respectively, to the sets $\mathcal{E}(\{p_i,\rho_i\})$ and $\mathcal{E}(\{q_i,\sigma_i\})$ such that
\begin{equation*}
D_*(\{p_i,\rho_i\}, \{q_i,\sigma_i\})\geq D_0((\tilde{p}_i,\tilde{\rho}_i), (\tilde{q}_i,\tilde{\sigma}_i))-\epsilon
\end{equation*}
(see definition (\ref{f-metric}) of $D_*$). Then we repeat the arguments from the proof of  Proposition \ref{main-1} up to (\ref{norm++}).

Since  $\,\bar{E}_{H_B}^{\rm psv}(\{\tilde{p}_i,\Phi(\tilde{\rho}_i)\})=E_B<+\infty$, by using (\ref{norm+}) and applying  Lemma \ref{qce-qc-1}  in Section 2.3
to estimate the difference between (\ref{one-v}) and (\ref{two-v}) we obtain  (because of the possibility to take $\epsilon$ arbitrarily small and due to the continuity of $F_{H_B}$) that (\ref{main-2+}) holds for any $\,\varepsilon\geq D_*(\mu,\nu)+\textstyle \frac{1}{2}\|\Phi-\Psi\|_{1\to1}$.

If either $\,\Tr H_A\bar{\rho}(\mu)=E_{A}<+\infty$ or $\,\Tr H_A\bar{\rho}(\nu)=E_{A}<+\infty$ then
by using (\ref{norm++}) instead of (\ref{norm+}) we conclude that (\ref{main-2+}) holds for any $\,\varepsilon\geq D_*(\mu,\nu)+\textstyle\frac{1}{2}\|\Phi-\Psi\|^{H_A}_{\diamond,E_A}$.\smallskip

Thus, we have proved the inequality (\ref{main-2+}) for discrete ensembles
$\mu$ and $\nu$  provided that condition (\ref{e-cond-2}) holds with  the distance $D_*(\mu,\nu)$ between them (in the role of $D_X(\mu,\nu)$).
We have proved also that $\|\Phi-\Psi\|_{Y}$ in condition (\ref{e-cond-2}) can be replaced by $\|\Phi-\Psi\|^{H_A}_{\diamond,E_A}$ provided that
either $\,\Tr H_A\bar{\rho}(\mu)=E_{A}<+\infty$ or $\,\Tr H_A\bar{\rho}(\nu)=E_{A}<+\infty$.
Since the r.h.s. of (\ref{main-2+}) is a nondecreasing function of $\varepsilon$, inequalities (\ref{d-ineq}) and (\ref{d-ineq+}) show that
all these claims  are valid for discrete ensembles
$\mu$ and $\nu$ provided that the quantities  $D_0(\mu,\nu)$ and $D_K(\mu,\nu)$ are used in condition (\ref{e-cond-2}) in the role of $D_X(\mu,\nu)$.
\smallskip

Assume now that $\mu$ and $\nu$ are arbitrary generalized ensembles. We may assume that $\overline{S}(\Psi(\nu))<+\infty$,
since otherwise inequality (\ref{main-2+}) holds trivially. By Lemma \ref{b-lem} in Section 3 there exist sequences
$\{\mu_n\}$ and $\{\nu_n\}$ of discrete ensembles in $\P(\H_A)$ weakly converging to the ensembles
 $\mu$ and $\nu$ correspondingly such that
\begin{equation}\label{nb-lem+}
\lim_{n\to+\infty}\overline{S}(\Theta(\vartheta_n))=\overline{S}(\Theta(\vartheta))<+\infty,\quad (\Theta,\vartheta)=(\Phi,\mu),(\Psi,\nu),
\end{equation}
\begin{equation}\label{nb-lem++}
\lim_{n\to+\infty}\Tr H_A\bar{\rho}(\vartheta_n)=\Tr H_A\bar{\rho}(\vartheta)\leq+\infty,\quad \vartheta=\mu,\nu,
\end{equation}
and
\begin{equation}\label{nb-lem+++}
\lim_{n\to+\infty}\bar{E}_{H_B}^{\rm psv}(\Phi(\mu_n))=\bar{E}_{H_B}^{\rm psv}(\Phi(\mu))<+\infty.
\end{equation}

Let $D_X$ be either $D_*$ or $D_K$. The weak convergence of $\{\mu_n\}$ and $\{\nu_n\}$ to  $\mu$ and $\nu$ implies that (see Section 2.2.2)
\begin{equation}\label{D-lim}
\lim_{n\to+\infty}D_X(\mu_n,\nu_n)=D_X(\mu,\nu).
\end{equation}

By the above part of the proof we have
\begin{equation}\label{main-2+c}
\overline{S}(\Phi(\mu_n))-\overline{S}(\Psi(\nu_n))\leq \varepsilon_n F_H(E_B^n/\varepsilon_n)+g(\varepsilon_n),
\end{equation}
where $\varepsilon_n=D_X(\mu_n,\nu_n)+\frac{1}{2}\|\Phi-\Psi\|_{1\to1}$ and $E_B^n=\bar{E}_{H_B}^{\rm psv}(\Phi(\mu_n))$.
By taking the limit as $n\to+\infty$ and applying the limit relations (\ref{nb-lem+}), (\ref{nb-lem+++}) and (\ref{D-lim})
we conclude that  (\ref{main-2+}) holds with $\varepsilon=D_X(\mu,\nu)+\frac{1}{2}\|\Phi-\Psi\|_{1\to1}$ (due to the continuity of the functions $F_{H_B}$ and $g$).
Hence, (\ref{main-2+})  holds for all $\,\varepsilon\geq D_X(\mu,\nu)+\frac{1}{2}\|\Phi-\Psi\|_{1\to1}$, since the r.h.s. of (\ref{main-2+}) is a nondecreasing function of $\varepsilon$ (this follows from (\ref{W-L})).

If $\,\Tr H_A\bar{\rho}(\vartheta)=E_{A}<+\infty$, where $\vartheta$ is  either $\mu$ or $\nu$,
then
the above part of the proof shows that (\ref{main-2+c}) holds with $\varepsilon_n=D_X(\mu,\nu)+\frac{1}{2}\|\Phi-\Psi\|^{H_A}_{\diamond,E^n_A}$,
where $E^n_A=\Tr H_A\bar{\rho}(\vartheta_n)$ (if $E_A^n=+\infty$ then $\,\|\Phi-\Psi\|^{H_A}_{\diamond,E^n_A}=\|\Phi-\Psi\|_{\diamond}$). So, by taking the limit as $n\to+\infty$ and applying the limit relations (\ref{nb-lem+}),(\ref{nb-lem++}), (\ref{nb-lem+++}) and (\ref{D-lim})
we conclude that (\ref{main-2+}) holds with $\varepsilon=D_X(\mu,\nu)+\frac{1}{2}\|\Phi-\Psi\|^{H_A}_{\diamond,E_A}$ (due to the continuity of the functions $F_{H_B}$, $g$ and $E\mapsto \|\Phi-\Psi\|^{H_A}_{\diamond,E}$). Hence, (\ref{main-2+})  holds for all $\,\varepsilon\geq D_X(\mu,\nu)+\frac{1}{2}\|\Phi-\Psi\|^{H_A}_{\diamond,E_A}$.
\smallskip

B) Claim B-1 and B-3 are proved within the proof of claim A.
Claim B-2 follows from the inequality (\ref{OPE-def+}), since the r.h.s. of (\ref{main-2+}) is a nondecreasing function of $\varepsilon$.
\smallskip

It remains to prove the possibility to replace $E_B$ in (\ref{main-2+}) by $E_B-E_B(\varepsilon)$ provided that
$\mu=(p_i,\rho_i)$ and $\nu=(q_i,\sigma_i)$ are ordered  discrete ensembles and  condition (\ref{e-cond-2}) holds with $D_X(\mu,\nu)=D_0(\mu,\nu)$.
This can be done by repeating all the above steps for the ensembles  $\,(p_i,\rho_i)$ and $(q_i,\sigma_i)$
instead of $\,(\tilde{p}_i,\tilde{\rho}_i)$ and $(\tilde{q}_i,\tilde{\sigma}_i)$ and using  Lemma \ref{qce-qc-1}  in Section 2.3
at the last step.\smallskip

C) To prove claim C-1 assume that $B=A$ and that $\mu$ and $\nu$ are singleton ensembles consisting, respectively,  of the states $\rho=\varepsilon \gamma_{H_B}(E/\varepsilon)+(1-\varepsilon)|\tau_0\rangle\langle\tau_0|$
and $\sigma=|\tau_0\rangle\langle\tau_0|$, where $\gamma_{H_B}(E/\varepsilon)$ is the Gibbs state (\ref{Gibbs}) corresponding to the energy  $E/\varepsilon$ and
$\tau_0$ is a unit vector from the kernel of $H_B$.  Then by taking $\Phi=\id_A$ we obtain
$$
\overline{S}(\Phi(\mu))-\overline{S}(\Phi(\nu))=S(\rho)-S(\sigma)>\varepsilon F_{H_B}(E/\varepsilon),\quad \bar{E}_{H_B}^{\rm psv}(\Phi(\mu))=\Tr H_B\rho=E
$$
and $D_X(\mu,\nu)=\frac{1}{2}\|\rho-\sigma\|_1\leq\varepsilon$, $D_X=D_0,D_*,D_K$.

To prove claim C-2 assume that $B=A$, $\,\Phi(\varrho)=(1-\varepsilon)\varrho+\varepsilon\gamma_{H_B}(E/\varepsilon)$, $\varrho\in\S(\H_A)$,
and $\Psi=\id_A$, where $\gamma_{H_B}(E/\varepsilon)$ is the Gibbs state (\ref{Gibbs}) corresponding to the energy  $E/\varepsilon$. So, if $\mu$ is the singleton ensemble
consisting of the state $\,\sigma=|\tau_0\rangle\langle\tau_0|$, where $\tau_0$ is a unit vector from the kernel of $H_B$, then
$$
\overline{S}(\Phi(\mu))-\overline{S}(\Psi(\mu))=S(\rho)-S(\sigma)>\varepsilon F_{H_B}(E/\varepsilon),\quad \bar{E}_{H_B}^{\rm psv}(\Phi(\mu))=\Tr H_B\rho=E,
$$
where $\rho$ is  the state introduced in the proof of claim C-1. Since it is easy to see that $\|\Phi\otimes\id_R(\omega)-\Psi\otimes\id_R(\omega)\|_1\leq 2\varepsilon\,$
for any state $\omega\in\S(\H_{AR})$ (where $R$ is any quantum system), we have $\|\Phi-\Psi\|_{\diamond}\leq 2\varepsilon$.
\end{proof}

\begin{example}\label{C-states} Assume that $A$ is an one-mode quantum oscillator and $\Phi_{k,N_c}:A\to A$ is
the amplification Gaussian channel with the amplification coefficient $k>1$ and the power of environment noise $\,N_c\geq0$.
This channel can be defined via the action of the dual channel $\Phi^*_{k,N_c}$ on the Weyl operators as follows
\begin{equation*}
\Phi^*_{k,N_c}(W(z))=W(kz)\exp\left(-\frac{1}{2}\left(N_c+\frac{k^2-1}{2}\right)(x^2+y^2)\right),\quad z=(x,y)\in Z,
\end{equation*}
where $Z$ is 2-D symplectic space \cite[Ch.12]{H-SCI}. Assume that  $H_A=\hat{N}\doteq a^{\dagger}a$ is the number operator on $\H_A$.

Let $\Upsilon(\zeta)=\vert\zeta\rangle\langle\zeta\vert$ be the coherent state of $A$ corresponding to a complex number $\zeta$.
Consider the ensemble $\mu=\Upsilon(\pi)$, where $\pi$ is the probability measure on $\mathbb{C}$ with the density $p_{\mu}(\zeta)=\frac{1}{\pi N}\exp(-|\zeta|^2/N)$ (such ensemble is typically denoted by $\{\pi,\vert \zeta\rangle\langle \zeta\vert\}_{\zeta\in\mathbb{C}}$, see Section 6.2). Then $\bar{\rho}(\mu)=\gamma_{\hat{N}}(N)$ is
the Gibbs state corresponding to the
"energy" (the number of quanta) $N$, while $\Phi_{k,N_c}(\bar{\rho}(\mu))=\gamma_{\hat{N}}(k^2N+k^2-1+N_c)$ \cite[Ch.12]{H-SCI}.\smallskip

Since for any $\zeta\in\mathbb{C}$ the state $\Phi_{k,N_c}(\vert \zeta\rangle\langle \zeta\vert)$
is unitary equivalent to the Gibbs state $\Phi_{k,N_c}(\vert 0\rangle\langle 0\vert)=\gamma_{\hat{N}}(k^2-1+N_c)$, we have
$$
\bar{E}_{\hat{N}}^{\rm psv}(\Phi_{k,N_c}(\mu))=\Tr\hat{N}\gamma_{\hat{N}}(k^2-1+N_c)=k^2-1+N_c.
$$
Note that $\,\bar{E}_{\hat{N}}^{\rm psv}(\Phi_{k,N_c}(\mu))\ll \Tr\hat{N}\Phi_{k,N_c}(\bar{\rho}(\mu))=k^2N+k^2-1+N_c\,$ for large $N$.\smallskip

Thus, as $F_{\hat{N}}(E)=g(E)$, Proposition  \ref{main-2} implies that
\begin{equation}\label{main-2-e+}
\overline{S}(\Phi_{k,N_c}(\mu))-\overline{S}(\Psi(\nu))\leq \varepsilon g((k^2-1+N_c)/\varepsilon)+g(\varepsilon)
\end{equation}
for any  ensemble $\nu$ of states in $\S(\H_{A})$ and any channel $\Psi:A\to A$ such that
\begin{equation}\label{e-cond-2-e}
D_X(\mu,\nu)+\textstyle\frac{1}{2}\|\Phi_{k,N_c}-\Psi\|_{Y}\leq\varepsilon,
\end{equation}
where $D_X$ is one of the metrics $D_K$ and $D_*$ and $\|\cdot\|_{Y}$ is either the norm $\|\cdot\|_{1\to1}$ defined in (\ref{d-norm+}) or  the energy-constrained  diamond norm $\|\cdot\|^{\hat{N}}_{\diamond,N}$ defined in (\ref{ec-d-norm}). The norm $\|\cdot\|_{1\to1}$  is easier to calculate and may be less than
the norm $\|\cdot\|^{\hat{N}}_{\diamond,N}$ for large $N$.
\smallskip

Since $\,\overline{S}(\Phi_{k,N_c}(\mu))=S(\gamma_{\hat{N}}(k^2-1+N_c))=g(k^2-1+N_c)$, it follows from (\ref{main-2-e+}) that
\begin{equation}\label{main-2-e++}
\overline{S}(\Psi(\nu))\geq g(k^2-1+N_c)-\varepsilon g((k^2-1+N_c)/\varepsilon)-g(\varepsilon)
\end{equation}
for any  ensemble $\nu$ of states in $\S(\H_{A})$ and any channel $\Psi:A\to A$ satisfying (\ref{e-cond-2-e}), i.e. we obtain
a lower bound on $\overline{S}(\Psi(\nu))$  valid for all $(\Psi,\nu)$ in the $\varepsilon$-vicinity of $(\Phi_{k,N_c},\mu)$.
Note that the r.h.s. of  (\ref{main-2-e++}) tends to $\,\overline{S}(\Phi_{k,N_c}(\mu))=g(k^2-1+N_c)\,$ as $\,\varepsilon\to0$, i.e.
the local lower bound (\ref{main-2-e++}) is faithful.
\end{example}

\section{Applications}



\subsection{Semicontinuity and continuity bounds for the output Holevo information of a quantum channel}

One of the reasons showing the necessity  to investigate the properties of the AOE of a quantum channel
consists in the fact that the AOE is an integral part of the well known expression
\begin{equation}\label{H-Q++}
\chi(\Phi(\mu))=S(\Phi(\bar{\rho}(\mu))-\overline{S}(\Phi(\mu))
\end{equation}
for the output
Holevo information of a channel $\Phi$ at a (generalised) ensemble $\mu$ of input states, which is valid
provided that $S(\Phi(\bar{\rho}(\mu))<+\infty$, where $\bar{\rho}(\mu)$ is the average state of $\mu$ \cite{H-SCI,H-Sh-2}.\footnote{A general definition of $\chi(\Phi(\mu))$ is given by the expression (\ref{H-Q}).}

By using expression (\ref{H-Q++}) and the semicontinuity bounds for the entropy and for the AOE obtained, respectively, in \cite{LCB} and in Section 4
one can get semicontinuity bounds for the function  $(\Phi,\mu)\to\chi(\Phi(\mu))$  under the rank/energy constraints imposed on the output
of a channel.\footnote{Strictly speaking, the inequalities obtained in Proposition \ref{main-3} cannot be called semicontinuity bounds in the sense described in the Introduction,
because these inequalities hold under the rank/energy constraint imposed on both pairs $(\Phi,\mu)$ and $(\Psi,\nu)$.}\smallskip

\begin{proposition}\label{main-3} \emph{Let $\Phi$ and $\Psi $ be quantum channels from $A$ to $B$ and  $\mu$ and $\nu$  (generalized) ensembles of states in $\S(\H_A)$ such that
\begin{equation}\label{e-cond-3}
D_X(\mu,\nu)+\textstyle\frac{1}{2}\|\Phi-\Psi\|_{Y}\leq\varepsilon\leq1,
\end{equation}
where $D_X$ is one of the metrics $D_K$ and $D_*$ defined, respectively, in [(\ref{K-D-d}),(\ref{K-D-c})] and [(\ref{f-metric}),(\ref{f-metric+})],\footnote{These double references indicate the "discrete" and the general definitions of the metrics.}
$\|\cdot\|_{Y}$ is either the diamond norm $\|\cdot\|_{\diamond}$ defined in (\ref{d-norm}) or its unstabilized version $\|\cdot\|_{1\to1}$ defined in (\ref{d-norm+}).}\smallskip

\emph{Let $H^{\mu}_B$ and $H^{\nu}_B$ be positive operators on $\H_B$ satisfying conditions (\ref{H-cond}) and (\ref{star}).}\smallskip

\emph{Assume that
$$
 \textit{either }\rank\Phi(\bar{\rho}(\mu))\leq r_\mu<+\infty\;\, (\textit{case } A_1)\;\textit{ or }\; \Tr H^{\mu}_B\Phi(\bar{\rho}(\mu))\leq E_\mu<+\infty\;\, (\textit{case }A_2)
$$
and
\begin{equation}\label{e-cond-3A}
\textit{either }\;\; \rank\Psi(\rho)\leq r_\nu<+\infty\;\;\textit{ for }\nu\textit{-almost all }\rho\;\, (\textit{case } B_1)
\end{equation}
\begin{equation}\label{e-cond-3B}
\textit{or }\;\; \bar{E}_{H^{\nu}_B}^{\rm psv}(\Psi(\nu))\leq E_\nu<+\infty\;\, (\textit{case }B_2),
\end{equation}
where $\bar{E}_{H^{\nu}_B}^{\rm psv}(\Psi(\nu))$ is the passive energy
of the ensemble $\Psi(\nu)$ w.r.t. $H^{\nu}_B$ defined in (\ref{p-a-e-def}).}\smallskip

\emph{If the cases $A_i$ and $B_j$ ($i,j=1,2$) are valid then
\begin{equation*}
\chi(\Phi(\mu))-\chi(\Psi(\nu))\leq A_i(\varepsilon)+B_j(\varepsilon),
\end{equation*}
where
\begin{itemize}
  \item $A_1(\varepsilon)=\varepsilon\ln(r_\mu-1)+h_2(\varepsilon)\,$ if $\;\varepsilon<1-1/r_{\mu}$ and $\,A_1(\varepsilon)=\ln r_{\mu}$ otherwise;
  \item $A_2(\varepsilon)=\varepsilon F_{H^{\mu}_B}(E_\mu/\varepsilon)+g(\varepsilon)$;
  \item $B_1(\varepsilon)=\varepsilon\ln(r_\nu-1)+h_2(\varepsilon)\,$ if $\;\varepsilon<1-1/r_{\nu}$ and $\,B_1(\varepsilon)=\ln r_{\nu}$ otherwise;
  \item $B_2(\varepsilon)=\varepsilon F_{H^{\nu}_B}(E_\nu/\varepsilon)+g(\varepsilon)$
\end{itemize}
($\shs F_{H^{x}_B}$, $x=\mu,\nu$, and $g$ are the functions defined, respectively, in (\ref{F-def}) and (\ref{g-def})).}\medskip

\emph{The following replacements can be done  independently on each other:}
\begin{enumerate}[(1)]
\item \emph{if $\mu=(p_i,\rho_i)$ and $\nu=(q_i,\sigma_i)$ are discrete ordered
ensembles then the metric $D_0$ defined in (\ref{D-0-metric}) can be used in the role of $D_X$  in (\ref{e-cond-3});} \smallskip
\item \emph{if either $\,\Tr H_A\bar{\rho}(\mu)=E_{A}<+\infty$ or $\,\Tr H_A\bar{\rho}(\nu)=E_{A}<+\infty$, where  $H_A$ is a positive operator on $\H_A$, then the energy-constrained diamond norm $\|\cdot\|^{H_A}_{\diamond,E_A}$ defined in (\ref{ec-d-norm}) can be used in the role of $\|\cdot\|_{Y}$ in (\ref{e-cond-3}).}
\item \emph{the condition $\,\bar{E}_{H^{\nu}_B}^{\rm psv}(\Psi(\nu))\leq E_\nu\,$ in (\ref{e-cond-3B}) can be replaced by one of the conditions
$\bar{E}_{H^{\nu}_B}^{\rm psv}(\Psi(\bar{\rho}(\nu)))\leq E_\nu
\,$ and $\,\Tr H^{\nu}_B\Psi(\bar{\rho}(\nu))\leq E_\nu$.}
\end{enumerate}
\end{proposition}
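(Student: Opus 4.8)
The plan is to reduce the statement to the decomposition of the output Holevo information into two already-controlled pieces and to combine the semicontinuity bound for the von Neumann entropy from \cite{LCB} with the semicontinuity bounds for the AOE proved in Section~4. Recall from (\ref{H-Q++}) that $\chi(\Phi(\mu))=S(\Phi(\bar{\rho}(\mu)))-\overline{S}(\Phi(\mu))$ whenever the average output entropy $\overline{S}(\Phi(\mu))$ is finite, with the convention that the right-hand side is $+\infty$ if $S(\Phi(\bar{\rho}(\mu)))=+\infty$ (cf.~\cite{H-Sh-2}). First I would settle the finiteness issues needed to split the difference $\chi(\Phi(\mu))-\chi(\Psi(\nu))$. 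In case $A_1$ one has $S(\Phi(\bar{\rho}(\mu)))\leq\ln r_\mu<+\infty$ and in case $A_2$ one has $S(\Phi(\bar{\rho}(\mu)))\leq F_{H^{\mu}_B}(E_\mu)<+\infty$ by (\ref{F-def}); hence $\chi(\Phi(\mu))$ is finite and, since $\chi\geq0$, $\overline{S}(\Phi(\mu))\leq S(\Phi(\bar{\rho}(\mu)))<+\infty$. In case $B_1$ one has $\overline{S}(\Psi(\nu))\leq\ln r_\nu<+\infty$ and in case $B_2$ one has $\overline{S}(\Psi(\nu))\leq F_{H^{\nu}_B}(E_\nu)<+\infty$ by (\ref{P-B+}); hence $\chi(\Psi(\nu))=S(\Psi(\bar{\rho}(\nu)))-\overline{S}(\Psi(\nu))$. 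If $S(\Psi(\bar{\rho}(\nu)))=+\infty$ then $\chi(\Psi(\nu))=+\infty$ and the asserted inequality is trivial, so it remains to treat the case in which all four quantities $S(\Phi(\bar{\rho}(\mu)))$, $S(\Psi(\bar{\rho}(\nu)))$, $\overline{S}(\Phi(\mu))$, $\overline{S}(\Psi(\nu))$ are finite, where one may write
\begin{equation*}
\chi(\Phi(\mu))-\chi(\Psi(\nu))=\bigl[\,S(\Phi(\bar{\rho}(\mu)))-S(\Psi(\bar{\rho}(\nu)))\,\bigr]+\bigl[\,\overline{S}(\Psi(\nu))-\overline{S}(\Phi(\mu))\,\bigr].
\end{equation*}

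To bound the first bracket I would start from the proximity estimate for the average output states,
\begin{equation*}
\|\Phi(\bar{\rho}(\mu))-\Psi(\bar{\rho}(\nu))\|_1\leq\|(\Phi-\Psi)(\bar{\rho}(\mu))\|_1+\|\Psi(\bar{\rho}(\mu))-\Psi(\bar{\rho}(\nu))\|_1\leq\|\Phi-\Psi\|_{1\to1}+2D_X(\mu,\nu)\leq2\varepsilon,
\end{equation*}
which uses $\|\cdot\|_{1\to1}\leq\|\cdot\|_{\diamond}$, monotonicity of the trace norm under the channel $\Psi$, Lemma~\ref{a-s-l}, and condition (\ref{e-cond-3}); then applying the semicontinuity bound for the von Neumann entropy from \cite{LCB} with the rank constraint $\rank\Phi(\bar{\rho}(\mu))\leq r_\mu$ (case $A_1$) or the energy-type constraint $\Tr H^{\mu}_B\Phi(\bar{\rho}(\mu))\leq E_\mu$ (case $A_2$) imposed on the constrained state $\Phi(\bar{\rho}(\mu))$ --- which is exactly the direction of that bound --- yields $S(\Phi(\bar{\rho}(\mu)))-S(\Psi(\bar{\rho}(\nu)))\leq A_1(\varepsilon)$, respectively $\leq A_2(\varepsilon)$. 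For the second bracket I would invoke Proposition~\ref{main-1}A (case $B_1$, with $r_B=r_\nu$) or Proposition~\ref{main-2}A (case $B_2$, with $H_B=H^{\nu}_B$ and $E_B=E_\nu$), taking $(\Psi,\nu)$ as the constrained pair and $(\Phi,\mu)$ as the other pair: this is legitimate because condition (\ref{e-cond-3}) is invariant under the interchange $(\Phi,\mu)\leftrightarrow(\Psi,\nu)$, since $D_X$ is a metric and $\|\Phi-\Psi\|_Y=\|\Psi-\Phi\|_Y$. It gives $\overline{S}(\Psi(\nu))-\overline{S}(\Phi(\mu))\leq B_1(\varepsilon)$, respectively $\leq B_2(\varepsilon)$, and summing the two bounds proves the main inequality $\chi(\Phi(\mu))-\chi(\Psi(\nu))\leq A_i(\varepsilon)+B_j(\varepsilon)$.

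The three permitted replacements would be verified as follows. Replacement (1): for ordered discrete $\mu=(p_i,\rho_i)$, $\nu=(q_i,\sigma_i)$ the metric $D_0$ may be used, since Propositions~\ref{main-1} and \ref{main-2} already allow it for such ensembles, and since the estimate of the first bracket only needs the inequality $\|\bar{\rho}(\mu)-\bar{\rho}(\nu)\|_1\leq2D_0(\mu,\nu)$, immediate from $\|\sum_i(p_i\rho_i-q_i\sigma_i)\|_1\leq\sum_i\|p_i\rho_i-q_i\sigma_i\|_1$. Replacement (2): if $\Tr H_A\bar{\rho}(\mu)\leq E_A$ (resp.\ $\Tr H_A\bar{\rho}(\nu)\leq E_A$) one routes the channel-difference term of the first bracket through the state $\bar{\rho}(\mu)$ (resp.\ $\bar{\rho}(\nu)$), which satisfies the energy constraint, so that this term is bounded by $\|\Phi-\Psi\|^{H_A}_{\diamond,E_A}$; for the second bracket one uses the corresponding clauses of Propositions~\ref{main-1} and \ref{main-2}. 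Replacement (3): both alternative conditions in (\ref{e-cond-3B}) imply case $B_2$, because
\begin{equation*}
\bar{E}^{\rm psv}_{H^{\nu}_B}(\Psi(\nu))=\int_{\S(\H_A)}E^{\rm psv}_{H^{\nu}_B}(\Psi(\rho))\,\nu(d\rho)\leq E^{\rm psv}_{H^{\nu}_B}(\Psi(\bar{\rho}(\nu)))\leq\Tr H^{\nu}_B\Psi(\bar{\rho}(\nu)),
\end{equation*}
where the first inequality is the concavity of the passive energy (which follows from the representation $E^{\rm psv}_H(\rho)=\min_{U}\Tr (U^*HU)\rho$ over unitaries $U$) and the second is the ergotropy inequality $E^{\rm psv}_H\leq E_H$.

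Since the analytic content lies entirely in the two semicontinuity bounds being combined (for the von Neumann entropy in \cite{LCB}, and for the AOE in Section~4), I expect no deep obstacle here; the point that needs care is the bookkeeping of finiteness --- making the decomposition of $\chi(\Phi(\mu))-\chi(\Psi(\nu))$ into two brackets rigorous (hence the preliminary check that the four entropic quantities are finite and the separate, trivial treatment of $S(\Psi(\bar{\rho}(\nu)))=+\infty$) --- together with verifying that the proximity estimate $\frac{1}{2}\|\Phi(\bar{\rho}(\mu))-\Psi(\bar{\rho}(\nu))\|_1\leq\varepsilon$ persists under each of the norm and metric variants permitted in the statement.
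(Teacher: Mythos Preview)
Your proof is correct and follows essentially the same route as the paper: the same decomposition $\chi(\Phi(\mu))-\chi(\Psi(\nu))=[S(\Phi(\bar{\rho}(\mu)))-S(\Psi(\bar{\rho}(\nu)))]+[\overline{S}(\Psi(\nu))-\overline{S}(\Phi(\mu))]$, the same proximity estimate for the average output states via Lemma~\ref{a-s-l}, and the same application of the entropy semicontinuity bounds from \cite{LCB} to the first bracket and of Propositions~\ref{main-1} and \ref{main-2} to the second. Your treatment is in fact slightly more careful than the paper's on two points: you handle the case $S(\Psi(\bar{\rho}(\nu)))=+\infty$ explicitly (the paper simply asserts finiteness, which is not immediate from the $B_j$ conditions alone), and you give a concrete justification for replacement~(3) via concavity of the passive energy, which the paper dismisses as ``almost obvious''.
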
\smallskip

\noindent The functions $A_2(\varepsilon)$ and $B_2(\varepsilon)$ are nondecreasing and tend to zero as $\varepsilon\to0$. This follows from inequality (\ref{W-L}) and the equivalence of  (\ref{H-cond}) and  (\ref{H-cond-a}).

\begin{proof} The conditions of the proposition
implies the finiteness of $S(\Phi(\bar{\rho}(\mu))$ and $S(\Psi(\bar{\rho}(\nu))$ in all the cases $A_i$ and $B_j$, $i,j=1,2$. So, it follows from representation (\ref{H-Q++})  that
\begin{equation}\label{H-Q+++}
\chi(\Phi(\mu))-\chi(\Psi(\nu))=[S(\Phi(\bar{\rho}(\mu))-S(\Psi(\bar{\rho}(\nu))]+[\overline{S}(\Psi(\nu))-\overline{S}(\Phi(\mu))].
\end{equation}
Since $\,\|\cdot\|_{1\to1}\leq\|\cdot\|_{\diamond}$, it suffices
to prove all the claims of the proposition assuming that  $\,\|\cdot\|_{1\to1}$ is used
in the role of $\,\|\cdot\|_{Y}$ (excepting the case  when either $\,\Tr H_A\bar{\rho}(\mu)=E_{A}$ or $\,\Tr H_A\bar{\rho}(\nu)=E_{A}$).
\smallskip

The definition (\ref{d-norm+}) of the  norm $\,\|\cdot\|_{1\to1}$  and Lemma \ref{a-s-l} in Section 2.2.2 imply that
\begin{equation}\label{t-e}
\begin{array}{c}
\|\Phi(\bar{\rho}(\mu))-\Psi(\bar{\rho}(\nu))\|_1\leq \|\Phi(\bar{\rho}(\mu))-\Psi(\bar{\rho}(\mu))\|_1+\|\Psi(\bar{\rho}(\mu))-\Psi(\bar{\rho}(\nu))\|_1\\\\
\leq \|\Phi(\bar{\rho}(\mu))-\Psi(\bar{\rho}(\mu))\|_1+\|\bar{\rho}(\mu)-\bar{\rho}(\nu)\|_1\leq\|\Phi-\Psi\|_{1\to1}+2D_X(\mu,\nu).
\end{array}
\end{equation}
So, to prove the main claim it suffices to apply Propositions 2 and 1 in  \cite{LCB} in cases $A_1$  and  $A_2$, respectively,
to estimate the first term  in (\ref{H-Q+++}) and to apply Propositions \ref{main-1} and \ref{main-2} in Section 4 in cases $B_1$  and  $B_2$, respectively,
to estimate the second term  in (\ref{H-Q+++}).\smallskip

The last claims of the proposition are almost obvious. It suffices only to note that in the case
$\,\Tr H_A\bar{\rho}(\mu)=E_{A}\,$ the definition of the energy-constrained diamond norm (\ref{ec-d-norm})
allows us to replace  $\|\Phi-\Psi\|_{1\to1}$ in (\ref{t-e}) with $\|\Phi-\Psi\|^{H_A}_{\diamond,E_A}$, while
in the case
$\,\Tr H_A\bar{\rho}(\nu)=E_{A}<+\infty\,$ we may obtain the same estimate by permuting  the roles of the pairs $(\Phi,\mu)$ and $(\Psi,\nu)$.
\end{proof}

\begin{example}\label{main-3-e} Let $\Phi=\id_A$ and $\Psi$ be the erasure channels $\Omega_p$ from a  quantum system $A$ to its
 "extension" $B$  defined as
\begin{equation*}
\Omega_p(\rho)=(1-p)\rho+p\shs[\Tr\rho] |\tau_0\rangle\langle\tau_0|,\quad \rho\in\S(\H_A),
\end{equation*}
where $\tau_0$ is a unit vector in $\H_B$ orthogonal to $\H_A\subset\H_B$, $p\in[0,1]$ \cite[Ch.6]{H-SCI}.
It is easy to see that $\,\frac{1}{2}\|\Omega_{p}-\Omega_{q}\|_{\diamond}=|p-q|\,$ and $\,\chi(\Omega_p(\mu))=(1-p)\chi(\mu)\,$ for any ensemble $\mu$ of input states.

Let $\mu=\{p_i,\rho_i\}$ be a discrete ensemble of pure input states such that $\bar{\rho}(\mu)$ is a state proportional to a projector of a finite rank $r_{\mu}\geq 2$.
For given $\varepsilon>0$ consider the ensemble $\nu=\{p_i,(1-\varepsilon)\rho_i+\varepsilon\sigma\}$, where $\sigma$ is a pure state in $\S(\H_A)$.
Then $\bar{\rho}(\nu)=(1-\varepsilon)\bar{\rho}(\mu)+\varepsilon\sigma$ and hence $\chi(\nu)\leq S(\bar{\rho}(\nu))\leq (1-\varepsilon)\ln r+h_2(\varepsilon)$.

Thus, as $\,\chi(\Psi(\nu))=(1-p)\chi(\nu)\,$, we have
\begin{equation}\label{e-v}
\chi(\Phi(\mu))-\chi(\Psi(\nu))\geq (p+\varepsilon-p\varepsilon)\ln r_{\mu}-(1-p)h_2(\varepsilon).
\end{equation}

It is easy to see that $D_*(\mu,\nu)\leq D_0(\mu,\nu)\leq\varepsilon$. Since $\rank\Psi((1-\varepsilon)\rho_i+\varepsilon\sigma)\leq3$ for any $i$, condition
(\ref{e-cond-3A}) holds for the ensemble $\nu$  with $r_\nu=3$.

Since we may identify the channels $\Phi=\id_A$ and $\Omega_0$, we have $\,\frac{1}{2}\|\Phi-\Psi\|_{\diamond}=p$.
So the claim of Proposition \ref{main-3} for the cases $A_1$ and $B_1$  implies that
\begin{equation}\label{a-v}
\!\chi(\Phi(\mu))-\chi(\Psi(\mu))\leq (p+\varepsilon)\ln[2(r_{\mu}-1)]+2h_2(p+\varepsilon)\;\;\textrm{provided that }\;  p+\varepsilon\leq 1/2.
\end{equation}

Comparing  (\ref{e-v}) and (\ref{a-v}) we see that
in this case Proposition \ref{main-3} gives asymptotically tight upper bound on $\chi(\Phi(\mu))-\chi(\Psi(\nu))$ for large $r_{\mu}$ and small $p$ and $\varepsilon$. \smallskip
\end{example}

\begin{example}\label{main-3-e+}
Assume that $A=B$ is an one-mode quantum oscillator and $H_A=H_B=\hat{N}\doteq a^{\dagger}a\,$ is the number operator on $\H_A$ \cite[Ch.12]{H-SCI}.

Let $\Upsilon(\zeta)=\vert\zeta\rangle\langle\zeta\vert$ be the coherent state of $A$ corresponding to a complex number $\zeta$. Let $\mu=\Upsilon(\pi)$, where $\pi$ is the probability measure on $\mathbb{C}$ with the density $p_{\mu}(\zeta)=\frac{1}{\pi N}\exp(-|\zeta|^2/N)$. The ensemble $\mu$ is typically denoted by $\{\pi,\vert \zeta\rangle\langle \zeta\vert\}_{\zeta\in\mathbb{C}}$ (see Section 6.2). Using this notation  define the ensemble  $\nu=\{\pi,(1-\varepsilon)\vert \zeta\rangle\langle \zeta\vert+\varepsilon\gamma_{\hat{N}}(N)\}_{\zeta\in\mathbb{C}}$, where  $\varepsilon\in(0,1/2]$ and $\gamma_{\hat{N}}(N)$ is
the Gibbs state corresponding to the "energy" (the number of quanta) $N$. Then $\bar{\rho}(\mu)=\bar{\rho}(\nu)=\gamma_{\hat{N}}(N)$. It is easy to see that
\begin{equation}\label{tmp-r}
\chi(\mu)=S(\gamma_{\hat{N}}(N))=g(N)\quad \textrm{and} \quad \chi(\nu)\leq (1-\varepsilon)S(\gamma_{\hat{N}}(N))=(1-\varepsilon)g(N).
\end{equation}
Note also that $\Tr\hat{N}\bar{\rho}(\mu)=N$ and $\bar{E}_{\hat{N}}^{\rm psv}(\nu)\leq 2\varepsilon N$.
To prove the last estimate  note that for each $\zeta\in\mathbb{C}$ the state
$(1-\varepsilon)\vert \zeta\rangle\langle \zeta\vert+\varepsilon\gamma_{\hat{N}}(N)$ is unitary equivalent to the state
$(1-\varepsilon)\vert 0\rangle\langle 0\vert+\varepsilon D^*(\zeta)\gamma_{\hat{N}}(N)D(\zeta)$, where $D(\zeta)$ is the displacement operator.
So,
$$
\begin{array}{c}
\displaystyle\bar{E}_{\hat{N}}^{\rm psv}(\nu)\leq \frac{1}{\pi N}\int_{\mathbb{C}} \Tr\hat{N}\left[(1-\varepsilon)\vert 0\rangle\langle 0\vert+\varepsilon D^*(\zeta)\gamma_{\hat{N}}(N)D(\zeta)\right]\exp(-|\zeta|^2/N)d^2\zeta\\\\\displaystyle=
\frac{\varepsilon}{\pi N}\int_{\mathbb{C}} (N+|\zeta|^2)\exp(-|\zeta|^2/N)d^2\zeta=2\varepsilon N.
\end{array}
$$
Using the upper bound\footnote{This upper bound is applicable here, since the map $\Upsilon(\zeta)$ is uniformly continuous (see Section 6.2).} (\ref{D-star-e+}) in Example \ref{D-star-e} we conclude that $D_*(\mu,\nu)\leq\varepsilon$.
Thus the claim of Proposition \ref{main-3} for the cases $A_2$ and $B_2$  (with $\Phi=\Psi=\id_A$) implies that
\begin{equation}\label{a-v+}
\chi(\mu)-\chi(\nu)\leq  \varepsilon (g(N/\varepsilon)+g(2N))+2g(\varepsilon)
\end{equation}
because  $F_{\hat{N}}=g$. \smallskip

It follows from (\ref{tmp-r}) that the r.h.s. of (\ref{a-v+}) is not less than $\varepsilon g(N)$. So,
since
\begin{equation}\label{s-ineq}
\varepsilon g(N/\varepsilon)-\varepsilon g(N)\leq -\varepsilon\ln \varepsilon+\varepsilon(1+\varepsilon/N)\leq 1/e+1+1/N,
\end{equation}
in this case Proposition \ref{main-3} gives asymptotically tight upper bound on $\,\chi(\mu)-\chi(\mu)\,$ for large $N$
up to factor $2$ in the main term.
\end{example} \smallskip

\begin{corollary}\label{main-3-c} \emph{Let $\Phi$ and $\Psi $ be quantum channels from $A$ to $B$. Let  $\mu$ and $\nu$ be (generalized) ensembles of states in $\S(\H_A)$ such that
\begin{equation}\label{e-cond-3-c}
D_X(\mu,\nu)+\textstyle\frac{1}{2}\|\Phi-\Psi\|_{Y}\leq\varepsilon\leq1,
\end{equation}
where $D_X$ is one of the metrics $D_K$ and $D_*$ defined, respectively, in [(\ref{K-D-d}),(\ref{K-D-c})] and [(\ref{f-metric}),(\ref{f-metric+})],
$\|\cdot\|_{Y}$ is either the diamond norm $\|\cdot\|_{\diamond}$ defined in (\ref{d-norm}) or its unstabilized version $\|\cdot\|_{1\to1}$ defined in (\ref{d-norm+}).}\smallskip

\noindent A) \emph{If
\begin{equation*}
\rank\Phi(\bar{\rho}(\mu))\leq r_\mu<+\infty\quad\textit{and}\quad\rank\Psi(\bar{\rho}(\nu))\leq r_\nu<+\infty
\end{equation*}
then
\begin{equation*}
|\chi(\Phi(\mu))-\chi(\Psi(\nu))|\leq C_\mu(\varepsilon)+C_\nu(\varepsilon),
\end{equation*}
where  $C_x(\varepsilon)=\varepsilon\ln(r_x-1)+h_2(\varepsilon)\,$ if $\;\varepsilon<1-1/r_x$ and $\,C_x(\varepsilon)=\ln r_x$ otherwise.}\smallskip

\noindent B) \emph{If
\begin{equation*}
\Tr H^{\mu}_B\Phi(\bar{\rho}(\mu))\leq E_\mu<+\infty\quad \textit{and}\quad\Tr H^{\nu}_B\Psi(\bar{\rho}(\nu))\leq E_\nu<+\infty,
\end{equation*}
where $H^{\mu}_B$ and $H^{\nu}_B$ are positive operators on  $\H_B$ satisfying conditions (\ref{H-cond}) and (\ref{star}), then
\begin{equation}\label{main-3B-c}
|\chi(\Phi(\mu))-\chi(\Psi(\nu))|\leq \varepsilon F_{H^{\mu}_B}(E_\mu/\varepsilon)+\varepsilon F_{H^{\nu}_B}(E_\nu/\varepsilon)+2g(\varepsilon),
\end{equation}
where $F_{H^{x}_B}$, $x=\mu,\nu$, and $\,g$ are the functions defined, respectively, in (\ref{F-def}) and (\ref{g-def}).}\smallskip

\noindent C) \emph{The following replacements can be done independently on each other:}
\begin{enumerate}[(1)]
\item \emph{if $\mu=(p_i,\rho_i)$ and $\nu=(q_i,\sigma_i)$ are discrete ordered
ensembles then the metric $D_0$ defined in (\ref{D-0-metric}) can be used in the role of $D_X$  in (\ref{e-cond-3-c});} \smallskip
\item \emph{if either $\,\Tr H_A\bar{\rho}(\mu)=E_{A}<+\infty$ or $\,\Tr H_A\bar{\rho}(\nu)=E_{A}<+\infty$, where  $H_A$ is a positive operator on  $\H_A$,  then the energy-constrained diamond norm $\|\cdot\|^{H_A}_{\diamond,E_A}$ defined in (\ref{ec-d-norm}) can be used in the role of $\|\cdot\|_{Y}$ in  (\ref{e-cond-3-c}).}
\end{enumerate}

\end{corollary}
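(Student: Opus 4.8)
The plan is to deduce Corollary \ref{main-3-c} from Proposition \ref{main-3} by invoking the latter twice — once as stated, and once with the pairs $(\Phi,\mu)$ and $(\Psi,\nu)$ interchanged — and then combining the two one-sided estimates. Since every metric $D_X$ and every norm $\|\cdot\|_Y$ occurring in (\ref{e-cond-3-c}) is symmetric, the hypothesis (\ref{e-cond-3-c}) is unchanged under this interchange, so both applications are legitimate; moreover the right-hand sides in both parts A and B are symmetric under $\mu\leftrightarrow\nu$, so the same bound is produced for $\chi(\Phi(\mu))-\chi(\Psi(\nu))$ and for $\chi(\Psi(\nu))-\chi(\Phi(\mu))$, which is exactly what the modulus requires.

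The one genuinely non-formal point is to match the \emph{symmetric} constraints of the Corollary with the \emph{asymmetric} hypotheses of Proposition \ref{main-3}. For part A I would first observe that a rank bound on the \emph{average} output state forces the same rank bound on \emph{almost every} output state: writing $P_\mu$ for the projection onto $\supp\Phi(\bar\rho(\mu))$, one has $\int_{\S(\H_A)}\Tr[(I_B-P_\mu)\Phi(\rho)]\,\mu(d\rho)=\Tr[(I_B-P_\mu)\Phi(\bar\rho(\mu))]=0$ with nonnegative integrand, whence $\supp\Phi(\rho)\subseteq\Ran P_\mu$ and $\rank\Phi(\rho)\le\rank\Phi(\bar\rho(\mu))\le r_\mu$ for $\mu$-almost all $\rho$; the same argument applied to $\Psi$ and $\nu$ gives $\rank\Psi(\rho)\le r_\nu$ for $\nu$-almost all $\rho$. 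Thus we are simultaneously in cases $A_1$ and $B_1$ of Proposition \ref{main-3}, which yields $\chi(\Phi(\mu))-\chi(\Psi(\nu))\le A_1(\varepsilon)+B_1(\varepsilon)=C_\mu(\varepsilon)+C_\nu(\varepsilon)$; interchanging the roles gives the reverse inequality, hence part A. For part B no extra step is needed: $\Tr H^{\mu}_B\Phi(\bar\rho(\mu))\le E_\mu$ is literally the hypothesis of case $A_2$, and $\Tr H^{\nu}_B\Psi(\bar\rho(\nu))\le E_\nu$ is an admissible form of the case-$B_2$ hypothesis by replacement (3) of Proposition \ref{main-3}, so that application gives $\chi(\Phi(\mu))-\chi(\Psi(\nu))\le A_2(\varepsilon)+B_2(\varepsilon)=\varepsilon F_{H^{\mu}_B}(E_\mu/\varepsilon)+\varepsilon F_{H^{\nu}_B}(E_\nu/\varepsilon)+2g(\varepsilon)$, the right-hand side of (\ref{main-3B-c}); interchanging $\mu$ and $\nu$ — now $\Tr H^{\nu}_B\Psi(\bar\rho(\nu))\le E_\nu$ plays the role in case $A_2$ and $\Tr H^{\mu}_B\Phi(\bar\rho(\mu))\le E_\mu$ that in case $B_2$ via replacement (3) — gives the same bound for the opposite difference, proving (\ref{main-3B-c}). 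Claim C is then immediate, since replacements (1) and (2) of Proposition \ref{main-3} are available and are themselves symmetric in the two pairs, hence preserved under the interchange.

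I do not expect a serious obstacle: the only step needing an argument is the support/rank observation in part A, and everything else is careful bookkeeping of which case of Proposition \ref{main-3} is being invoked after the relabelling, together with the trivial symmetry of $D_X$ and $\|\cdot\|_Y$ that keeps (\ref{e-cond-3-c}) intact.
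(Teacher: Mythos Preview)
Your proposal is correct and takes essentially the same approach as the paper: the corollary is stated without proof in the paper and is meant to follow directly from Proposition~\ref{main-3} by applying it in both directions, using the symmetry of $D_X$ and $\|\cdot\|_Y$. The one substantive point you supply---that a rank bound on the average output $\Phi(\bar\rho(\mu))$ forces the same rank bound on $\Phi(\rho)$ for $\mu$-almost all $\rho$ (and likewise for $\Psi,\nu$), needed to verify the case-$B_1$ hypothesis of Proposition~\ref{main-3}---is exactly what the paper leaves implicit, and your support-containment argument is the standard way to see it.
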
\smallskip

\noindent The r.h.s. of (\ref{main-3B-c}) is a nondecreasing function of $\,\varepsilon$, which tends to zero as $\,\varepsilon\to0\,$. This follows from inequality (\ref{W-L}) and the equivalence of  (\ref{H-cond}) and  (\ref{H-cond-a}).

\smallskip

Corollary \ref{main-3-c}A (with $\Phi=\Psi=\id_A$) implies that
\begin{equation}\label{main-3A-c+}
|\chi(\mu)-\chi(\nu)|\leq 2\varepsilon\ln(d-1)+2h_2(\varepsilon)
\end{equation}
for any ensembles $\mu$ and $\nu$ of states in a $d$-dimensional Hilbert space provided that
$D_*(\mu,\nu)\leq\varepsilon\leq1-1/d$. The same inequality is proved by Oreshkov and Calsamiglia
in \cite[Section IV-B]{O&C} under the stronger condition $D_K(\mu,\nu)\leq\varepsilon\leq1-1/d$ (it is stronger as
$D_K(\mu,\nu)$ may be greater than $D_*(\mu,\nu)$). Another advantage of continuity bound (\ref{main-3A-c+})
in comparison with the Oreshkov-Calsamiglia continuity bound consists in the fact that for discrete ensembles $\mu$ and $\nu$
the metric $D_*(\mu,\nu)$ (unlike $D_K(\mu,\nu)$) is bounded from above by the easily-computable quantity $D_0(\mu,\nu)$ defined in (\ref{D-0-metric}).
\smallskip

Corollary \ref{main-3-c}B (with $\Phi=\Psi=\id_A$) implies that
\begin{equation}\label{main-3B-c+}
|\chi(\mu)-\chi(\nu)|\leq 2\varepsilon F_{H}(E/\varepsilon)+2g(\varepsilon)
\end{equation}
for any ensembles $\mu$ and $\nu$ of states in an infinite-dimensional separable Hilbert space $\H$ provided that
$D_*(\mu,\nu)\leq\varepsilon\leq1$ and $\Tr H\bar{\rho}(\mu),\Tr H\bar{\rho}(\nu)\leq E<+\infty$, where
$H$ is a positive operator on $\H$ satisfying conditions (\ref{H-cond}) and (\ref{star}).

It is reasonable to compare continuity bounds (\ref{main-3A-c+}) and (\ref{main-3B-c+})
for the Holevo information with the analogous continuity bounds obtained in \cite{CHI} (which, as far as I know, have been the most accurate estimates of this kind so far).\footnote{We mean the continuity bounds for the Holevo information with the dimension/energy constraints. There are continuity bounds for the Holevo information
with constraints of another type, for example, with the constraint on the number of states in an ensemble \cite{Aud,CHI}.}

Proposition 16 in \cite{CHI} implies that
\begin{equation}\label{CHI-CB-1}
|\chi(\mu)-\chi(\nu)|\leq \varepsilon\ln d+2g(\varepsilon)
\end{equation}
for any ensembles $\mu$ and $\nu$ of states in a $d$-dimensional Hilbert space provided that
$D_*(\mu,\nu)\leq\varepsilon$, where $g$ is the function defined in (\ref{g-def}).
\smallskip

It is mentioned in \cite{CHI} that continuity bound (\ref{CHI-CB-1}) is asymptotically tight for large $d$.
It means that continuity bound (\ref{main-3A-c+}) does not possess this property because of the first factor 2
in its r.h.s. It turns out, however, that continuity bound (\ref{main-3A-c+}) is \emph{more
accurate} than continuity bound (\ref{CHI-CB-1}) for low dimensions $d$. Indeed,
the difference between the right hand sides of  (\ref{CHI-CB-1}) and (\ref{main-3A-c+}) is equal to
$$
\textstyle\varepsilon\!\left(\frac{2}{\varepsilon}\ln(1-\varepsilon^2)+2\ln\frac{1+\varepsilon}{1-\varepsilon}-\ln d-2\ln\!\left(1-\frac{1}{d}\right)\right)=
\varepsilon\!\left(\ln\!\left[(1-\varepsilon^2)^{\frac{1}{\varepsilon}}\,\frac{1+\varepsilon}{1-\varepsilon}\right]^2-\ln\!\left[d\left(1-\frac{1}{d}\right)^2\right]\right)
$$
Thus, the continuity bound (\ref{main-3A-c+}) is more
accurate than continuity bound (\ref{CHI-CB-1}) provided that
$$
(1-\varepsilon^2)^{\frac{2}{\varepsilon}}\left(\frac{1+\varepsilon}{1-\varepsilon}\right)^2\geq d\left(1-\frac{1}{d}\right)^2
$$
The function $u(\varepsilon)=(1-\varepsilon^2)^{\frac{2}{\varepsilon}}\left(\frac{1+\varepsilon}{1-\varepsilon}\right)^2$ increases on $[0,1]$ from
$u(0)=1$ to $u(1)=16$, while the function $v(d)=d\left(1-\frac{1}{d}\right)^2$ increases on $\N\cup[2,17]$ from $v(2)=1/2$ to\break $v(17)=256/17\approx15.06$ and $v(18)=289/18\approx16.06$. Thus, for each
$d=2,3,...,17$  continuity bound (\ref{main-3A-c+}) is more
accurate than continuity bound (\ref{CHI-CB-1}) for all  $\varepsilon\in(\varepsilon_d,1]$, where $\varepsilon_2=0$ and $\varepsilon_d$
is a unique solution in $(0,1)$ of the equation $u(\varepsilon)=v(d)$. It is easy to see that $\varepsilon_3\approx0.11$, $\varepsilon_4\approx0.45$ and $\varepsilon_d>\varepsilon_5\approx0.55$ for $d=6,7,...17$. \emph{Thus, continuity bound (\ref{main-3A-c+}) is better than continuity bound (\ref{CHI-CB-1})
in the qubit case $d=2$ for all $\varepsilon$, while in the cases $d=3,4,5$  the same is true for  significant ranges of $\,\varepsilon$.} Continuity bound (\ref{main-3A-c+}) is less accurate than continuity bound (\ref{CHI-CB-1}) for all $\varepsilon\in(0,1]$ if and only if $\,d\geq18$.\smallskip

Proposition 21 in \cite{CHI} implies that
\begin{equation}\label{CHI-CB-2}
\left|\chi(\mu)-\chi(\nu)\right|\leq\inf_{t\in(0,\frac{1}{2\varepsilon}]}\left[\,\varepsilon(2t+r_{\!\varepsilon}(t))F_{H}(E/(\varepsilon t))
+2g(\varepsilon r_{\!\varepsilon}(t))+2h_2(\varepsilon t)\,\right],
\end{equation}
for any ensembles $\mu$ and $\nu$ of states in an infinite-dimensional separable Hilbert space $\H$ provided that
$D_*(\mu,\nu)\leq\varepsilon\leq1$ and $\Tr H\bar{\rho}(\mu),\Tr H\bar{\rho}(\nu)\leq E<+\infty$, where
$H$ is a positive operator on $\H$ satisfying conditions (\ref{H-cond}),(\ref{star}) and $r_{\!\varepsilon}(t)=(1+t/2)/(1-\varepsilon t)$.\smallskip

It is mentioned in \cite{CHI} that continuity bound (\ref{CHI-CB-2}) is asymptotically tight for large $E$.
In contrast, continuity bound (\ref{main-3B-c+}) does not possess this property because of the first factor 2
in its r.h.s. An obvious advantage of  continuity bound (\ref{main-3B-c+}) in comparison with (\ref{CHI-CB-2}) is its
simplicity. Moreover, the numerical analysis with the number operator $\hat{N}$ of a quantum oscillator in the role of $H$
shows that  continuity bound (\ref{main-3A-c+}) is \emph{more
accurate} than continuity bound (\ref{CHI-CB-2}) in the non-asymptotic regime, especially, for low energy bounds $E$ ($E\leq 10$).

\subsection{Upper bounds on the Average Entropy of an ensemble and the AOE of a channel}

The semicontinuity bounds for the  Average Output Entropy  (AOE) of a quantum channel  presented in Section 4 can be used to obtain
quite accurate upper bounds on the AOE and on the Average Entropy (AE)
\begin{equation*}
  \overline{S}(\mu)\doteq\int_{\S(\H)}S(\rho)\mu(d\rho).
\end{equation*}
of an ensemble $\mu$ of quantum states in $\S(\H)$.

Note first that the following obvious upper bounds on the AE are valid:
if $\mu$ is a (generalized) ensemble in $\P(\H)$ such that either $\rank\rho\leq r<+\infty$ for $\mu$-almost all $\rho$ (case $A$) or
$\bar{E}_{H}^{\rm psv}(\mu)<+\infty$ for some positive operator $H$ on $\H$ satisfying the Gibbs condition (\ref{H-cond}) (case $B$)\footnote{$\bar{E}_{H}^{\rm psv}(\mu)$
is the passive energy of $\mu$ w.r.t. $H$ defined in (\ref{p-a-e-def}).}
then
\begin{equation}\label{OUB}
  \overline{S}(\mu)\leq \ln r\;\;\textrm{ in case }A\quad\textrm{ and }\quad \overline{S}(\mu)\leq F_H(\bar{E}_{H}^{\rm psv}(\mu))\;\;\textrm{ in case }B,
\end{equation}
where $F_H$ is the function defined in (\ref{F-def}). The second upper bound in (\ref{OUB}) was mentioned in Section 2.2.3 (inequality (\ref{P-B+})).\smallskip

Propositions \ref{main-1} and \ref{main-2} in Section 4 imply the following\smallskip

\begin{proposition}\label{AE-UB} \emph{Let $\H$  be a separable Hilbert space and $\mu$ a (generalized) ensemble in $\P(\H)$.}\footnote{$\P(\H)$ is the set of all generalized ensembles of states in $\S(\H)$, see Section 2.2.2.}
\emph{Let
$$
\Delta(\mu)\doteq\inf\left\{D_*(\mu,\nu)\,|\,\nu\in\P(\H),\; \rank\rho=1\textit{ for }\mu\textit{-almost all }\rho\shs\right\},
$$
where $D_*$ is the metric in $\P(\H)$ defined in (\ref{f-metric}) and (\ref{f-metric+}).}\smallskip

\noindent A) \emph{If $\,\rank\rho\leq r\in\N\cap[2,+\infty)\,$\footnote{If this condition holds with $r=1$ then $\overline{S}(\mu)=0$.} for $\mu$-almost all $\rho$ then}
\begin{equation*}
\overline{S}(\mu)\leq \Delta(\mu)\ln (r-1)+h_2(\Delta(\mu))\quad\textit{ provided that } \Delta(\mu)\leq 1-1/r.
\end{equation*}

\noindent B) \emph{If $\,\bar{E}_{H}^{\rm psv}(\mu)<+\infty\,$ for some positive operator $H$ on $\H$ satisfying  conditions (\ref{H-cond}) and (\ref{star}) then
\begin{equation}\label{UB-2}
\overline{S}(\mu)\leq \Delta(\mu)F_H(\bar{E}_{H}^{\rm psv}(\mu)/\Delta(\mu))+g(\Delta(\mu)),
\end{equation}
where $F_H$ and $g$ are the functions defined in (\ref{F-def}) and (\ref{g-def}).}

\noindent C) \emph{Claims $A$ and $B$ remain valid with $\Delta(\mu)$ replaced by any its upper bound $\widetilde{\Delta}(\mu)$, in particular,
by $\,D_*(\mu,\nu_0)$, where $\nu_0$ is a given ensemble of pure states in $\S(\H)$.}
\end{proposition}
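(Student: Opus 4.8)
The plan is to obtain both bounds by specializing the semicontinuity bounds of Propositions~\ref{main-1} and~\ref{main-2} to the trivial channel $\Phi=\Psi=\id_{\H}$ and letting $\nu$ run over ensembles of pure states. Three elementary facts make this reduction work. First, $\overline{S}(\id_{\H}(\mu))=\overline{S}(\mu)$ and $\|\id_{\H}-\id_{\H}\|_{1\to1}=0$, so the conditions (\ref{e-cond-1}) and (\ref{e-cond-2}) collapse to bounds on $D_*(\mu,\nu)$ alone. Second, $\rank\id_{\H}(\rho)=\rank\rho$ and $\bar{E}_{H}^{\rm psv}(\id_{\H}(\mu))=\bar{E}_{H}^{\rm psv}(\mu)$, so the hypothesis of claim~A is precisely the rank constraint required by Proposition~\ref{main-1}A (with $r_B=r$), and the hypothesis of claim~B is precisely the passive-energy constraint required by Proposition~\ref{main-2}A (with $H_B=H$, using that $H$ satisfies (\ref{H-cond}) and (\ref{star})). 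Third, if $\nu$ is an ensemble of pure states then $S(\rho)=0$ for $\nu$-almost all $\rho$, so $\overline{S}(\id_{\H}(\nu))=\overline{S}(\nu)=0<+\infty$; hence the left-hand sides of (\ref{main-1+}) and (\ref{main-2+}) reduce to $\overline{S}(\mu)$, which is itself finite ($\overline{S}(\mu)\le\ln r$ in case~A, and $\overline{S}(\mu)\le F_H(\bar{E}_{H}^{\rm psv}(\mu))$ in case~B by (\ref{P-B+})).

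For claim~A, I would fix an arbitrary ensemble $\nu$ of pure states with $D_*(\mu,\nu)\le 1-1/r$ and apply Proposition~\ref{main-1}A with $D_X=D_*$ and $\varepsilon=D_*(\mu,\nu)$, obtaining $\overline{S}(\mu)\le D_*(\mu,\nu)\ln(r-1)+h_2(D_*(\mu,\nu))$. The function $p\mapsto p\ln(r-1)+h_2(p)$ is continuous and nondecreasing on $[0,1-1/r]$ (its derivative there equals $\ln\frac{(r-1)(1-p)}{p}\ge0$), so taking the infimum over such $\nu$ and passing to a minimizing sequence gives $\overline{S}(\mu)\le\Delta(\mu)\ln(r-1)+h_2(\Delta(\mu))$ whenever $\Delta(\mu)\le 1-1/r$; in the boundary case $\Delta(\mu)=1-1/r$ the right-hand side equals $\ln r$, which in any event bounds $\overline{S}(\mu)$ under the rank-$r$ constraint.

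For claim~B the same scheme applies verbatim with Proposition~\ref{main-2}A in place of Proposition~\ref{main-1}A: fixing a pure-state ensemble $\nu$ and taking $\varepsilon=D_*(\mu,\nu)$ yields $\overline{S}(\mu)\le D_*(\mu,\nu)F_H(\bar{E}_{H}^{\rm psv}(\mu)/D_*(\mu,\nu))+g(D_*(\mu,\nu))$. By Note~A following Proposition~\ref{main-2}, the map $\varepsilon\mapsto\varepsilon F_H(\bar{E}_{H}^{\rm psv}(\mu)/\varepsilon)+g(\varepsilon)$ is nondecreasing on $(0,1]$ and tends to $0$ as $\varepsilon\to 0^+$, so taking the infimum over $\nu$ produces (\ref{UB-2}) (the value at $\Delta(\mu)=0$ being read as the limit, namely $0$). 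Here no upper restriction on $\Delta(\mu)$ is needed, since Proposition~\ref{main-2}A places none on $\varepsilon$.

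Claim~C then follows at once: the right-hand sides in A and B are nondecreasing in their argument (on $[0,1-1/r]$ in case~A), hence can only grow when $\Delta(\mu)$ is replaced by any upper bound $\widetilde{\Delta}(\mu)\ge\Delta(\mu)$; and for any fixed ensemble $\nu_0$ of pure states, $D_*(\mu,\nu_0)\ge\Delta(\mu)$ by the definition of $\Delta(\mu)$ as an infimum. I do not expect a genuine obstacle: the only step needing care is the passage from the per-$\nu$ inequality to the inequality at $\Delta(\mu)$ (and, in case~A, the bookkeeping near $\Delta(\mu)=1-1/r$), both dispatched by the monotonicity and continuity of the right-hand-side functions noted above; the set of pure-state ensembles is nonempty (it contains every Dirac measure at a pure state), so $\Delta(\mu)\in[0,1]$ is well defined, and no estimate beyond Propositions~\ref{main-1} and~\ref{main-2} enters.
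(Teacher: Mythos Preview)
Your proposal is correct and follows exactly the approach the paper indicates: the paper simply states that Proposition~\ref{AE-UB} is implied by Propositions~\ref{main-1} and~\ref{main-2}, and your argument spells out precisely this implication by specializing to $\Phi=\Psi=\id_{\H}$ and pure-state ensembles $\nu$. Your handling of the passage to the infimum via monotonicity and continuity of the right-hand sides, and of the boundary case $\Delta(\mu)=1-1/r$ in claim~A, is appropriate and complete.
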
\smallskip

\noindent The r.h.s. of (\ref{UB-2}) is a nondecreasing function of $\,\Delta(\mu)$, which tends to zero as $\,\varepsilon\to0\,$. This follows from inequality (\ref{W-L}) and the equivalence of  (\ref{H-cond}) and  (\ref{H-cond-a}).

\smallskip

\begin{example}\label{UB-1} Assume that $\mu=\{p_i,\rho_i\}$ is a discrete ensemble of states
in $\S(\H)$ such that the maximal eigenvalue of any state $\rho_i$ is not less than $\,1-p\,$ and
$$
\textrm{either }\;\rank\rho_i\leq r\in\N\cap[2,+\infty)\;\; \forall i\;\, (\textrm{case }A)\;\textrm{ or }\;\bar{E}_{H}^{\rm psv}(\mu)<+\infty\;\, (\textrm{case }B),
$$
where $H$ is a positive operator on $\H$ satisfying  conditions (\ref{H-cond}) and (\ref{star}).

Consider the ensemble $\nu=\{p_i,\hat{\rho}_i\}$, where $\hat{\rho}_i$ is a pure state corresponding to the
maximal eigenvalue of $\rho_i$ for each $i$. Then it is easy to see that
$D_*(\mu,\nu)\leq D_0(\mu,\nu)\leq p$. So, since $\nu$ is an ensemble if pure states,
Proposition \ref{AE-UB} implies that
\begin{equation}\label{c-A}
\overline{S}(\mu)\leq p\ln (r-1)+h_2(p)\quad\textrm{provided that }\;p\leq 1-1/r\quad\textrm{ in case }A.
\end{equation}
and
\begin{equation}\label{c-B}
\overline{S}(\mu)\leq pF_H(\bar{E}_{H}^{\rm psv}(\mu)/p)+g(p)\quad \textrm{ in  case } B.
\end{equation}

Let $\mu$ be an ensemble consisting of states with the spectrum $\{1-p, \frac{p}{r-1},...,\frac{p}{r-1}\}$, $p\leq 1-1/r$. Then
$\overline{S}(\mu)= p\ln (r-1)+h_2(p)$. So, the upper bound (\ref{c-A}) is \emph{optimal}.\smallskip

Let $\mu$ be an ensemble consisting of states with the spectrum
\begin{equation}\label{S-P-d}
\{1-p, p(1-q),p(1-q)q,p(1-q)q^2,...\}
\end{equation}
where
$q=N/(N+1)$, $N>0$. Assume that $H$ is a positive operator on $\H$ having the form (\ref{H-form}) with the spectrum $\{0,0,1,2,3,....\}$. Then it is easy
to see that $\bar{E}_{H}^{\rm psv}(\mu)=pN$. By using (\ref{F-def}),(\ref{F-N}) and (\ref{W-L}) one can show that
$$
g(E)\leq F_H(E)=\max_{x\in[0,1]}\left[(1-x)g(E/(1-x))+h_2(x)\right]=g(E)+o(E)\;\textrm{ as }\; E\to +\infty,
$$
where $o(E)\leq\ln 2$. So, the r.h.s. of (\ref{c-B}) is equal to $\,pg(N)+g(p)+p\shs o(N)\,$ as $N\to +\infty$, while the l.h.s. of (\ref{c-B})
is equal to the Shannon entropy $\,pg(N)+h_2(p)\,$ of the probability distribution in (\ref{S-P-d}). Hence,
the upper bound (\ref{c-B}) is \emph{asymptotically
tight} for large $E$.\medskip
\end{example}

The upper bounds on the AE in (\ref{OUB}) imply the upper  bounds on the
AOE of a quantum channel $\Phi$ from $A$ to $B$: if $\mu$ is a (generalized) ensemble in $\P(\H_A)$ such that either $\rank\Phi(\rho)\leq r<+\infty$ for $\mu$-almost all $\rho$ (case $A$) or
$\bar{E}_{H_B}^{\rm psv}(\Phi(\mu))<+\infty$ for some positive operator $H_B$ on $\H_B$ satisfying  condition (\ref{H-cond}) (case $B$)
then
\begin{equation}\label{OUB+}
  \overline{S}(\Phi(\mu))\leq \ln r\;\;\textrm{ in case }A\quad\textrm{ and }\quad \overline{S}(\Phi(\mu))\leq F_{H_B}(\bar{E}_{H_B}^{\rm psv}(\Phi(\mu)))\;\;\textrm{ in case }B,
\end{equation}
where $F_{H_B}$ is the function defined in (\ref{F-def}).

By using Propositions \ref{main-1} and \ref{main-2} in Section 4 one can strengthen the upper bounds in (\ref{OUB+})
in some special cases. For example, one can prove the following\smallskip

\begin{proposition}\label{AOE-UB}
\emph{Let
$\Phi$ be an  arbitrary channel from $A$ to $B$ and $\mu$ a (generalized) ensemble of pure states in $\S(\H_A)$
such that $\,\bar{E}_{H_B}^{\rm psv}(\Phi(\mu))<+\infty\,$, where $H_B$ is a positive operator on $\H_B$ satisfying conditions (\ref{H-cond}) and (\ref{star})}. \emph{Then
\begin{equation}\label{AOE-UB+}
  \overline{S}(\Phi(\mu))\leq \ln r+\Delta^r(\Phi) F_{H_B}(\bar{E}_{H_B}^{\rm psv}(\Phi(\mu))/\Delta^r(\Phi))+g(\Delta^r(\Phi)) \quad \forall r\in\N,
\end{equation}
where\footnote{$\|\cdot\|_{1\to1}$ is the metric defined in (\ref{d-norm+}).}
$$
\Delta^r(\Phi)\doteq\inf\left\{\textstyle\frac{1}{2}\|\Phi-\Psi\|_{1\to1}\,|\,\Psi\textrm{ is a channel with the Choi rank }\leq r\shs\right\},
$$
i.e.$\Delta^r(\Phi)$ is the $\|\cdot\|_{1\to1}$-distance from $\Phi$ to the set of channels with the Choi rank $\leq r$.}

\emph{The quantity $\Delta^r(\Phi)$ in (\ref{AOE-UB+}) can be replaced by any its upper bound $\widetilde{\Delta}^r(\Phi)$, in particular, by $\,\frac{1}{2}\|\Phi-\Psi_0\|_{1\to1}$, where $\Psi_0$ is any channel with the Choi rank $\leq r$.}
\end{proposition}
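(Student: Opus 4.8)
The plan is to derive this estimate from the semicontinuity bound for the AOE established in Proposition \ref{main-2}, applied with the second ensemble taken equal to $\mu$ itself. The starting observation is the elementary fact (the same one used to pass from Proposition \ref{main-1} to Corollary \ref{main-1-c}) that $\rank\Psi(\rho)\leq\mathrm{Ch}(\Psi)$ for any pure state $\rho$; hence if $\Psi:A\to B$ is a channel with $\mathrm{Ch}(\Psi)\leq r$ and $\mu$ is an ensemble of pure states, then $S(\Psi(\rho))\leq\ln r$ for $\mu$-almost all $\rho$ and therefore $\overline{S}(\Psi(\mu))\leq\ln r$. The point is to correct this crude bound (\ref{OUB+}) by a term controlled by how well $\Phi$ can be approximated in the $\|\cdot\|_{1\to1}$-distance by such a low-Choi-rank channel.

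Concretely, I would fix $r$, set $E_B=\bar{E}_{H_B}^{\rm psv}(\Phi(\mu))<+\infty$, take any $\varepsilon>\Delta^r(\Phi)$, and pick a channel $\Psi$ with $\mathrm{Ch}(\Psi)\leq r$ and $\frac{1}{2}\|\Phi-\Psi\|_{1\to1}<\varepsilon$. Applying Proposition \ref{main-2}A with $\nu=\mu$, with $D_X=D_*$ (so that the ensemble term $D_*(\mu,\mu)$ vanishes, $D_*$ being a genuine metric by Proposition \ref{D-star}A) and with the unstabilized norm $\|\cdot\|_{1\to1}$ in the role of $\|\cdot\|_Y$, I obtain $\overline{S}(\Phi(\mu))-\overline{S}(\Psi(\mu))\leq\varepsilon F_{H_B}(E_B/\varepsilon)+g(\varepsilon)$, and combining this with $\overline{S}(\Psi(\mu))\leq\ln r$ gives $\overline{S}(\Phi(\mu))\leq\ln r+\varepsilon F_{H_B}(E_B/\varepsilon)+g(\varepsilon)$ for every $\varepsilon>\Delta^r(\Phi)$.

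The last step is to let $\varepsilon$ decrease to $\Delta^r(\Phi)$. Here I would use Note A after Proposition \ref{main-2}: the function $\varepsilon\mapsto\varepsilon F_{H_B}(E_B/\varepsilon)+g(\varepsilon)$ is nondecreasing, continuous on $(0,+\infty)$, and tends to $0$ as $\varepsilon\to0$; the passage to the limit then yields (\ref{AOE-UB+}), and in the degenerate case $\Delta^r(\Phi)=0$ it reproduces $\overline{S}(\Phi(\mu))\leq\ln r$, consistent with the convention $0\cdot F_{H_B}(E_B/0)=0$ understood in the right-hand side of (\ref{AOE-UB+}). The same monotonicity immediately gives the final assertion: replacing $\Delta^r(\Phi)$ by any upper bound $\widetilde{\Delta}^r(\Phi)$ (in particular by $\frac{1}{2}\|\Phi-\Psi_0\|_{1\to1}$ for a channel $\Psi_0$ with $\mathrm{Ch}(\Psi_0)\leq r$, which majorizes $\Delta^r(\Phi)$ by definition of the infimum) can only increase the right-hand side. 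I do not expect a real obstacle: the only delicate points are that $\Delta^r(\Phi)$ is an infimum rather than a minimum — handled by the limiting argument together with the continuity and monotonicity of the right-hand side — and the $0\cdot\infty$ reading when $\Delta^r(\Phi)=0$, which is precisely the asymptotic property (\ref{H-cond-a}) equivalent to the Gibbs condition (\ref{H-cond}).
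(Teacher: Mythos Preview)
Your proposal is correct and follows essentially the same route as the paper: apply Proposition~\ref{main-2} with $\nu=\mu$ to bound $\overline{S}(\Phi(\mu))-\overline{S}(\Psi(\mu))$, then invoke the first bound in (\ref{OUB+}) (i.e.\ $\overline{S}(\Psi(\mu))\leq\ln r$, since $\rank\Psi(\rho)\leq\mathrm{Ch}(\Psi)\leq r$ for pure $\rho$). Your treatment is simply more explicit about the limiting argument needed because $\Delta^r(\Phi)$ is an infimum and about the degenerate case $\Delta^r(\Phi)=0$, both of which the paper leaves implicit.
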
\smallskip

Note that upper bound (\ref{AOE-UB+}) turns into the first upper bound in (\ref{OUB+}) if the Choi rank of $\Phi$ does not exceed $r$.

\begin{proof} It suffices to use Proposition \ref{main-2} and the first upper bound in (\ref{OUB+}) by noting
that $\rank\Psi(\rho)\leq r$  for $\mu$-almost all $\rho$ provided that the Choi rank of $\Psi$ does not exceed $r$ (due to the assumed purity of the states
of the ensemble $\mu$).
\end{proof}

\subsection{New semicontinuity bound for the entanglement of formation}

The entanglement of
formation (EoF) in a finite-dimensional bipartite system $AB$  is defined as the convex roof extension to the set $\S(\H_{AB})$ of the function $\omega\mapsto S(\omega_{A})$ on the set $\mathrm{ext}\shs\S(\H_{AB})$ of pure states in $\S(\H_{AB})$ , i.e.
\begin{equation*}
  E_{F}(\omega)=\inf_{\sum_kp_k\omega_k=\omega}\sum_k p_kS([\omega_k]_{A})=\inf_{\sum_kp_k\omega_k=\omega}\overline{S}(\{p_k,[\omega_k]_{A}\}),
\end{equation*}
where the infima are over all finite ensembles $\{p_k, \omega_k\}$ of pure states in $\S(\H_{AB})$ with the average state $\omega$ \cite{Bennett}.

In an infinite-dimensional bipartite system $AB$ there are two versions $E_{F,d}$ and $E_{F,c}$ (discrete and continuous) of the EoF defined in (\ref{EoF}).
It follows from the definitions that $E_{F,d}(\omega)\geq E_{F,c}(\omega)$ for any state $\omega\in\S(\H_{AB})$.
In \cite{EM} it is shown that $E_{F,d}(\omega)=E_{F,c}(\omega)$ for any state $\omega$ in $\S(\H_{AB})$ such that $\min\{S(\omega_{A}),S(\omega_{B})\}<+\infty$.  The conjecture of coincidence of $E_{F,d}$ and $E_{F,c}$ on $\S(\H_{AB})$ is an interesting open question (see details in \cite[Section 5]{EM}).

By using the semicontinuity bound for the QCE of quantum-classical states presented Lemma \ref{Wilde-SCB} in Section 2.3 one can  obtain the following
result strengthening the claim of Proposition 4A in \cite{LCB}. \smallskip

\begin{proposition}\label{new-EF-SCB} \emph{Let $AB$ be an infinite-dimensional bipartite quantum system.}\smallskip

\emph{If $\rho$ is a state in $\S(\H_{AB})$ such that $\,r=\min\{\rank\rho_A,\rank \rho_B\}<+\infty$ then
\begin{equation}\label{new-EF-SCB+}
   E^*_F(\rho)-E^*_F(\sigma)\leq\delta\ln(r-1)+h_2(\delta),\quad E^*_F=E_{F,d},E_{F,c},
\end{equation}
for any state $\sigma$ in $\S(\H_{AB})$ such that $\,\frac{1}{2}\|\rho-\sigma\|_1\leq \varepsilon\leq 1-r^{-1}\sqrt{2r-1}$,  where $\delta=\sqrt{\varepsilon(2-\varepsilon)}$ and  the l.h.s. of (\ref{new-EF-SCB+}) may be equal to $-\infty$.}\footnote{The condition $\varepsilon\leq 1-r^{-1}\sqrt{2r-1}$ means that $\delta\leq1-1/r$.}
\end{proposition}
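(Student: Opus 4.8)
The plan is to exploit the identity $E_{F,d}(\omega)=\inf\{\overline{S}(\Phi(\mu))\mid \mu\in\widehat{\P}_0(\H_{AB}),\ \bar\rho(\mu)=\omega\}$ from (\ref{EoF}) (and its continuous analogue with $\widehat{\P}(\H_{AB})$ in place of $\widehat{\P}_0(\H_{AB})$), where $\Phi:AB\to A$ is the partial-trace channel, together with the quantitative openness of the barycentre map on pure-state ensembles (Proposition \ref{nu-l}A) and the semicontinuity bound for the AOE (Proposition \ref{main-1}A). Since $S(\omega_A)=S(\omega_B)$ for pure $\omega$, we may assume $r=\rank\rho_A$ and take $\Phi:\varrho\mapsto\varrho_A$. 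First I would record several reductions: for any pure-state ensemble $\mu$ with $\bar\rho(\mu)=\rho$ and $\mu$-almost every $\omega$ one has $\supp\omega_A\subseteq\supp\rho_A$, hence $\rank\Phi(\omega)=\rank\omega_A\le r$ and $S(\omega_A)\le\ln r$; so $E^*_F(\rho)\le\ln r<+\infty$ and the left-hand side of (\ref{new-EF-SCB+}) is always $<+\infty$ (possibly $-\infty$). Consequently, if $E^*_F(\sigma)=+\infty$ the claim is trivial; if $\delta=1-1/r$ the right-hand side equals $\ln r$ (a one-line computation with $h_2$), so the claim follows from $E^*_F(\rho)\le\ln r$ and $E^*_F(\sigma)\ge 0$; and if $r=1$ the hypothesis forces $\varepsilon=0$, i.e. $\sigma=\rho$, and the claim is trivial. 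So it remains to treat $E^*_F(\sigma)<+\infty$, $2\le r<\infty$, $\delta<1-1/r$.

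The key analytic point is that $\delta=\sqrt{\varepsilon(2-\varepsilon)}\ge\sqrt{1-F(\rho,\sigma)}$; this follows from the first inequality in (\ref{F-Tn-ineq}), since $\varepsilon\ge 1-\sqrt{F(\rho,\sigma)}$ gives $F(\rho,\sigma)\ge(1-\varepsilon)^2$ and hence $1-F(\rho,\sigma)\le\varepsilon(2-\varepsilon)=\delta^2$. Now I would fix $\eta>0$ and choose a pure-state ensemble $\nu$ — discrete if $E^*_F=E_{F,d}$, arbitrary if $E^*_F=E_{F,c}$ — with $\bar\rho(\nu)=\sigma$ and $\overline{S}(\Phi(\nu))\le E^*_F(\sigma)+\eta$. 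Applying Proposition \ref{nu-l}A with $\nu$ in the role of "$\mu$" and $\rho$ in the role of "$\sigma$" (admissible because $\delta\ge\sqrt{1-F(\bar\rho(\nu),\rho)}$) produces a pure-state ensemble $\mu$ with $\bar\rho(\mu)=\rho$ and $D_*(\mu,\nu)\le\delta$, which can be taken discrete whenever $\nu$ is discrete and $\delta>\sqrt{1-F(\rho,\sigma)}$. Then I would apply Proposition \ref{main-1}A to the pairs $(\Phi,\mu)$ and $(\Phi,\nu)$: the output rank of $\Phi$ is $\le r$ for $\mu$-almost every input, $\|\Phi-\Phi\|_{1\to1}=0$ and $D_*(\mu,\nu)\le\delta\le 1-1/r$, whence $\overline{S}(\Phi(\mu))-\overline{S}(\Phi(\nu))\le\delta\ln(r-1)+h_2(\delta)$. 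Combining with $E^*_F(\rho)\le\overline{S}(\Phi(\mu))$ and letting $\eta\to0$ yields (\ref{new-EF-SCB+}).

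The step requiring the most care — though it is a technicality rather than a real obstacle — is the discrete case ($E_{F,d}$) at the boundary $\delta=\sqrt{1-F(\rho,\sigma)}$, where Proposition \ref{nu-l}A no longer guarantees a \emph{discrete} $\mu$: there I would run the argument with $\delta'\in(\delta,1-1/r]$, getting a discrete pure $\mu'$ with $D_*(\mu',\nu)\le\delta'$ and $\bar\rho(\mu')=\rho$, so $E_{F,d}(\rho)\le\overline{S}(\Phi(\nu))+\delta'\ln(r-1)+h_2(\delta')$, and then let $\delta'\downarrow\delta$ by continuity of the right-hand side. I expect the substantive novelty — and the source of the improvement over Proposition 4A in \cite{LCB} — to be the use of the metric $D_*$ rather than $D_K$ in Proposition \ref{nu-l}A: $D_*$ is the only metric on $\P(\H)$ for which the barycentre map restricted to pure-state ensembles is quantitatively open, and the radius $\sqrt{1-F(\rho,\sigma)}$ it produces is majorised by $\delta=\sqrt{\varepsilon(2-\varepsilon)}$ with no further loss, after which the sharp factor $\ln(r-1)$ is supplied by Wilde's continuity bound via Lemma \ref{Wilde-SCB} / Proposition \ref{main-1}.
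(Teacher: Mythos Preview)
Your proof is correct and follows the same underlying strategy as the paper (Nielsen--Winter technique combined with Wilde's q-c conditional entropy bound via Lemma~\ref{Wilde-SCB}); the only difference is packaging: you route the argument through the paper's own general results---Proposition~\ref{nu-l}A (which encapsulates the purification/Uhlmann step and, crucially, preserves purity of the ensemble via the $D_*$ metric) and Proposition~\ref{main-1}A (which encapsulates Lemma~\ref{Wilde-SCB})---rather than re-deriving from scratch. This packaging is arguably cleaner, makes Proposition~\ref{new-EF-SCB} a genuine corollary of the paper's main machinery, and automatically delivers the fidelity-based version stated in Remark~\ref{EoF-fidelity}.
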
\smallskip

The claims of Proposition \ref{new-EF-SCB} are  deduced from Lemma \ref{Wilde-SCB} in Section 2.3 by the same arguments (the Nielsen-Winter technique plus approximation) that are used to deduce the claims of Proposition 4A in \cite{LCB} from  Proposition 3A in \cite{LCB}.

The semicontinuity bound (\ref{new-EF-SCB+}) agrees with Wilde's continuity bound for the entanglement of formation
obtained in \cite{Wilde-CB}.\medskip

\begin{remark}\label{EoF-fidelity}
In Proposition \ref{new-EF-SCB},  the condition $\,\frac{1}{2}\|\rho-\sigma\|_1\leq \varepsilon$, where
$\varepsilon$ is such that $\,\delta=\sqrt{\varepsilon(2-\varepsilon)}$, can be replaced by the
condition  $\,F(\rho,\sigma)\geq 1-\delta^2$, where  $F(\rho,\sigma)$
is the fidelity of $\rho$ and $\sigma$ defined in (\ref{fidelity}).  This follows from the  Nielsen-Winter arguments used
in the proof. The use of fidelity as a measure of closeness of the
states $\rho$ and $\sigma$ makes the semicontinuity bounds in (\ref{new-EF-SCB+}) essentially sharper in some cases. Moreover,
the semicontinuity bounds in (\ref{new-EF-SCB+}) with $\,\delta=\sqrt{1-F(\rho,\sigma)}$ are \emph{tight} for large $r$.

To show this assume that $\varphi$ is a unit vector in $\H_{AB}$ such that $\Tr_B|\varphi\rangle\langle\varphi|$ is
a state proportional to a projector of rank $r$. Assume that $\alpha$ and $\beta$ are unit vectors in $\H_{A}$ and $\H_B$, respectively, such that
$\alpha\,\bot\, \supp \Tr_B|\varphi\rangle\langle\varphi|$ and $\beta\,\bot\,\supp \Tr_A|\varphi\rangle\langle\varphi|$.
Consider the pure states $\rho=|\theta_{\frac{1}{2}-\delta}\rangle\langle\theta_{\frac{1}{2}-\delta}|$ and $\sigma=|\theta_{\frac{1}{2}}\rangle\langle\theta_{\frac{1}{2}}|$,
where $\theta_p=\sqrt{1-p}\,\varphi+\sqrt{p}\,\alpha\otimes\beta$ is a unit vector in $\H_{AB}$ for any $p\in[0,1]$. Then we have
$$
\begin{array}{c}
1-F(\rho,\sigma)\leq[\beta(\rho,\sigma)]^2\leq \|\theta_{\frac{1}{2}}-\theta_{\frac{1}{2}-\delta}\|^2=\left[\sqrt{\frac{1}{2}+\delta}-\sqrt{\frac{1}{2}}\right]^2+\left[\sqrt{\frac{1}{2}-\delta}-\sqrt{\frac{1}{2}}\right]^2\\\\
=2-\sqrt{1+2\delta}-\sqrt{1-2\delta}=\delta^2+o(\delta^2)\quad \textrm{as}\;\; \delta \to0,
\end{array}
$$
where $\beta(\rho,\sigma)$ is the Bures distance defined in (\ref{B-dist}).

Thus, the semicontinuity bounds in (\ref{new-EF-SCB+}) with $\,\delta=\sqrt{1-F(\rho,\sigma)}$ imply that
\begin{equation}\label{EoF++}
E^*_F(\rho)-E^*_F(\sigma)\leq (\delta+o(\delta))\ln r+h_2(\delta+o(\delta))\quad \textrm{as}\;\; \delta \to0,
\end{equation}
where $E^*_F(\vartheta)=E_{F,d}(\vartheta)=E_{F,c}(\vartheta)$, $\vartheta=\rho,\sigma$ (because $\rank\rho_A=r+1$).\smallskip

Since $\rho_A=(\frac{1}{2}+\delta)\Tr_B|\varphi\rangle\langle\varphi|+(\frac{1}{2}-\delta)|\alpha\rangle\langle\alpha|$ and
$\sigma_A=\frac{1}{2}\Tr_B|\varphi\rangle\langle\varphi|+\frac{1}{2}|\alpha\rangle\langle\alpha|$, we have
$$
\!E^*_F(\rho)-E^*_F(\sigma)=S(\rho_A)-S(\sigma_A)=\delta\ln r+h_2(\textstyle\frac{1}{2}+\delta)-h_2(\frac{1}{2})=\delta\ln r-o(\delta)\;\; \textrm{as}\;\, \delta \to0.
$$
By comparing this with (\ref{EoF++}) we see  that the semicontinuity bounds in (\ref{new-EF-SCB+}) with $\,\delta=\sqrt{1-F(\rho,\sigma)}$ are \emph{tight} for large $r$.
\end{remark}\medskip

\begin{corollary}\label{new-EF-SCB-c} \emph{Let $AB$ be an infinite-dimensional bipartite quantum system.}\smallskip

\emph{Let $\rho$ be a state in $\S(\H_{AB})$ such that $\,r=\min\{\rank\rho_A,\rank\rho_B\}<+\infty$  and
$$
\Delta^{\!F}(\rho)\doteq \sqrt{\,1-\sup_{\sigma\in\S_{\rm sep}(\H_{AB})}F(\rho,\sigma)}\leq 1-1/r,
$$
where $\S_{\rm sep}(\H_{AB})$ is the set of separable states in $\S(\H_{AB})$. Then}
\begin{equation}\label{new-EF-SCB+c}
   E_{F,d}(\rho)=E_{F,c}(\rho)\leq\Delta^{\!F}(\rho)\ln(r-1)+h_2(\Delta^{\!F}(\rho)).
\end{equation}

\emph{The quantity $\Delta^{\!F}(\rho)$ in (\ref{new-EF-SCB+c}) can be replaced by any its upper bound $\widetilde{\Delta}^{\!F}(\rho)$ not exceeding $1-1/r$, in particular,
by the Bures distance from $\rho$ to $\S_{\rm sep}(\H_{AB})$ defined as
$$
\Delta^{\beta}(\rho)\doteq \inf_{\sigma\in\S_{\rm sep}(\H_{AB})}\beta(\rho,\sigma)
$$
or by any of the quantities $\beta(\rho,\sigma_0)$ and $\sqrt{1-F(\rho,\sigma_0)}$, where $\sigma_0$ is any  separable state.}
\end{corollary}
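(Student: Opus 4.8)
The plan is to deduce Corollary \ref{new-EF-SCB-c} from Proposition \ref{new-EF-SCB} -- in the fidelity form described in Remark \ref{EoF-fidelity} -- by taking the second state in that proposition to be a separable state that almost realises the supremum defining $\Delta^{\!F}(\rho)$. First I would record two preliminaries. Since $r=\min\{\rank\rho_A,\rank\rho_B\}<+\infty$ we have $\min\{S(\rho_A),S(\rho_B)\}\le\ln r<+\infty$, so the result of \cite{EM} quoted in this subsection gives $E_{F,d}(\rho)=E_{F,c}(\rho)$; write $E^*_F(\rho)$ for this common value, and note that decomposing $\rho$ by measuring a purifying system in an eigenbasis of its marginal yields, via concavity of $S$, the bound $E^*_F(\rho)\le\min\{S(\rho_A),S(\rho_B)\}\le\ln r$. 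Secondly, $E^*_F(\sigma)=0$ for every separable $\sigma$: this is immediate when $\sigma$ is a finite convex combination of pure product states, and such combinations are trace-norm dense in $\S_{\rm sep}(\H_{AB})$, on which $\sigma\mapsto F(\rho,\sigma)$ is continuous, so the supremum defining $\Delta^{\!F}(\rho)$ is unchanged if restricted to them.

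Next I would fix separable states $\sigma_n$ (finite convex combinations of pure product states) with $F(\rho,\sigma_n)\uparrow 1-[\Delta^{\!F}(\rho)]^2$, and set $\delta_n=\sqrt{1-F(\rho,\sigma_n)}$, so that $\delta_n\downarrow\Delta^{\!F}(\rho)$ and $F(\rho,\sigma_n)\ge 1-\delta_n^2$. If $\Delta^{\!F}(\rho)<1-1/r$ then $\delta_n\le 1-1/r$ for all large $n$, and applying the fidelity version of Proposition \ref{new-EF-SCB} to the pair $(\rho,\sigma_n)$ gives $E^*_F(\rho)\le E^*_F(\sigma_n)+\delta_n\ln(r-1)+h_2(\delta_n)=\delta_n\ln(r-1)+h_2(\delta_n)$; letting $n\to\infty$ and using continuity of $t\mapsto t\ln(r-1)+h_2(t)$ proves (\ref{new-EF-SCB+c}). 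The boundary case $\Delta^{\!F}(\rho)=1-1/r$ is treated separately: a direct computation shows that the right-hand side of (\ref{new-EF-SCB+c}) then equals $(1-1/r)\ln(r-1)+h_2(1/r)=\ln r$, so (\ref{new-EF-SCB+c}) follows from $E^*_F(\rho)\le\ln r$.

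For the last statement, I would use that $t\mapsto t\ln(r-1)+h_2(t)$ is non-decreasing on $[0,1-1/r]$ -- its derivative $\ln(r-1)+\ln\frac{1-t}{t}$ is non-negative precisely for $t\le 1-1/r$ -- so for any $\widetilde{\Delta}^{\!F}(\rho)$ with $\Delta^{\!F}(\rho)\le\widetilde{\Delta}^{\!F}(\rho)\le 1-1/r$ the right-hand side of (\ref{new-EF-SCB+c}) only increases upon replacing $\Delta^{\!F}(\rho)$ by $\widetilde{\Delta}^{\!F}(\rho)$. That the Bures distance $\Delta^{\beta}(\rho)$ to $\S_{\rm sep}(\H_{AB})$, as well as $\beta(\rho,\sigma_0)$ and $\sqrt{1-F(\rho,\sigma_0)}$ for a separable $\sigma_0$, are such upper bounds follows from the elementary inequality $\beta(\rho,\sigma)^2=2-2\sqrt{F(\rho,\sigma)}\ge 1-F(\rho,\sigma)$ (equivalently $(1-\sqrt{F(\rho,\sigma)})^2\ge0$), which gives $\beta(\rho,\sigma)\ge\sqrt{1-F(\rho,\sigma)}$; taking the infimum over separable $\sigma$ yields $\Delta^{\beta}(\rho)\ge\sqrt{1-\sup_{\sigma}F(\rho,\sigma)}=\Delta^{\!F}(\rho)$, and the other two bounds are even more direct.

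I do not expect a serious obstacle, this being a corollary of Proposition \ref{new-EF-SCB}. The only points requiring some care are the limiting argument when the supremum defining $\Delta^{\!F}(\rho)$ is not attained, the separate treatment of the equality case $\Delta^{\!F}(\rho)=1-1/r$ (where only the trivial bound $E^*_F(\rho)\le\ln r$ is available and happens to coincide with the claimed one), and the density remark used to see that $E^*_F$ vanishes on separable states and that $\Delta^{\!F}(\rho)$ may be computed over finite convex combinations of pure product states.
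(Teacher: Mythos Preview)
Your proposal is correct and follows essentially the same route as the paper: apply the fidelity form of Proposition~\ref{new-EF-SCB} (Remark~\ref{EoF-fidelity}) with a separable state close to the supremum in $\Delta^{\!F}(\rho)$, then invoke the monotonicity of $t\mapsto t\ln(r-1)+h_2(t)$ on $[0,1-1/r]$ together with $\sqrt{1-F}\le\beta$ for the final claim. The only noteworthy difference is in handling the vanishing of the EoF on the comparison state: the paper works directly with $E_{F,c}$, for which $E_{F,c}(\sigma)=0$ is known for \emph{every} separable $\sigma$ (and explicitly flags that the same is open for $E_{F,d}$ in infinite dimensions, due to countably-non-decomposable separable states), whereas you sidestep this by restricting to finite convex combinations of pure product states --- dense in $\S_{\rm sep}(\H_{AB})$ and with both $E_{F,d}=E_{F,c}=0$ trivially --- which is an equally valid way around the issue.
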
\smallskip

\begin{proof} To prove (\ref{new-EF-SCB+c}) it suffices to use the semicontinuity bound in (\ref{new-EF-SCB+})
for $E_{F,c}$ with $\,\delta=\sqrt{1-F(\rho,\sigma)}$ (see Remark \ref{EoF-fidelity}), since
$E_{F,c}(\sigma)=0$ for any separable state $\sigma\in\S(\H_{AB})$. Note that the
same property of $E_{F,d}$ is not proved yet (because of the existence of countably-non-decomposable separable states in infinite-dimensional
bipartite systems).

The last claim follows from the fact
that the function $x\mapsto x\ln(r-1)+h_2(x)$ is nondecreasing on $[0,1-1/r]$ and the well known inequality $\,\sqrt{1-F(\rho,\sigma)}\leq\beta(\rho,\sigma)$ \cite{H-SCI,Wilde}.
\end{proof}

\section{How to apply the results of this article dealing with the concept of ensemble commonly used in QIT}

\subsection{General observations}

In this section we describe the relations between the notion of a generalized  ensemble
introduced in Section 2.2.2  and  the notion of a generalized (continuous) ensemble commonly used in quantum information theory,
quantum statistics, etc. We also show how the main results of this article can be reformulated in terms of the latter concept of ensemble.

Let $\X$ be a measurable space. According to the commonly used notation (cf.\cite{H-SCI,H-GE}) a generalized ensemble $\{\pi, \rho_x\}_{x\in\X}$
consists of a probability measure $\pi$  on $\X$ and a "measurable family" $\{\rho_x\}_{x\in\X}$ of
quantum states on $\H$. It means, mathematically, that $\rho_x=\Upsilon(x)$, where $\Upsilon$ is a measurable $\S(\H)$-valued function on $\X$.
So, the image $\Upsilon(\pi)$ of the measure $\pi$ under the map $\Upsilon$ (also denoted by $\pi\circ\Upsilon^{-1}$, since
$\Upsilon(\pi)[\S]=\pi(\Upsilon^{-1}[\S])$ for any Borel subset $\S\subseteq\S(\H)$) is a Borel probability measure on $\S(\H)$, i.e.
$\Upsilon(\pi)$ is a (generalized) ensemble in $\P(\H)$ in terms of Section 2.2.2.

If $\X=\S(\H)$ and $\Upsilon$ is the identity map then  $\,\{\pi, \rho_x\}_{x\in\X}=\pi\,$ for any $\pi$ in $\P(\H)$, but in general
the above correspondence  $\,\{\pi, \rho_x\}_{x\in\X}\to \Upsilon(\pi)\,$ is not bijective and surjective:
the measures $\Upsilon(\pi)$ and $\Upsilon(\pi')$ may coincide for different $\pi$ and $\pi'$ and
not every measure from $\P(\H)$ can be represented as $\Upsilon(\pi)$.

Note also that when we speak about some information characteristic of an ensemble  $\{\pi, \rho_x\}_{x\in\X}$
we really speak about this characteristic of the corresponding ensemble $\Upsilon(\pi)$. For example, the Holevo information
$$
\chi(\{\pi, \rho_x\})\doteq \int_{\X}D(\rho_x\,\|\bar{\rho}_x)\pi(dx),\quad \bar{\rho}_x=\int_{\X}\rho_x\pi(dx),
$$
of an ensemble  $\{\pi, \rho_x\}_{x\in\X}$ is equal (due to the basic property of the Lebesgue integral) to the Holevo information
$$
\chi(\Upsilon(\pi))\doteq \int_{\S(\H)}D(\rho\,\|\bar{\rho}(\mu))\mu(d\rho),\quad \mu=\Upsilon(\pi)
$$
of the corresponding ensemble $\Upsilon(\pi)$.

Thus, the notion of a generalized ensemble considered in  Section 2.2.2 is not no poorer (narrower) than the notion
described at the begin of this subsection.\footnote{In fact, the notion of a generalized ensemble considered in  Section 2.2.2 gives a great convenience in proving theoretical results.} So, we may  formulate the following
\smallskip

\begin{remark}\label{b-rem} All the semicontinuity and continuity bounds for the AOE and the output Holevo information
a channel obtained in this article can be reformulated
by using the notion of ensemble described at the begin of this subsection.
\end{remark}\smallskip

The only problem consist in the fact that, in general, we have to use the distances $D_X(\Upsilon(\pi_1),\Upsilon(\pi_2))$, $D_X=D_K,D_*$,
in all the semicontinuity and continuity bounds in which probability measures $\pi_1$ and $\pi_2$ on $\X$ are involved. It is very natural as long as we do not assume any metric on $\X$ and continuity properties of the map $\Upsilon$. But if $\X$ is a metric space and $\Upsilon$ is a Lipschitz map then
we can define the Kantorovich–Rubinshtein distance and the modified Kantorovich–Rubinshtein distance between Borel probability measures $\pi_1$ and $\pi_2$ on $\X$
in the way described in Appendix A-1 and use the following\smallskip\pagebreak

\begin{lemma}\label{K-L} \emph{If $\,\Upsilon:\X\to\S(\H)$ is a map with the  Lipschitz constant $L$ then\footnote{We assume that $\S(\H)$ is equipped with the trace norm metric $d(\rho,\sigma)\doteq\|\rho-\sigma\|_1$.}
$$
D_K(\Upsilon(\pi_1),\Upsilon(\pi_2))\leq \textstyle\frac{1}{2}\shs LD_{KR}(\pi_1,\pi_2)
$$
for any Borel probability measures $\pi_1$ and $\pi_2$ on $\X$, where $D_{KR}(\pi_1,\pi_2)$ is
the Kantorovich–Rubinshtein distance between $\pi_1$ and $\pi_2$ described in Appendix A-1.}

\emph{If $\pi_1$ and $\pi_2$ are measures in the set $\P_*(\X)$ (defined in  Appendix A-1) then
$$
D_K(\Upsilon(\pi_1),\Upsilon(\pi_2))\leq \textstyle\frac{1}{2}\shs LD^*_{KR}(\pi_1,\pi_2),
$$
where $D^*_{KR}(\pi_1,\pi_2)$ is  the modified Kantorovich–Rubinshtein distance between $\pi_1$ and $\pi_2$ described in Appendix A-1.}
\end{lemma}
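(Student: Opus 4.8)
The plan is to prove the statement via the standard coupling/Kantorovich-duality argument, transferring an optimal (or near-optimal) transport plan on $\X$ to a transport plan on $\S(\H)$ through the map $\Upsilon$. The essential point is that the Kantorovich distance is monotone under pushforward of measures along Lipschitz maps, with the Lipschitz constant appearing as a multiplicative factor; the factor $\frac12$ in the statement is just the $\frac12$ built into the definitions (\ref{K-D-c}) and (\ref{K-D-d}) of $D_K$ (trace-norm metric) versus the unnormalized metrics used for $D_{KR}$, $D^*_{KR}$ in Appendix A-1.

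First I would recall that $D_{KR}(\pi_1,\pi_2)$ (see Appendix A-1) equals $\inf_{\Lambda}\int_{\X\times\X}d_\X(x,y)\,\Lambda(dx,dy)$, where the infimum is over Borel probability measures $\Lambda$ on $\X\times\X$ with marginals $\pi_1$ and $\pi_2$. Fix any such coupling $\Lambda$. Consider the pushforward $\Lambda' \doteq (\Upsilon\times\Upsilon)(\Lambda)$, a Borel probability measure on $\S(\H)\times\S(\H)$. Since pushforward commutes with taking marginals, $\Lambda'$ has marginals $\Upsilon(\pi_1)$ and $\Upsilon(\pi_2)$, so $\Lambda'\in\Pi(\Upsilon(\pi_1),\Upsilon(\pi_2))$ in the notation of (\ref{K-D-c}). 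By the Lipschitz bound $\|\Upsilon(x)-\Upsilon(y)\|_1\leq L\,d_\X(x,y)$ and the change-of-variables formula for pushforward measures,
\[
2D_K(\Upsilon(\pi_1),\Upsilon(\pi_2))\leq\int_{\S(\H)\times\S(\H)}\|\rho-\sigma\|_1\,\Lambda'(d\rho,d\sigma)=\int_{\X\times\X}\|\Upsilon(x)-\Upsilon(y)\|_1\,\Lambda(dx,dy)\leq L\int_{\X\times\X}d_\X(x,y)\,\Lambda(dx,dy).
\]
Taking the infimum over all couplings $\Lambda$ gives $2D_K(\Upsilon(\pi_1),\Upsilon(\pi_2))\leq L\,D_{KR}(\pi_1,\pi_2)$, which is the first assertion.

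For the second assertion the argument is identical, but one must check that the restriction to the class $\P_*(\X)$ (the measures with finite first moment, presumably, for which the modified distance $D^*_{KR}$ is finite and well-behaved) is preserved: namely that if $\pi_1,\pi_2\in\P_*(\X)$ then admissible couplings realizing $D^*_{KR}$ exist, and that their pushforwards are admissible in whatever modified class is used to define the relevant version of $D_K$ — but since $\S(\H)$ has bounded diameter ($\|\rho-\sigma\|_1\le2$), $\Upsilon(\pi_1),\Upsilon(\pi_2)$ automatically lie in $\P(\H)$ with no moment restriction, so $D_K$ and its modified variant coincide on pushforwards and the same inequality chain applies with $D^*_{KR}$ in place of $D_{KR}$. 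The only genuine obstacle — and it is a mild one — is bookkeeping: making sure the definitions of $D_{KR}$ and $D^*_{KR}$ in Appendix A-1 use exactly the coupling formulation I am assuming (as opposed to the dual Lipschitz-functional formulation), and that the normalization constants match so that the factor $\frac12 L$ is exactly right; if the appendix states $D_{KR}$ in dual form one first invokes Kantorovich–Rubinstein duality to recover the coupling form. No deeper issue arises, since Lipschitz maps push couplings to couplings and the cost functional only increases by the factor $L$.
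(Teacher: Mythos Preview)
Your coupling/pushforward argument is precisely the natural unpacking of the paper's one-line justification (``both claims follow directly from the definitions \ldots\ and the remark at the end of Appendix~A-1''), and for the \emph{second} inequality it is complete: pushing any coupling $\Lambda\in\Pi(\pi_1,\pi_2)$ forward by $\Upsilon\times\Upsilon$ and using the Lipschitz bound yields $2D_K(\Upsilon(\pi_1),\Upsilon(\pi_2))\le L\int d_\X\,d\Lambda$, and by (\ref{K-R-m+}) the infimum of the right-hand side over $\Lambda$ is exactly $L\,D^*_{KR}(\pi_1,\pi_2)$.

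For the \emph{first} inequality, however, your ``mild bookkeeping'' caveat hides a genuine issue. In Appendix~A-1, $D_{KR}$ is defined via the \emph{bounded}-Lipschitz dual formula (\ref{K-R}) (with the extra constraint $\sup|f|\le1$), while the coupling representation (\ref{K-R-m+}) is stated only for $D^*_{KR}$; the two coincide only when $\mathrm{diam}(\X)\le1$. Kantorovich--Rubinstein duality therefore hands you $D^*_{KR}$, not $D_{KR}$, and since $D_{KR}\le D^*_{KR}$ in general, the first inequality does not follow from your argument. Nor can it be repaired: take $\X=\mathbb{R}$, $\pi_i=\delta_{x_i}$ with $|x_1-x_2|$ large, and let $\Upsilon$ be a slow Lipschitz path (small $L$) joining two orthogonal pure states; then $D_K(\Upsilon(\pi_1),\Upsilon(\pi_2))=1$ while $\tfrac12 L\,D_{KR}(\pi_1,\pi_2)=L$ can be made arbitrarily small. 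So your proof of the second claim is exactly what the paper intends; but the first claim, as stated for arbitrary $L$, is not reachable by this route---your instinct that something needed checking there was correct.
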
\smallskip

Both claims of Lemma \ref{K-L} follow directly from the definitions of $D_K(\Upsilon(\pi_1),\Upsilon(\pi_2))$, $D_{KR}(\pi_1,\pi_2)$ and $D^*_{KR}(\pi_1,\pi_2)$
and the remark at the end of   Appendix A-1.\smallskip

By applying  Lemma \ref{K-L} one can reformulate  all the semicontinuity and continuity bounds obtained  in this article with the use of
the distances  $D_{KR}(\pi_\mu,\pi_\nu)$ and $D^*_{KR}(\pi_\mu,\pi_\nu)$ between the "original" measures $\pi_\mu$ and $\pi_\nu$ determining the ensembles $\mu=\{\pi_\mu,\rho_x\}_{x\in\X}$ and $\nu=\{\pi_\nu,\rho_x\}_{x\in\X}$.

\subsection{Example: ensembles of coherent states}

Assume that $A_1...A_n$ is a $n$-mode quantum oscillator and  $\{\vert \bar{\zeta}\rangle\langle \bar{\zeta}\vert\}_{\bar{\zeta}\in\mathbb{C}^n}$ is the family of coherent states \cite{Gla,IQO},\cite[Ch.12]{H-SCI}. Coherent states plays a special role in the "continuous variable" quantum information theory. In particular, it is proved that coherent states forms optimal or close-to-optimal codes for transmitting classical information through many Gaussian channels  \cite{MGH}. Mathematically, this is expressed by the fact that
the supremum of the output Holevo information of such channels is attained at ensembles of coherent states, i.e. ensembles of the form
$\{\pi, \vert \bar{\zeta}\rangle\langle \bar{\zeta}\vert\}_{\bar{\zeta}\in\mathbb{C}^n}$, where $\pi$ is a Borel probability measure on $\mathbb{C}^n$.

Note that
the map $\mathbb{C}^n\ni\bar{\zeta}\mapsto \vert \bar{\zeta}\rangle\langle \bar{\zeta}\vert\in\S(\H_{A_1...A_n})$ has the  Lipschitz constant equal to $2$.
Indeed, by using the well known expression for the scalar product of two "coherent" vectors $\vert \bar{\zeta_1}\rangle$ and  $\vert \bar{\zeta_2}\rangle$ (see, f.i., \cite[Section 12.1.3]{H-SCI}) it is
easy to show that
$$
\frac{1}{2}\|\vert\bar{\zeta_1}\rangle\langle \bar{\zeta_1}\vert-\vert \bar{\zeta_2}\rangle\langle \bar{\zeta_2}\vert\|_1\leq\beta(\vert\bar{\zeta_1}\rangle\langle \bar{\zeta_1}\vert,\vert \bar{\zeta_2}\rangle\langle \bar{\zeta_2}\vert)=\sqrt{2-2\exp(-\textstyle\frac{1}{2}\|\bar{\zeta_1}-\bar{\zeta_2}\|^2)}\leq\|\bar{\zeta_1}-\bar{\zeta_2}\|,
$$
where the first inequality follows from (\ref{B-d-s-r}), the second one -- from the inequality\break $1-1/x\leq\ln x$ and $\|\bar{\zeta_1}-\bar{\zeta_2}\|$ denotes the
Euclidean norm in $\mathbb{C}^n$.

Thus, we can apply the main results of this article to ensembles of  coherent states by using Lemma \ref{K-L} with $L=2$.
For example, Proposition \ref{main-2} implies the following\smallskip\pagebreak

\begin{corollary}\label{main-2-c-s} \emph{Let $\Phi$ be a channel from $A_1...A_n$ to $B$ and $\mu=\{\pi_{\mu},\vert \bar{\zeta}\rangle\langle \bar{\zeta}\vert\}_{\bar{\zeta}\in\mathbb{C}^n}$
an  ensemble of coherent states in $\S(\H_{A_1...A_n})$ determined by a Borel probability measure $\pi_{\mu}$ on $\mathbb{C}^n$. Let $H_B$ be a positive operator on $\H_B$ satisfying  conditions (\ref{H-cond})  and (\ref{star}). Let $F_{H_B}$ and $g$ be the functions defined, respectively, in (\ref{F-def}) and (\ref{g-def}).}\smallskip

\emph{If $\,\bar{E}_{H_B}^{\rm psv}(\Phi(\mu))=E_B<+\infty\,$ then\footnote{$\bar{E}_{H_B}^{\rm psv}(\Phi(\mu))$ is  the average output passive energy of the channel $\Phi$ at the ensemble $\mu$ w.r.t. the "Hamiltonian" $H_B$ defined in (\ref{OPE-def}).}
\begin{equation}\label{main-2-c-s+}
\overline{S}(\Phi(\mu))-\overline{S}(\Psi(\nu))\leq \varepsilon F_{H_B}(E_B/\varepsilon)+g(\varepsilon)
\end{equation}
for any  ensemble $\nu=\{\pi_{\nu},\vert \bar{\zeta}\rangle\langle \bar{\zeta}\vert\}_{\bar{\zeta}\in\mathbb{C}^n}$ of coherent states in $\S(\H_{A_1...A_n})$ determined by a probability measure $\pi_{\nu}$ on $\mathbb{C}^n$ and any channel $\Psi:A\to B$ such that
\begin{equation}\label{e-cond-2-s}
D_{KR}(\pi_\mu,\pi_\nu)+\textstyle\frac{1}{2}\|\Phi-\Psi\|_{Y}\leq\varepsilon,
\end{equation}
where $D_{KR}(\pi_\mu,\pi_\nu)$ is
the Kantorovich–Rubinshtein disctance between the measures $\pi_{\mu}$ and $\pi_{\nu}$, $\|\cdot\|_{Y}$ is either the diamond norm $\|\cdot\|_{\diamond}$ defined in (\ref{d-norm}) or its unstabilized version $\|\cdot\|_{1\to1}$ defined in (\ref{d-norm+}).}
\smallskip

\emph{The following replacements can be done independently on each other:}
\begin{enumerate}[(1)]
\item \emph{if $\pi_\mu$ and $\pi_\nu$ are measures from the set $\P_*(\mathbb{C}^n)$ (defined in  Appendix A-1) then the Kantorovich–Rubinshtein disctance $D_{KR}(\pi_\mu,\pi_\nu)$ in (\ref{e-cond-2-s}) can be replaced with the modified Kantorovich–Rubinshtein disctance $D_{KR}^*(\pi_\mu,\pi_\nu)$;}
\item  \emph{$E_B=\bar{E}_{H_B}^{\rm psv}(\Phi(\mu))$ in (\ref{main-2-c-s+}) can be replaced with $\,E_B=\Tr H_B\Phi(\bar{\rho}(\mu))$;} \smallskip
\item \emph{if either $\,\Tr H_A\bar{\rho}(\mu)=E_{A}<+\infty\,$ or $\,\Tr H_A\bar{\rho}(\nu)=E_{A}<+\infty$, where $H_A$ is a positive operator on $\H_A$, then the energy-constrained diamond norm $\|\cdot\|^{H_A}_{\diamond,E_A}$ defined in (\ref{ec-d-norm}) can be used in the role of $\|\cdot\|_{Y}$ in  (\ref{e-cond-2-s}).}
\end{enumerate}
\end{corollary}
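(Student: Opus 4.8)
The plan is to reduce Corollary \ref{main-2-c-s} to Proposition \ref{main-2} by passing from the measures $\pi_\mu$ and $\pi_\nu$ on $\mathbb{C}^n$ to the corresponding image ensembles on the state space and bounding the relevant distance. Let $\Upsilon\colon\mathbb{C}^n\ni\bar{\zeta}\mapsto\vert\bar{\zeta}\rangle\langle\bar{\zeta}\vert\in\S(\H_{A_1\dots A_n})$, so that $\mu=\Upsilon(\pi_\mu)$ and $\nu=\Upsilon(\pi_\nu)$ are generalized ensembles in $\P(\H_{A_1\dots A_n})$ in the sense of Section 2.2.2. As explained in Section 6.1, all the quantities entering the corollary --- $\overline{S}(\Phi(\mu))$, $\overline{S}(\Psi(\nu))$, $\bar{E}_{H_B}^{\rm psv}(\Phi(\mu))$, $\Tr H_B\Phi(\bar{\rho}(\mu))$, and $\Tr H_A\bar{\rho}(\mu)$, $\Tr H_A\bar{\rho}(\nu)$ --- coincide with the corresponding quantities computed for $\mu$ and $\nu$ viewed as elements of $\P(\H_{A_1\dots A_n})$, so it is enough to prove the inequality for these ensembles.

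The key ingredient is the Lipschitz bound for $\Upsilon$ recorded just before the corollary: $\Upsilon$ has Lipschitz constant $L=2$ with respect to the Euclidean norm on $\mathbb{C}^n$ and the trace-norm metric on states. I would then invoke the first claim of Lemma \ref{K-L} to obtain
$$
D_K(\mu,\nu)=D_K(\Upsilon(\pi_\mu),\Upsilon(\pi_\nu))\leq\frac{1}{2}\cdot 2\cdot D_{KR}(\pi_\mu,\pi_\nu)=D_{KR}(\pi_\mu,\pi_\nu),
$$
so that hypothesis (\ref{e-cond-2-s}) forces $D_K(\mu,\nu)+\frac{1}{2}\|\Phi-\Psi\|_Y\leq\varepsilon$, which is exactly condition (\ref{e-cond-2}) of Proposition \ref{main-2} with $D_X=D_K$. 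Since $\bar{E}_{H_B}^{\rm psv}(\Phi(\mu))=E_B<+\infty$ by assumption, Proposition \ref{main-2}A then gives (\ref{main-2-c-s+}); here I would also use that the right-hand side of (\ref{main-2+}) is nondecreasing in $\varepsilon$ (Note A after Proposition \ref{main-2}) to absorb the slack between $D_K(\mu,\nu)+\frac{1}{2}\|\Phi-\Psi\|_Y$ and $\varepsilon$.

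The three optional replacements will be handled by the same scheme. Replacement (1) uses the second claim of Lemma \ref{K-L}: when $\pi_\mu,\pi_\nu\in\P_*(\mathbb{C}^n)$ one has $D_K(\mu,\nu)\leq\frac{1}{2}\cdot 2\cdot D^*_{KR}(\pi_\mu,\pi_\nu)=D^*_{KR}(\pi_\mu,\pi_\nu)$, and the argument above carries over verbatim with $D_{KR}$ replaced by $D^*_{KR}$. Replacements (2) and (3) are inherited directly from claims B-2 and B-3 of Proposition \ref{main-2}, since the output quantity $\Tr H_B\Phi(\bar{\rho}(\mu))$ and the input-energy conditions on $\bar{\rho}(\mu)$, $\bar{\rho}(\nu)$ are unchanged under the identification $\mu=\Upsilon(\pi_\mu)$, $\nu=\Upsilon(\pi_\nu)$. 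I do not expect any serious obstacle here; the one point to watch is that Lemma \ref{K-L} bounds only the Kantorovich distance $D_K$ between the image measures and not the factor metric $D_*$, so the reduction must run through Proposition \ref{main-2} with $D_X=D_K$ --- which is harmless, since $D_*\leq D_K$ and the right-hand side of (\ref{main-2+}) does not depend on which of the two metrics is chosen.
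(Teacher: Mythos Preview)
Your proposal is correct and follows essentially the same approach as the paper: the corollary is derived from Proposition \ref{main-2} by noting that the coherent-state map $\Upsilon$ has Lipschitz constant $2$ and invoking Lemma \ref{K-L} with $L=2$ to pass from $D_{KR}(\pi_\mu,\pi_\nu)$ (or $D^*_{KR}$) to $D_K(\mu,\nu)$. The paper presents this derivation only implicitly, in the sentence ``Thus, we can apply the main results of this article to ensembles of coherent states by using Lemma \ref{K-L} with $L=2$,'' so your write-up simply makes the argument explicit.
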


\begin{example}\label{C-states+} Assume that $A$ is an one-mode quantum oscillator and $\Phi$ is a channel
from $A$ to $A$ defined as
$$
\Phi(\rho)=\sum_{n=0}^{+\infty}\langle n|\rho|n\rangle|n\rangle\langle n|,\quad \rho\in\S(\H_A),
$$
where $\{|n\rangle\}$ is the Fock basis in $A$ \cite[Ch.12]{H-SCI}.

Let $H_A=\hat{N}\doteq a^{\dagger}a$ be the number operator on $\H_A$ and
$\mu=\{\pi_{\mu},\vert \zeta\rangle\langle \zeta\vert\}_{\zeta\in\mathbb{C}}$ the  ensemble of coherent states determined by
the measure  $\pi_{\mu}$ on $\mathbb{C}$ with the density $p_{\mu}(\zeta)=\frac{1}{\pi N}\exp(-|\zeta|^2/N)$. Then $\bar{\rho}(\mu)=\gamma_{\hat{N}}(N)$ is
the Gibbs state corresponding to the
"energy" (the number of quanta) $N$ \cite[Ch.12]{H-SCI}. As the state $\gamma_{\hat{N}}(N)$ is diagonizable in the basis $\{|n\rangle\}$, we have $\Phi(\bar{\rho}(\mu))=\bar{\rho}(\mu)=\gamma_{\hat{N}}(N)$.

By using the well known expression for the scalar product of the vectors $\vert \zeta\rangle$ and  $\vert n\rangle$ (see, f.i., \cite[Section 12.1.3]{H-SCI})
we obtain
\begin{equation}\label{exp-1}
\Phi(\vert \zeta\rangle\langle \zeta\vert)=\sum_{n=0}^{+\infty}|\langle n|\zeta\rangle|^2|n\rangle\langle n|=\sum_{n=0}^{+\infty}\frac{|\zeta|^{2n}}{n!}\exp(-|\zeta|^2)|n\rangle\langle n|.
\end{equation}
It follows that
\begin{equation}\label{E-psv}
\bar{E}_{\hat{N}}^{\rm psv}(\Phi(\vert \zeta\rangle\langle \zeta\vert))\leq\Tr\hat{N}\Phi(\vert \zeta\rangle\langle \zeta\vert)=\sum_{n=0}^{+\infty}n\,\frac{|\zeta|^{2n}}{n!}\exp(-|\zeta|^2)=|\zeta|^{2}
\end{equation}
and hence
\begin{equation}\label{E-psv+}
\bar{E}_{\hat{N}}^{\rm psv}(\Phi(\mu))\leq\int_{\mathbb{C}}|\zeta|^2\pi_{\mu}(d^2\zeta)=\frac{1}{\pi N}\int_{\mathbb{C}}|\zeta|^2\exp(-|\zeta|^2/N)d^2\zeta=N.
\end{equation}
The last upper bound on $\bar{E}_{\hat{N}}^{\rm psv}(\Phi(\mu))$ coincides with the obvious upper bound $\Tr\hat{N}\Phi(\bar{\rho}(\mu))$. However, we will need this
integral expression below.

Expression (\ref{exp-1}) also implies that
\begin{equation}\label{exp-2}
 \overline{S}(\Phi(\mu))=\frac{1}{\pi N}\int_{\mathbb{C}}H_P(|\zeta|^2)d^2\zeta,
\end{equation}
where $\,H_P(\lambda)=\lambda(1-\ln\lambda)+e^{-\lambda}\sum_{n=0}^{+\infty}\frac{\lambda^n\ln n!}{n!}\,$ is the Shannon entropy of the
Poisson distribution with parameter $\lambda$.

Thus, since $F_{\hat{N}}(E)=g(E)$ -- the function defined in (\ref{g-def}), Corollary \ref{main-2-c-s} implies that
\begin{equation}\label{main-2-c-s+2}
\overline{S}(\Phi(\mu))-\overline{S}(\Psi(\nu))\leq \varepsilon g(N/\varepsilon)+g(\varepsilon)
\end{equation}
for any channel $\Psi:A\to A$ and any  ensemble $\nu=\{\pi_{\nu},\vert \zeta\rangle\langle \zeta\vert\}_{\bar{\zeta}\in\mathbb{C}}$ of coherent states in $\S(\H_{A})$ determined by a probability measure $\pi_{\nu}$ on $\mathbb{C}$ such that
\begin{equation*}
D_{KR}(\pi_\mu,\pi_\nu)+\textstyle\frac{1}{2}\|\Phi-\Psi\|_{Y}\leq\varepsilon,
\end{equation*}
where $D_{KR}(\pi_\mu,\pi_\nu)$ is
the Kantorovich–Rubinshtein distance between the measures  $\pi_{\mu}$ and $\pi_{\nu}$ on $\mathbb{C}$ and
$\|\cdot\|_{1\to1}$ is either the norm $\|\cdot\|_{1\to1}$ defined in (\ref{d-norm+})
or the energy-constrained  diamond norm
$\|\cdot\|^{\hat{N}}_{\diamond,N}$ defined in (\ref{ec-d-norm}).\smallskip

Below we show how to apply  the semicontinuity bound (\ref{main-2-c-s+2}) to estimate
the variation of the AOE of the channel $\Phi$ under  replacing the continuous ensemble $\mu$ with its discrete approximation
induced by discretization of the corresponding probability measure $\pi_\mu$ on $\mathbb{C}$.\smallskip

Take $\Delta>0$ and consider the probability measure $\pi^{\Delta}_{\mu}=\sum_{k,j\in\mathbb{Z}}p_{kj}\delta(\zeta_{kj})$ on $\mathbb{C}$, where
$p_{kj}=\pi_{\mu}(\Pi_{kj})$,
\begin{equation}\label{Pi-set}
\Pi_{kj}=\{\zeta\in\mathbb{C}\,|\,\Re\zeta\in(\Delta k,\Delta (k+1)],\Im\zeta\in(\Delta j,\Delta(j+1)]\}
\end{equation}
and $\delta(\zeta_{kj})$ is the Dirac (single atom) measure concentrating at the complex number $\,\zeta_{kj}=\Delta((k+1/2)+\mathrm{i}(j+1/2))$. The measure
$\pi^{\Delta}_{\mu}$ is a $\Delta$-discretization of the continuous measure $\pi_{\mu}$. Denote the ensemble of coherent states corresponding to the measure
$\pi^{\Delta}_{\mu}$ by $\mu_{\Delta}$. It can be called $\Delta$-discretization of the ensemble $\mu$.

It follows from the definition of
the Kantorovich–Rubinshtein distance (see formula (\ref{K-R}) in the Appendix A-1) that
\begin{equation}\label{K-R-D}
D_{KR}(\pi_\mu,\pi^{\Delta}_{\mu})\leq\frac{\Delta}{\sqrt{2}}.
\end{equation}
Thus, by taking $\Psi=\Phi$ and $\pi_{\nu}=\pi^{\Delta}_{\mu}$ we obtain from (\ref{main-2-c-s+2}) the inequality
\begin{equation}\label{D-bound}
\overline{S}(\Phi(\mu))-\overline{S}(\Phi(\mu_{\Delta}))\leq \frac{\Delta}{\sqrt{2}}\;g\!\left(\frac{\sqrt{2}N}{\Delta}\right)+g\!\left(\frac{\Delta}{\sqrt{2}}\right),
\end{equation}
which gives an upper bound on the possible loss of $\overline{S}(\Phi(\mu))$ under the $\Delta$-discretization of $\mu$. Note that this
bound tends to zero as $\Delta\to0$.

To obtain  an upper bound on the possible gain of $\overline{S}(\Phi(\mu))$ under the $\Delta$-discretization of $\mu$ one can use
(\ref{main-2-c-s+2}) with $\mu$ and $\nu$ replaced by $\mu_{\Delta}$ and $\mu$, respectively, and $\Psi=\Phi$. We have only to estimate $\bar{E}_{\hat{N}}^{\rm psv}(\Phi(\mu_{\Delta}))$.

Introduce the function $f(\zeta)=\sum_{k,j\in\mathbb{Z}}|\zeta_{kj}|^2\mathbb{I}_{\Pi_{kj}}(\zeta)$ on $\mathbb{C}$, where $\mathbb{I}_{\Pi_{kj}}$ is the indicator
function of the set $\Pi_{kj}$ defined in (\ref{Pi-set}). It follows from (\ref{E-psv}) that
$$
\bar{E}_{\hat{N}}^{\rm psv}(\Phi(\mu_{\Delta}))\leq\int_{\mathbb{C}}|\zeta|^2\pi^{\Delta}_{\mu}(d^2\zeta)=\frac{1}{\pi N}\int_{\mathbb{C}}f(\zeta)\exp(-|\zeta|^2/N)d^2\zeta.
$$
Comparing this with (\ref{E-psv+}) we have
$$
\bar{E}_{\hat{N}}^{\rm psv}(\Phi(\mu_{\Delta}))\leq N+\frac{1}{\pi N}\int_{\mathbb{C}}(f(\zeta)-|\zeta|^2)\exp(-|\zeta|^2/N)d^2\zeta.
$$
To estimate the last integral note that $f(\zeta)-|\zeta|^2\leq \frac{\Delta}{\sqrt{2}}(2|\zeta|+\frac{\Delta}{\sqrt{2}})=\sqrt{2}\Delta|\zeta|+\frac{\Delta^2}{2}$ for any
$\zeta\in\mathbb{C}$ by the construction of $f$. Hence
$$
\bar{E}_{\hat{N}}^{\rm psv}(\Phi(\mu_{\Delta}))\leq N+\frac{\Delta^2}{2}+\frac{\sqrt{2}\Delta}{\pi N}\int_{\mathbb{C}}|\zeta|\exp(-|\zeta|^2/N)d^2\zeta=N+\frac{\Delta^2}{2}+\Delta\sqrt{\frac{\pi N}{2}}.
$$
Thus, by using (\ref{K-R-D}) and the semicontinuity bound (\ref{main-2-c-s+2}) with $\mu$ and $\nu$ replaced by $\mu_{\Delta}$ and $\mu$, respectively, and $\Psi=\Phi$
we obtain the inequalily
\begin{equation*}
\overline{S}(\Phi(\mu_{\Delta}))-\overline{S}(\Phi(\mu))\leq \frac{\Delta}{\sqrt{2}}\;g\!\left(\frac{\sqrt{2}N}{\Delta}+\frac{\Delta}{\sqrt{2}}+\sqrt{\pi N}\right)+g\!\left(\frac{\Delta}{\sqrt{2}}\right)
\end{equation*}
which complements (\ref{D-bound}) and gives an upper bound on the possible gain  of $\overline{S}(\Phi(\mu))$ under the $\Delta$-discretization of $\mu$. Note that this
bounds also tends to zero as $\Delta\to0$.
\end{example}

\section*{Appendix}

\subsection*{A-1 Kantorovich–Rubinshtein distance and the modified Kantorovich–Rubinshtein distance between probability measures}

Assume that $X$ is a metric space with a metric $d$. The Kantorovich–Rubinshtein distance
between Borel probability measures $\mu$  and $\nu$ on $X$ is defined as (cf.\cite{Bog,B&K})
\begin{equation}\label{K-R}
  D_{KR}(\mu,\nu)=\sup\left\{\left.\int_X f(x)(\mu-\nu)(dx)\,\right|\,f\in \mathrm{Lip}_1(X),\; \sup_{x\in X}|f(x)|\leq 1\right\},
\end{equation}
where $\mathrm{Lip}_1(X)$ is the set of all real valued functions on $X$ with the Lipschitz constant not exceeding $1$, i.e.
$$
\mathrm{Lip}_1(X)=\left\{f:X\to\mathbb{R}\,|\, |f(x)-f(y)|\leq d(x,y)\;\,\forall x,y\in X\right\}.
$$
If $X$ is a complete separable metric space then the metric $D_{KR}$ generates the weak
convergence topology on the set $\P(X)$ of all Borel probability measures on $X$, i.e. a sequence
of $\{\mu_n\}\subset\P(X)$ converges  to a measure $\mu_0\in\P(X)$ w.r.t. the metric $D_{KR}$ if and only if
$$
\lim_{n\rightarrow+\infty}\int_X f(x)\mu_n(dx)=\int_X f(x)\mu_0(dx)
$$
for any continuous bounded function $f$ on $\,X$ \cite{Bog,B&K}. The set $\P(X)$ equipped
with the metric $D_{KR}$ is a complete separable metric space in this case \cite{Bil+,Bog}.

Let $\P_*(X)$  be the subset of $\P(X)$ consisting of all  measures for which the
function $x\mapsto d(x,x_0)$ is integrable for any $x_0\in X$ (by the triangle inequality it suffices
to check this condition for some $x_0\in X$). Note that $\P_*(X)=\P(X)$ provided that $X$ has a finite diameter.

The  modified Kantorovich–Rubinshtein distance between measures $\mu$  and $\nu$ in $\P_*(X)$
is defined as (cf.\cite{Bog,B&K})
\begin{equation}\label{K-R-m}
  D^*_{KR}(\mu,\nu)=\sup\left\{\left.\int_X f(x)(\mu-\nu)(dx)\,\right|\,f\in \mathrm{Lip}_1(X)\right\}.
\end{equation}
This definition differs from (\ref{K-R})  by omitting the condition  $\sup_{x\in X}|f(x)|\leq 1$. Hence
$$
D_{KR}(\mu,\nu)\leq D^*_{KR}(\mu,\nu)\quad \forall \mu,\nu\in \P_*(X)
$$
in general. But if the diameter of  $X$ is not greater than $1$ then $\sup_{x\in X}|f(x)|\leq 1$ for any
$f\in \mathrm{Lip}_1(X)$, so in this case $D_{KR}(\mu,\nu)=D^*_{KR}(\mu,\nu)$ for all $\mu,\nu\in \P_*(X)=\P(X)$.

\smallskip

If $\mu$  and $\nu$ are Radon measures\footnote{A Borel probability measure
$\mu$ is called  Radon measure if for every Borel set $B$ and every $\epsilon>0$, there exists
a compact set $K\subset B$ such that $\mu(B\setminus K)<\epsilon$. If $X$ is a complete separable metric space then every Borel probability measure is a Radon measure \cite{Bog}.} in $\P_*(X)$ then the  modified Kantorovich–Rubinshtein distance between $\mu$  and $\nu$ can be also defined by the dual expression
\begin{equation}\label{K-R-m+}
  D^*_{KR}(\mu,\nu)=\inf\left\{\left.\int_{X\times X} d(x-y)\Lambda(dxdy)\,\right|\,\Lambda\in\Pi(\mu,\nu)\right\},
\end{equation}
where $\Pi(\mu,\nu)$ is the set of all Radon probability measures on $X\times X$ with the marginals $\mu$ and $\nu$ \cite{Bog,B&K}.

The quantity in the r.h.s. of (\ref{K-R-m+}) was introduced by Kantorovich and is called the Kantorovich distance \cite{Bog}.

Assume that $X$ is the complete separable metric space $\S(\H)$ with the metric $d(\rho,\sigma)=\frac{1}{2}\|\rho-\sigma\|_1$
and the diameter equal to $1$. Then the above observations show that the modified Kantorovich–Rubinshtein distance is well defined in (\ref{K-R-m}) for all Borel probability measures on $X=\S(\H)$ and can be also defined by the dual expression
(\ref{K-R-m+}) for all such measures. Moreover, it coincides with the Kantorovich–Rubinshtein distance  defined in (\ref{K-R}). By this reason, in the article we use the term  "Kantorovich distance"
speaking about the metrics defined in  (\ref{K-D-d}) and (\ref{K-D-c}).

\subsection*{A-2 The proof of the coincidence of $D_*$ and $D_{\mathrm{ehs}}$}

Below we give a rigorous proof of  the coincidence of the metrics  $D_*$ and $D_{\mathrm{ehs}}$
between discrete ensembles of quantum states, defined, respectively, in (\ref{f-metric}) and  (\ref{ehs-metric}).
This coincidence (mentioned  in \cite{CHI} as an "obvious fact") is essentially used in the article. \smallskip

Let $\mu=\{p_i,\rho_i\}_{i\in I}$ and $\nu=\{q_j,\sigma_j\}_{j\in J}$. If $\{P_{ij}\}_{(i,j)\in I\times J}$ and $\{Q_{ij}\}_{(i,j)\in I\times J}$ are arbitrary 2-variate probability distributions
such that $\sum_{j\in J}P_{ij}=p_i$ for all $i\in I$ and $\sum_{i\in I}Q_{ij}=q_j$ for all $j\in J$ then it is easy to
see that the ordered ensembles $(P_{ij},\rho_i)_{(i,j)\in I\times J}$ and  $(Q_{ij},\sigma_j)_{(i,j)\in I\times J}$
belong, respectively, to the classes $\mathcal{E}(\mu)$ and $\mathcal{E}(\nu)$ introduced after (\ref{f-metric}).
By definitions (\ref{f-metric}) and (\ref{ehs-metric}) this implies $D_*(\mu,\nu)\leq D_{\mathrm{ehs}}(\mu,\nu)$.

Thus, we have to show that $D_*(\mu,\nu)\geq D_{\mathrm{ehs}}(\mu,\nu)$. This can be done assuming that the representations $\mu=\{p_i,\rho_i\}_{i\in I}$ and $\nu=\{q_j,\sigma_j\}_{j\in J}$ of arbitrary discrete ensembles $\mu$ and $\nu$ are \emph{minimal} in the sense that $\rho_i\neq\rho_{i'}$ for all $i\neq i'$ and $\sigma_j\neq\sigma_{j'}$ for all $j\neq j'$. This minimality assumption is acceptable because
the set of all representations of a discrete ensemble (probability measure on $\S(\H)$) contains  a minimal representation and the
quantities $D_*(\mu,\nu)$ and $D_{\mathrm{ehs}}(\mu,\nu)$ depend only on the discrete measures $\mu$ and $\nu$ not depending on their concrete
representations.

Let $\mu=\{p_i,\rho_i\}_{i\in I}$ and $\nu=\{q_j,\sigma_j\}_{j\in J}$ be  minimal representations of $\mu$ and $\nu$. For any $\epsilon>0$
there exists ordered ensembles $(p_k,\rho_k)_{k\in K}\in \mathcal{E}(\mu)$ and $(q_k,\sigma_k)_{k\in K}\in \mathcal{E}(\nu)$ such that
\begin{equation}\label{A-2-1}
D_0((p_k,\rho_k),(q_k,\sigma_k))\leq D_*(\mu,\nu)+\epsilon.
\end{equation}
For each $(i,j)\in I\times J$ let $A_{ij}$ be  the set of all $k$ such that $\rho_k=\rho_i$
and $\sigma_k=\sigma_j$. Let $P_{ij}=\sum_{k\in A_{ij}}p_k$ and
$Q_{ij}=\sum_{k\in A_{ij}}q_k$. By the minimality assumption $\{A_{ij}\}_{(i,j)\in I\times J}$ is a collection of mutually
disjoint sets such that $\bigcup_{(i,j)\in I\times J}A_{ij}=K$. Hence
\begin{equation}\label{A-2-2}
\begin{array}{c}
\displaystyle\sum_{(i,j)\in I\times J}\|\shs P_{ij}\rho_i-Q_{ij}\sigma_j\|_1=\sum_{(i,j)\in I\times J}\left\|\shs \sum_{k\in A_{ij}}(p_k\rho_k-q_k\sigma_k)\right\|_1\\\\\displaystyle\leq\sum_{(i,j)\in I\times J}\sum_{k\in A_{ij}}\|\shs p_k\rho_k-q_k\sigma_k\|_1
\\\\\displaystyle=\sum_{k\in K}\|\shs p_k\rho_k-q_k\sigma_k\|_1=2D_0((p_k,\rho_k),(q_k,\sigma_k)).
\end{array}
 \end{equation}
Since $\sum_{j\in J}P_{ij}=p_i$ for all $i\in I$ and $\sum_{i\in I}Q_{ij}=q_j$
for all $j\in J$ by the construction, it follows from (\ref{A-2-1}) and (\ref{A-2-2}) along with (\ref{ehs-metric}) that $D_{\mathrm{ehs}}(\mu,\nu)\leq D_*(\mu,\nu)$.\smallskip

\bigskip

I am grateful to A.S.Holevo, E.R.Loubenets and G.G.Amosov for valuable discussion.

\medskip

\end{document}